\newcommand{\imag}{{\Im\mbox{\rm m}}}
\newcommand{\dom}{{\mbox{\rm dom}}}
\newcommand{\ran}{\mathrm{ran}}
\newcommand{\supp}{{\mbox{\rm supp}}}
\newcommand{\clospa}{{\mathrm{clospan}}}
\newcommand{\clo}{{\mathrm{clo}}}
\newcommand{\sign}{{\mathrm{sign}}}
\newcommand{\tr}{{\mathrm{tr}}}
\newcommand{\slim}{\,\mbox{\em s-}\hspace{-2pt} \lim}
\newcommand{\wlim}{\,\mbox{\em w-}\hspace{-2pt} \lim}
\newcommand{\olim}{\,\mbox{\em o-}\hspace{-2pt} \lim}
\newtheorem{theorem}{Theorem}[section]
\newtheorem{lemma}[theorem]{Lemma}
\newtheorem{corollary}[theorem]{Corollary}
\newtheorem{proposition}[theorem]{Proposition}
\newtheorem{assumption}[theorem]{Assumption}
\newtheorem{definition}[theorem]{Definition}
\newtheorem{example}[theorem]{Example}
\newcommand{\ba}{\begin{array}}
\newcommand{\ea}{\end{array}}
\newcommand{\bea}{\begin{eqnarray}}
\newcommand{\eea}{\end{eqnarray}}
\newcommand{\bead}{\begin{eqnarray*}}
\newcommand{\eead}{\end{eqnarray*}}
\newcommand{\be}{\begin{equation}}
\newcommand{\ee}{\end{equation}}
\newcommand{\bed}{\begin{displaymath}}
\newcommand{\eed}{\end{displaymath}}
\newcommand{\bal}{\begin{align}}
\newcommand{\eal}{\end{align}}
\newcommand{\bl}{\begin{lemma}}
\newcommand{\el}{\end{lemma}}
\newcommand{\bt}{\begin{theorem}}
\newcommand{\et}{\end{theorem}}
\newcommand{\bd}{\begin{definition}}
\newcommand{\ed}{\end{definition}}
\newcommand{\bc}{\begin{corollary}}
\newcommand{\ec}{\end{corollary}}
\newcommand{\bass}{\begin{assumption}}
\newcommand{\eass}{\end{assumption}}
\newcommand{\bexam}{\begin{example}}
\newcommand{\eexam}{\end{example}}
\newcommand{\Label}{\label}
\newcommand{\la}{\Label}
\newcommand{\de}{\textnormal{d}}
\def\wt#1{{{\widetilde #1} }}
\def\wh#1{{{\,\widehat #1\,} }}
   \def\sB{{\mathfrak B}}   
\def\sD{{\mathfrak D}}      
   \def\sH{{\mathfrak H}}   
   \def\sK{{\mathfrak K}}   \def\sL{{\mathfrak L}}
\def\sS{{\mathfrak S}}
   \def\se{{\mathfrak e}}   
   \def\sh{{\mathfrak h}}   
   \def\sk{{\mathfrak k}}
      \def\dC{{\mathbb C}}
\def\dD{{\mathbb D}}   \def\dE{{\mathbb E}}
   \def\dN{{\mathbb N}}   
      \def\dR{{\mathbb R}}
   \def\dT{{\mathbb T}}
   \def\cB{{\mathcal B}}   
      \def\cF{{\mathcal F}}
   \def\cH{{\mathcal H}}   
\def\cJ{{\mathcal J}}      
\def\cM{{\mathcal M}}      
\def\cS{{\mathcal S}}      \def\cU{{\mathcal U}}
   \def\cZ{{\mathcal Z}}
\def\gD{{\Delta}}			\def\ga{{\alpha}}
\def\gG{{\Gamma}}			\def\gb{{\beta}}
\def\gl{{\lambda}}
\def\gL{{\Lambda}}			\def\gga{{\gamma}}
\def\gO{{\Omega}}			\def\gd{{\delta}}
\def\gs{{\sigma}}			
\def\gY{{\Upsilon}}                     \def\gk{{\kappa}}
\def\gth{\theta}                        \def\gT{{\Theta}}
\def\gS{{\Sigma}}
\def\gr{{\varrho}}
\def\ess-sup{{\text{ess-sup\,}}}
\def\supp{{\text{\rm supp\,}}}
\numberwithin{equation}{section}
\title{Cayley transform applied to non-interacting quantum transport}
\author{Horia D. Cornean\thanks{E-mail: cornean@math.aau.dk}\\
Department of Mathematical Sciences, Aalborg University\\
Fredrik Bajers Vej 7, 9220 Aalborg, Denmark
\and
Hagen Neidhardt\thanks{E-mail: hagen.neidhardt@wias-berlin.de},
Lukas Wilhelm\thanks{E-mail: lukas.wilhelm@wias-berlin.de}\\
WIAS Berlin, Mohrenstr. 39, 10117 Berlin,
Germany
\and 
Valentin A. Zagrebnov\thanks{E-mail: Valentin.Zagrebnov@latp.univ-mrs.fr}\\
D\'{e}partement de Math\'{e}matiques\\
Universit\'{e} d'Aix-Marseille\\
and\\
Laboratoire d'Analyse, Topologie, Probaliti\'{e}s (UMR 7353)\\
Centre de Math\'{e}matique et Informatique-AMU\\
Technop\^{o}le Ch\^{a}teau-Gombert\\
39, rue F. Joliot Curie, 13453 Marseille Cedex 13\\
France}
\date{\today}
\begin{document}

\maketitle

\vspace{-5mm}
\begin{abstract}
\noindent
We extend the Landauer-B\"uttiker formalism in order to accommodate both unitary and 
self-adjoint operators which are not bounded from below. We also prove
that the pure point and singular continuous subspaces of the decoupled
Hamiltonian do not contribute to the steady current. One of the physical
applications is a stationary charge current formula for a system with
four pseudo-relativistic semi-infinite leads and with an inner sample
which is described by a Schr\"odinger operator defined on a bounded
interval with dissipative boundary conditions. Another application is
a current formula for electrons  described by a one dimensional Dirac
operator; here the system consists of  two semi-infinite leads coupled through a point interaction at zero.
\end{abstract}

\bigskip
\noindent{\bf Keywords:}Landauer-B\"uttiker formula, dissipative Schr\"odinger
operators, self-adjoint dilations, Dirac operators\\[-2mm]

\bigskip
\noindent
{\bf Mathematics Subject Classification 2000:} 47A40, 47A55, 81Q37, 81V80

\newpage 

\section{Introduction}

Considering a problem in quantum statistical mechanics and solid
state physics Lifshits \cite{Lif1952} found
that there is a unique real-valued function $\xi(\cdot) \in L^1(\dR,d\gl)$ such
that the formula
\be\la{Lif}
\tr(\Phi(H_0 + V) - \Phi(H_0)) = \int_\dR \xi(\gl) \Phi'(\gl)d\gl
\ee
is valid for a suitable class of functions $\Phi(\cdot)$
guaranteeing that $\Phi(H_0 + V) - \Phi(H_0)$ is a trace class
operator. Here $H_0$ is a self-adjoint operator and $V$ is a finite dimensional
self-adjoint operator.  Formula \eqref{Lif} and function $\xi(\cdot)$ 
are known in the literature as trace formula and 
spectral shift function, respectively.

Inspired by the work of Lifshits the trace formula was 
carefully investigated and generalized by Krein, cf. 
\cite{Krein1953}. In a first step Krein has shown that
Lifshits' result remains true if $V$ is a self-adjoint trace class operator. Later on he
generalized the result to pairs of self-adjoint operators $\cS =
\{H,H_0\}$ such that their resolvent difference is a trace class
operator, cf. \cite{Krein1962}. In the following we call those pairs
trace class scattering systems. For trace class scattering systems there exists a real-valued function
$\xi(\cdot) \in L^1(\dR,\tfrac{d\gl}{1 + \gl^2})$ called also the
spectral shift function such that 
\be\la{Krein}
\tr\left(\Phi(H) - \Phi(H_0)\right) = \int_\dR \xi(\gl)\Phi'(\gl)d\gl
\ee
is valid for a suitable class of  functions $\Phi(\cdot)$. 
In particular, the formula
\bed
\tr\left((H-z)^{-1} - (H_0-z)^{-1}\right) = -\int_\dR
\frac{\xi(\gl)}{(\gl-z)^2}d\gl, \quad z \in \dC \setminus \dR,
\eed
holds.  In contrast to the spectral shift function
from above $\xi(\cdot)$ is now not unique and is only determined up to a real constant.
To verify \eqref{Krein} Krein firstly proved a
trace formula \eqref{Lif} for a pair $\cU = \{U,U_0\}$ of unitary operators for which 
$U - U_0$ is a trace class operator, cf. \cite{Krein1962}. Regarding $U$ and $U_0$
as the Cayley transforms of $H$ and $H_0$, respectively, Krein was
able to establish \eqref{Krein}. 
 
If $\cS = \{H,H_0\}$ is a trace class scattering system, then the wave operators
\be\la{wave}
W_\pm(H,H_0) = \slim_{t\to\pm\infty}e^{itH}e^{-itH_0}P^{ac}(H_0)
\ee
exist and are complete where $P^{ac}(H_0)$
is the projection onto the absolutely continuous subspace of $H_0$,
see \cite{Birman1962}. 
 Let $\Pi(H^{ac}_0)$ be a spectral representation of the absolutely
continuous part $H^{ac}_0$ of $H_0$, cf. Appendix \ref{C}. Further, let
$\{S(\gl)\}_{\gl\in\dR}$ be 
the scattering matrix of the trace class scattering system $\cS$ with
respect to $\Pi(H^{ac}_0)$.  It turns out 
that there is a suitable chosen spectral shift function $\xi(\cdot)$ 
such that the so-called Birman-Krein formula
\bed
\det(S(\gl)) = e^{-2\pi i \xi(\gl)}.
\eed
holds for a.e. $\gl \in \dR$.

The quantity $T(\gl) := \tfrac{1}{2\pi i}(I_{\sh(\gl)}-S(\gl))$, $\gl\in \dR$, is usually called 
the transition matrix, see \eqref{2.62a}, where $I_{\sh(\gl)}$ denotes
the fiber identity operator of the spectral representation $\Pi(H^{ac}_0)$. 
In \cite{Radul1990} Radulescu has shown that the transition matrix $\{T(\gl)\}_{\gl \in \dR}$,
the unperturbed operator $H_0$ and 
the perturbation $V$ are related in a certain way. Indeed, if $H_0$ is bounded and 
$V$ is trace class, then the formula
\bed
\tr(H^n_0W_+(H,H_0)V) = \int_\dR \gl^n\,\tr(T(\gl))d\gl, \quad n = 0,1,2,\ldots,
\eed
is valid.

It turns out that the so-called Landauer-B\"uttiker formula
is a further interesting example in this circle of relations linking scattering matrix,
unperturbed operator and perturbation. From the physical point of view
the Landauer-B\"uttiker formula gives the steady
state charge current flowing trough a non-relativistic quantum
device where the carriers are not self-interacting. It goes back to
Landauer and B\"uttiker, cf. 
\cite{Landauer1957} and \cite{Buettiker1985}, and was initially
derived by them using phenomenological arguments. 

The physical setting is as follows: 
there is a small sample (the inner system) and at least two leads 
(for simplicity we only discuss the two lead case). At negative times,  
the leads are not coupled to the inner system. Each subsystem is in a state of thermal equilibrium.  
In particular, one assumes that in the leads the electrons are 
distributed according to the Fermi-Dirac distribution function. More
precisely, if $\mu_j$ are the chemical potentials of the left and
right leads, $j \in \{l,r\}$, then the energy distribution of lead $j$ is $f_j(\gl) = f_{FD}(\gl-\mu_j)$ where:
\be\la{0.1}
f_{FD}(\gl) = \frac{1}{1 + e^{\gb \gl}}, \quad \gl \in \dR, \quad \gb > 0.
\ee
At time zero the leads are suddenly attached to the inner system
and a current can flow from one lead to the other through the inner system. Landauer found by heuristic arguments (later refined by 
B\"uttiker) that the stationary current $J$ of non-relativistic particles flowing through the system should be given by 
\be\la{0.2}
J = \frac{\se}{2\pi} \int_\dR d\gl \,|\gs(\gl)|^2(\lambda) \big(f_{FD}(\lambda - \mu_l) - f_{FD}(\lambda-\mu_r)\big) 
\ee
where $\gs(\gl)$ is the so-called transmission coefficient between the leads, a 
cross-section arising from an appropriate scattering system, and
$\se > 0$ is the magnitude of the elementary charge.
The current is directed from left to right if $J > 0$ and from right
to left if $J < 0$. If $\mu_l > \mu_r$, then a straightforward
computation shows that $J > 0$ which shows that the charge current is directed from
the higher chemical potential to the lower one. 

Several works have already been published in which this approach has 
been made rigorous, cf. \cite{Cornean2005,Pillet2007,Cornean2006,Nenciu2007,
Nenciu2008,Cornean2009,Cornean2010b}. One assumes that at 
negative times the system is described by (a decoupled) Hamiltonian
$H_0$, while for positive times by (a coupled Hamiltonian) $H$. Until
now it was always assumed that both Hamiltonians are bounded from below
and that the difference between their resolvents raised to some
integer power is trace class.

Since our paper only deals with operator theoretical aspects of 
quantum transport of quasi-free particles, some of the terminology
used in quantum statistical mechanics will be strictly adapted to 
our limited needs. For us, a {\it density operator} is just any 
non-negative bounded operator. A density operator $\rho$ is an 
{\it equilibrium state of $H_0$} if it is a positive function of
$H_0$. A density operator $\rho$ is called a {\it steady state of $H_0$} if 
$\rho$ commutes with $H_0$.  Note that with our definition,
equilibrium states are steady states. If $H_0$ is  a decoupled direct sum of 
several operators $\bigoplus h_j$, then a direct sum of individual
equilibrium states $\bigoplus F_j(h_j)$ would provide us with a special class of steady states of $H_0$.  

A {\it charge} is any bounded self-adjoint operator $Q$ commuting with $H_0$. 
Following \cite{Pillet2007}, the {\it steady current} $J^\cS_{\rho,Q}$ 
related to a charge $Q$ and a given initial steady state $\rho$ of  $H_0$ is proved to be given by
\be\la{0.3}
J^\cS_{\rho,Q} := -i\tr(W_-(H,H_0)\rho W_-(H,H_0)^*[H,Q])
\ee
provided the commutator $[H,Q]$ is well defined and $H$ has no singular continuous spectrum. 
Following \cite{Pillet2007} the current is directed from the leads to the sample.
If the commutator is
not well defined, a regularization procedure was proposed in
\cite{Pillet2007}.  It consists in replacing the operators $H$ and $H_0$ by
bounded self-adjoint operators 
\be\la{0.4}
H(\eta) := H(I + \eta H)^{-N}
\quad \mbox{and} \quad
H_0(\eta) := H_0(I + \eta H_0)^{-N}, \quad \eta > 0,
\ee
for some large enough $N$, where for simplicity it is assumed that both operators are
non-negative. Of course, $\cS(\eta) = \{H(\eta),H_0(\eta)\}$ is  also
a trace class scattering system for which the current
$J^{\cS(\eta)}_{\rho,Q}$ is well defined. Finally, one sets
\be\la{0.5}
J^{\cS}_{\rho,Q} := \lim_{\eta\to+0}J^{\cS(\eta)}_{\rho,Q}.
\ee
We note that the absolutely subspace $\sH^{ac}(H_0)$ reduces the
initial steady state and the charge operator. Let
\be\la{0.6}
\rho_{ac} := \rho \upharpoonright\sH^{ac}(H_0) 
\quad \mbox{and} \quad 
Q_{ac} := Q \upharpoonright\sH^{ac}(H_0)
\ee
The restrictions $\rho_{ac}$ and $Q_{ac}$ commute  with the absolutely
continuous component $H^{ac}_0$ of $H_0$. 

Let $\Pi(H^{ac}_0)$ be a spectral representation of the absolutely
continuous part $H^{ac}_0$ of $H_0$, cf. Appendix \ref{C}. Since the components $\rho_{ac}$ and $Q_{ac}$
commute with $H^{ac}_0$, they are unitarily equivalent to
multiplication operators induced by some density and charge 
fiber matrices $\{\rho_{ac}(\gl)\}_{\gl\in\dR}$ and
$\{Q_{ac}(\gl)\}_{\gl\in\dR}$ in $\Pi(H^{ac}_0)$, respectively. In \cite{Pillet2007} it was
proved that the current $J^\cS_{\rho,Q}$ admits the representation
\be\la{0.7}
J^\cS_{\rho,Q} = \frac{1}{2\pi}\int_{\dR} d\gl\;
\tr\left\{\rho_{ac}(\gl)\left(Q_{ac}(\gl) - S(\gl)^*Q_{ac}(\gl)S(\gl)\right)\right\}.
\ee
The formula \eqref{0.7} can be called the abstract Landauer-B\"uttiker
formula. The formula (\ref{0.7}) is not identical with the traditional 
Landauer and B\"uttiker formula \eqref{0.2}. However, it was shown in
\cite{Pillet2007} that formula \eqref{0.2} follows from \eqref{0.7}.

The aim of the present paper is to extend the
representation \eqref{0.7} to situations where the operators $H$ and
$H_0$ might not be bounded from below. Using the intertwining property of the wave operator and the 
trace cyclicity,  one can rewrite the current $J^\cS_{\rho,Q}$ in the following form:
\be\la{0.9}
J^\cS_{\rho,Q} := -i\tr(W_-(H,H_0)(I + H^2_0)\rho
W_-(H,H_0)^*(H-i)^{-1}[H,Q](H+ i)^{-1}).
\ee
It turns out that \eqref{0.9} can be expressed in a different form 
using the Cayley transforms
\bed
U = (i - H)(i + H)^{-1}=e^{2i\arctan(H)} \quad \mbox{and} \quad 
U_0 = (i-H_0)(i+H_0)^{-1} =e^{2i\arctan(H_0)}
\eed
of $H$ and $H_0$, respectively. Under the condition that $V:=U-U_0=2i((i+H)^{-1}- (i+H_0)^{-1})$ is a
trace class operator  we have
\bed
\gO_\pm(U,U_0) := \slim_{n\to\pm\infty}U^nU^{-n}_0P^{ac}(U_0)=W_\pm(2\arctan(H),2\arctan(H_0))=W_{\pm}(H,H_0),
\eed
where in the last equality we used the invariance principle of wave
operators. Moreover, using the identity
\bed
-\frac{i}{2}U^*[U-U_0,Q]=(H-i)^{-1}[H,Q](H+ i)^{-1}
\eed
the current can be rewritten as
\be\label{mitlef1}
J^\cU_{\tilde{\rho},Q} := -\frac{1}{2}\tr(\gO_-(U,U_0)\tilde{\rho}U^*_0\gO_-(U,U_0)^*[V,Q]), \quad V := U - U_0,\quad 
\tilde{\rho}:= (1 + H^2_0)\rho,
\ee
where everything only depends on the unitary scattering 
system $\cU:=\{U,U_0\}$. Following Birman and Krein \cite{Krein1962,Birman1962}
we start with the abstract unitary scattering system 
$\cU:=\{U,U_0\}$ where $V=U-U_0$ is trace class operator, $\tilde{\rho}$ is 
an initial steady state and $Q$ a charge both commuting with
$U_0$.  Their restrictions to the absolutely continuous subspace 
of $U_0$ are denoted by $\tilde{\rho}_{ac}$ and $Q_{ac}$, respectively. 
Using a spectral representation of $U_0$, we denote by $\{\tilde{S}(\zeta)\}_{\zeta\in\dT}$,
$\{\tilde{\rho}_{ac}(\zeta)\}_{\zeta\in\dT}$ and 
$\{\tilde{Q}_{ac}(\zeta)\}_{\zeta\in\dT}$ the scattering, density and charge fiber matrices
of $S = \gO_+(U,U_0)^*\gO_-(U,U_0)$, $\tilde{\rho}_{ac}$ and $Q_{ac}$,
respectively. We also suppose that the singular continuous spectrum
$\gs_{sc}(U)$ of $U$ is empty (note that we allow $\gs_{sc}(U_0) \neq
\emptyset$).  Then it will be proven in Theorem \ref{II.9} and in
Corollary \ref{cor:EquilibriumFluxZero} that the current in \eqref{mitlef1} admits the representation
\be\label{mitlef2}
\begin{split}
J^\cU_{\tilde{\rho},Q} &= \frac{1}{4\pi}\int_\dT 
\tr\left\{\tilde{\rho}_{ac}(\zeta)
\left(\tilde{Q}_{ac}(\zeta) - \tilde{S}(\zeta)^*\tilde{Q}_{ac}(\zeta)\tilde{S}(\zeta)\right)\right\}d\nu(\zeta)\\
&= \frac{1}{4\pi}\int_\dT \tr\left\{
\left (\tilde{\rho}_{ac}(\zeta) - \tilde{S}(\zeta)\tilde{\rho}_{ac}(\zeta)\tilde{S}(\zeta)^*\right)\tilde{Q}_{ac}\right\}d\nu(\zeta),
\end{split}
\ee
 where $\nu$ is the Haar measure with  $\nu(\dT)=2\pi$. 
 
More importantly, from the second formula above we see that if 
$\tilde{\rho}$ is an equilibrium state, i.e. some non-negative
function $\tilde{f}(U_0)$, then it is a scalar multiplication with
$\tilde{f}(\zeta)$ on each fiber $\tilde\sh(\gl)$ and thus commutes with
$\tilde{S}(\zeta)$ almost everywhere. This shows that the current is
zero at equilibrium. Moreover, we can use this to renormalize 
the current by subtracting zero in the following way: 
\be\label{mitlef20}
J^\cU_{\tilde{\rho},Q} = \frac{1}{4\pi}\int_\dT 
\tr\left\{\left(\tilde{\rho}_{ac}(\zeta)-\tilde{f}(\zeta)I_{\tilde\sh(\zeta)}\right)\; 
\left(\tilde{Q}_{ac}(\zeta) - \tilde{S}(\zeta)^*\tilde{Q}_{ac}(\zeta)\tilde{S}(\zeta)\right)\right\}d\nu(\zeta).
\ee
Going back to the self-adjoint case via the Cayley transform, we have to change the torus with
the real line by the transformation $\zeta=e^{2i \arctan(\lambda)}$. 
Hence, replacing $\tilde{\rho}(e^{2i \arctan(\lambda)})$ by
$(1+\lambda^2)\rho(\lambda)$ and introducing 
$Q_{ac}(\gl):=\tilde{Q}_{ac}(e^{2i \arctan(\lambda)})$ and 
$S(\gl):=\tilde{S}(e^{2i \arctan(\lambda)})$ we obtain
\be\la{mitlef21}
J^\cS_{\rho,Q} = \frac{1}{2\pi}\int_{\dR} d\gl\;
\tr\left\{\left(\rho_{ac}(\gl)-f(\gl)I_{\sh(\gl)}\right)\;\left(Q_{ac}(\gl) - S(\gl)^*Q_{ac}(\gl)S(\gl)\right)\right\}.
\ee
This formula is very useful in the relativistic situation when 
$\rho_{ac}(\gl)$ can loose its decay in $\lambda$ at $-\infty$, 
as it happens with the Fermi-Dirac distribution. In that case we see 
that $\rho_{ac}(\gl)-f_{FD}(\gl)I_{\sh(\gl)}$ still decays 
exponentially at $\pm\infty$ and the current will be finite.
\\[-1mm] 

\noindent
Let us make the following remarks:

\begin{itemize}
\item Our main technical result is formula \eqref{mitlef2}, 
proved in Theorem \ref{II.9}. It can be seen as an abstract Landauer-B\"uttiker formula for unitary scattering systems. 

\item Formula \eqref{0.7} is proved in Theorem \ref{thm:LandBuett}, which is an extension of the result in
\cite{Nenciu2007}, where $V: = H - H_0 \in \sL_1(\sH)$ was assumed. 

\item Another result related to Theorem \ref{thm:LandBuett} was proven in 
\cite{Pillet2007} where the current was  defined through a regularization procedure. There the operators  
$H$ and $H_0$ were replaced by 
$H(1+\eta H)^{-N}$ and $H_0(1+\eta H_0)^{-N}$, respectively, 
and the limit $\eta \rightarrow +0$ was taken outside the trace. 
Using our approach via the Cayley transforms one gets a definition
of the current (see \eqref{mitlef1} or \eqref{0.9}) which avoids any regularization.  Since the Cayley transform does not require $H_0$ 
and $H$ to be bounded from below, it allows us to derive
Landauer-B\"uttiker type formulas for self-adjoint dilations of maximal dissipative Schr\"odinger operators
and Dirac operators with point interactions at zero, see Section \ref{IV}. 

\item 
Our result is stronger that that one of \cite{Pillet2007}. 
At first glance it seems to be that the condition
$(H + \theta)^{-N} - (H_0 + \theta)^{-N} \in \sL_1(\sH)$ assumed in \cite{Pillet2007} for some $N \in \dN$ and $\gth > 0$ is weaker than 
our condition $(i+H)^{-1}- (i+H_0)^{-1} \in \sL_1(\sH)$. Nevertheless, the result of \cite{Pillet2007} follows
from Theorem \ref{thm:LandBuett}. Indeed, let us assume for simplicity
that $H \ge I$ and $H_0 \ge I$ as well as $\gth = 0$. A straightforward computation shows that the representation
\be\la{1.16}
J^\cS_{\rho,Q} = -\frac{i}{N}\tr\left(W_-(H,H_0)\frac{I+
    H^{2N}_0}{H^{N-1}_0}\rho W_-(H,H_0)^*(H^N - i)^{-1}[H^N,Q](H^N+
  i)^{-1}\right)
\ee
is valid provide $(I + H^{N+1}_0)\rho$ is a bounded operator. 
Therefore, considering the trace class scattering system $\wh{\cS} = \left\{H^N,H^N_0\right\}$, 
we find 
\bed
J^\cS_{\rho,Q} = \tfrac{1}{N}J^{\wh \cS}_{\wh{\rho},Q}, \qquad \wh{\rho} := {H^{-(N-1)}_0}\rho,
\eed
where the invariance principle for wave
operators was taken into account. Finally, applying Theorem
\ref{thm:LandBuett} to $J^{\wh \cS}_{\wh{\rho},Q}$ we get a Landauer-B\"uttiker formula for the
scattering system $\wh{\cS} = \left\{H^N,H^N_0\right\}$ with respect
to a spectral representation of $(H^N_0)^{ac}$. However, from the
spectral representation of $(H^N_0)^{ac}$ one
easily obtains a spectral representation of $H^{ac}_0$ which
immediately implies the result of \cite{Pillet2007}.

\item 
\item 
We can extend Theorem 3.9 to some situations where $H$ and $H_0$ 
are not bounded from below and $(H+i)^{-1}-(H_0+i)^{-1}$ is not trace
class. Namely, if $0$ belongs to the resolvent set of both $H$ and
$H_0$, and if there exists an odd integer $N$ such that
$H^{-N}-H^{-N}_0$ is trace class, then the invariance principle can
still be applied and formula (1.16) (see also (1.12)) still makes sense. The general case remains open. 
\end{itemize}

\noindent
The paper is organized as follows. In Section \ref{II} we review some well known results related to 
non equilibrium steady states and currents, and extend them to the case of non-semibounded self-adjoint
operators $H_0$ and $H$. The main goal is to rigorously justify formula \eqref{mitlef1}. 

Section \ref{III} is devoted to the proof of the abstract Landauer-B\"uttiker formula \eqref{mitlef2}, 
at first for unitary operators, cf. Section \ref{sec:require},  and
then for self-adjoint operators, cf. Section \ref{III.2}. 

In Section \ref{IV} we give several examples. Finally, in order to make the paper
self-contained we have added Appendices \ref{A} and \ref{B}, \ref{C} on spectral representations of
unitary operators, and Appendix \ref{D}  on the
scattering matrix of unitary operators.

{\bf Notation:} By $\sH^{ac}(U)$ we denote the absolutely
continuous subspace of a unitary operator $U$ defined on
$\sH$. The projection from $\sH$ onto $\sh^{ac}(U)$ is denoted by
$P^{ac}(U)$. The corresponding absolutely continuous restriction of $U$ is denoted by
$U^{ac} := U\upharpoonright\sH^{ac}(U)$. The singular subspace of a
unitary operator $U$ is defined by $\sH^s(U) :=
\sH \ominus \sH^{ac}(U)$, the corresponding singular part by $U^s :=
U\upharpoonright\sH^s(U)$. 
A similar notation is used for self-adjoint operators. 

Furthermore the real axis and the unit circle are denoted by $\dR$, 
and $\dT$ respectively. The open unit disc is denoted by $\dD :=
\{\zeta \in \dC: |\zeta| < 1\}$.

\section{Steady states  and currents} \la{II}

Let $H_0$ be a self-adjoint operator and let $\rho$ be a steady state for $H_0$.
Furthermore, let us assume that at $t<0$ the system is described by the Hamiltonian $H_0$ and the steady state $\rho$. 
At $t= 0$ we switch on a coupling such that the system is now described by the Hamiltonian $H$.
The state $\rho(t)$ evolves according to the quantum Liouville equation
\bed
i\frac{d \rho}{d t} = [H,\rho(t)], \quad t > 0,\quad \rho(0)=\rho,
\eed
which has the weak solution
\bed
\rho(t) = e^{-itH}\rho e^{itH}, \quad t \ge 0.
\eed
The operator $\rho(t)$ is a density operator, but not a steady state for $H$. However, one can produce a steady state by taking an ergodic limit as in \cite{Pillet2007}. It turns out that Theorem 3.2 of \cite{Pillet2007} remains true even
if $H$ and $H_0$ are not semibounded; for completeness we formulate and prove below the result. 
\begin{proposition}
Let $H_0$ be a self-adjoint operator and let $\rho$ be a steady 
state of $H_0$. If $H$ is another self-adjoint operator 
such that $(H+i)^{-1}-(H_0+i)^{-1}$ is a trace class
operator and $\gs_{sc}(H) = \emptyset$, then the limit
\be\la{2.2a}
\rho_+ := \slim_{T\to\infty}\frac{1}{T}\int^T_0 \rho(t) dt
\ee
exists and is given by 
\be\la{2.1a}
\rho_+ = W_-(H,H_0)\rho W_-(H,H_0)^* + \sum_{\gl_k \in
  \gs_p(H)}E_H(\{\gl_k\})\rho E_H(\{\gl_k\})
\ee
where $E_H(\cdot)$ is the spectral measure of $H$ and $\gs_p(H)$
denotes the point spectrum of $H$, cf \cite[Theorem
3.2]{Pillet2007}. Moreover, $\rho_+$ is a steady state of $H$.
\end{proposition}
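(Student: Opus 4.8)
The plan is to follow the strategy of \cite[Theorem 3.2]{Pillet2007} and show that the only place semiboundedness was used can be circumvented by the trace-class resolvent-difference hypothesis together with the absence of singular continuous spectrum of $H$. First I would record the basic structural facts: since $(H+i)^{-1}-(H_0+i)^{-1}\in\sL_1(\sH)$, the pair $\{H,H_0\}$ is a trace class scattering system (in the sense used above), so by \cite{Birman1962} the wave operators $W_\pm(H,H_0)$ in \eqref{wave} exist and are complete, and $\ran W_-(H,H_0)=\sH^{ac}(H)$. Combined with $\gs_{sc}(H)=\emptyset$, this gives the orthogonal decomposition $\sH=\sH^{ac}(H)\oplus\sH^{pp}(H)$ with $\sH^{pp}(H)=\overline{\spa}\{E_H(\{\gl_k\})\sH:\gl_k\in\gs_p(H)\}$, and correspondingly $I=W_-W_-^*+\sum_k E_H(\{\gl_k\})$ with strong convergence of the sum (the projections are mutually orthogonal).

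Next I would compute the ergodic average. Write $\rho(t)=e^{-itH}\rho\,e^{itH}$ and insert $I=W_-W_-^*+\sum_k E_H(\{\gl_k\})$ on both sides of $\rho$, expanding $\rho(t)$ into a sum of (generally infinitely many) cross terms. For the ``diagonal'' absolutely continuous term $e^{-itH}W_-W_-^*\rho W_-W_-^*e^{itH}$ one uses the intertwining property $e^{-itH}W_-(H,H_0)=W_-(H,H_0)e^{-itH_0}$ together with the fact that $\rho$ commutes with $H_0$ (hence with $e^{-itH_0}$), so this term is \emph{exactly} $W_-\rho W_-^*$ for every $t$ — no limit needed. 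For each diagonal point-spectrum term $E_H(\{\gl_k\})\rho E_H(\{\gl_k\})$, the factors $e^{\mp itH}$ act as the scalar phase $e^{\mp it\gl_k}$ and cancel, so this term is also constant in $t$. All remaining cross terms are off-diagonal, i.e.\ of the form $e^{-itH}P\rho P' e^{itH}$ with $P,P'$ distinct spectral components; a standard Riemann--Lebesgue / RAGE-type argument shows their Ces\`aro averages vanish strongly (for the $\{ac\}\times\{pp\}$ and $\{pp\}\times\{ac\}$ crosses one uses absolute continuity; for $\{pp_j\}\times\{pp_k\}$ with $j\ne k$ one uses $\frac1T\int_0^T e^{it(\gl_j-\gl_k)}dt\to0$). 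I would give this bound first on the dense set of vectors lying in finitely many spectral subspaces and then extend by an $\varepsilon/3$ argument using the uniform bound $\|\rho(t)\|\le\|\rho\|$.

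The main obstacle — and the only genuinely new point compared to \cite{Pillet2007} — is controlling the infinite sum over $\gl_k\in\gs_p(H)$ without semiboundedness: one must justify that $\sum_k E_H(\{\gl_k\})\rho E_H(\{\gl_k\})$ converges strongly and that the interchange of this sum with the ergodic limit $\tfrac1T\int_0^T dt$ is legitimate. Here I would argue as follows: the partial sums $\sum_{k\le M}E_H(\{\gl_k\})$ increase strongly to $P^{pp}(H)$, so for fixed $\psi$ the tail $\sum_{k>M}\|E_H(\{\gl_k\})\psi\|^2$ is small; since $\|E_H(\{\gl_k\})\rho E_H(\{\gl_k\})\psi\|\le\|\rho\|\,\|E_H(\{\gl_k\})\psi\|$ (using $E_H(\{\gl_k\})\rho(t)\dots$ bounds and mutual orthogonality), the tail of the operator series is uniformly small in $t$, which lets the $\varepsilon/3$ argument close. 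Finally, to see $\rho_+$ is a steady state of $H$: each summand manifestly commutes with $H$ (the point-spectrum terms are supported on eigenspaces; for $W_-\rho W_-^*$ use intertwining once more, $e^{isH}W_-\rho W_-^*e^{-isH}=W_-e^{isH_0}\rho e^{-isH_0}W_-^*=W_-\rho W_-^*$), so $\rho_+$ commutes with every $e^{isH}$ and hence with $H$; non-negativity and boundedness of $\rho_+$ are inherited from $\rho$ since $0\le W_-\rho W_-^*\le\|\rho\|$ and likewise for each projection-compression, and the strong limit preserves these inequalities.
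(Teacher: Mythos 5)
There is a concrete error in your treatment of the absolutely continuous diagonal term. You claim that $e^{-itH}W_-W_-^*\rho\,W_-W_-^*e^{itH}$ equals $W_-\rho W_-^*$ \emph{for every} $t$, ``no limit needed''. Pulling the propagators through the wave operators gives $W_-e^{-itH_0}\bigl(W_-^*\rho W_-\bigr)e^{itH_0}W_-^*$, and the operator in the middle is $W_-^*\rho W_-$, not $\rho$; it does not commute with $H_0$ in general (indeed $e^{-itH_0}W_-^*\rho W_-e^{itH_0}=W_-^*\rho(t)W_-$, which is $t$-dependent), so the commutation of $\rho$ with $e^{itH_0}$ cannot be used where you use it. Moreover, even at $t=0$ your term equals $P^{ac}(H)\rho P^{ac}(H)=W_-W_-^*\rho W_-W_-^*$, which is in general a different operator from $W_-\rho W_-^*$ (equality would force $W_-^*\rho W_-=\rho P^{ac}(H_0)$, i.e.\ that $W_-$ intertwines $\rho$ with itself). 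So the step as written fails, and with it the claim that this contribution is constant in $t$.

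The correct statement is only a limit statement, and this is exactly where the trace-class hypothesis enters: writing $\rho(t)P^{ac}(H)=e^{-itH}e^{itH_0}\,\rho\,e^{-itH_0}e^{itH}P^{ac}(H)$ (here $[\rho,e^{itH_0}]=0$ is used legitimately, with $\rho$ adjacent to the propagators), one uses that $(H+i)^{-1}-(H_0+i)^{-1}\in\sL_1(\sH)$ implies existence and completeness of $W_\pm(H,H_0)$, so that $e^{-itH_0}e^{itH}P^{ac}(H)\to W_-^*$ and $e^{-itH}e^{itH_0}\upharpoonright\sH^{ac}(H_0)\to W_-$ strongly as $t\to\infty$; since $\rho$ leaves $\sH^{ac}(H_0)$ invariant, this yields $\rho(t)P^{ac}(H)\to W_-\rho W_-^*$ (hence also in Ces\`aro mean), which is how the paper argues. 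Your handling of the point-spectrum diagonal terms, of the cross terms via the mean-ergodic/Riemann--Lebesgue argument, of the interchange of the eigenvalue sum with the time average, and of the steady-state property of $\rho_+$ (where the intertwining computation $e^{isH}W_-\rho W_-^*e^{-isH}=W_-\rho W_-^*$ is the correct one) is essentially sound and close to the paper's route; only the ac part needs to be repaired as above.
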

\begin{proof}
We use the representation
\bed
\rho(t) = e^{-itH}e^{itH_0}\rho e^{-itH_0}e^{itH}P^{ac}(H) +
e^{-itH}\rho e^{itH}P^{p}(H),
\quad t \ge 0,
\eed
where $P^p(H)$ denotes the projection onto the subspace spanned by the
eigenvectors of $H$. Notice that $P^p(H) = P^s(H)$ where $P^s(H)$ is
the projection onto the singular subspace of $H$. 
Since the resolvent difference is a trace class operator one gets
\bed
\slim_{T\to\infty}\frac{1}{T}\int^T_0 e^{-itH}e^{itH_0}\rho e^{-itH_0}e^{itH}P^{ac}(H)dt
= W_-(H,H_0)\rho W_-(H,H_0)^*.
\eed
Let $\gl_k \in \gs_p(H)$. We find 
\bed
e^{-itH}\rho e^{itH}E_H(\{\gl_k\}) = e^{-it(H-\gl_k)}\rho E_H(\{\gl_k\}), \quad t
\ge 0.
\eed
If $f = (H-\gl_k)g$, $g \in \dom(H)$, then
\bed
\frac{1}{T}\int^T_0 e^{-it(H-\gl_k)}f dt =
\frac{e^{-iT(H-\gl_k)} - I}{-iT}g
\eed
which yields
\bed
\lim_{T\to\infty}\frac{1}{T}\int^T_0 e^{-it(H-\gl_k)}f dt = 0
\eed
Since $\ran(H-\gl_k)$ is dense in $E_H(\dR \setminus\{\gl_k\})\sH$ we
verify that
\bed
\lim_{T\to\infty}\frac{1}{T}\int^T_0 e^{-it(H-\gl_k)}E_H(\dR\setminus\{\gl_k\})dt = 0
\eed
Finally, using the decomposition
\bed
\begin{split}
e^{-itH}\rho &e^{itH}E_H(\{\gl_k\}) =
e^{-it(H-\gl_k)}E_H(\dR\setminus\{\gl_k\})\rho E_H(\{\gl_k\})+\\
& 
E_H(\{\gl_k\})\rho E_H(\{\gl_k\}), \quad t \ge 0,
\end{split}
\eed
which proves
\bed
\slim_{T\to\infty}\frac{1}{T}\int^T_0 e^{-itH}\rho
e^{itH}E_H(\{\gl_k\})dt = E_H(\{\gl_k\})\rho E_H(\{\gl_k\}).
\eed
Using that we immediately prove \eqref{2.1a}.
\end{proof}

\vspace{0.5cm}

Formally, the current $J^\cS_{\rho,Q}$ is defined by 
\bed
J^\cS_{\rho,Q} = -\dE_{\rho_+}(i[H,Q]) = -i\tr(\rho_+[H,Q]),
\eed
where $\dE_{\rho_+}(\cdot)$ is the expectation value of an 
observable with respect to $\rho_+$. 
In general, the definition might be not correct because either the commutator
$[H,Q]$ is not well-defined or the product $\rho_+i[H,Q]$ is not a trace class
operator. To avoid such difficulties we set
\be\la{2.3x}
J^\cS_{\rho,Q}(\gd) := -\dE_{\rho_+}(iE_H(\gd)[H,Q]E_H(\gd)) = -i\tr(\rho_+E_H(\gd)[H,Q]E_H(\gd))
\ee
where $\gd$ is any bounded Borel set of $\dR$. 
Furthermore, $E_H(\gd)[H,Q]E_H(\gd)$ is a well defined trace class operator for any bounded Borel set $\gd$.
Indeed, using the representation
\be\la{2.4x}
E_H(\gd)[H,Q]E_H(\gd) = (H-i)E_H(\gd)K E_H(\gd)(H+i)
\ee
where
\begin{align}\la{2.5x}
K := & (H-i)^{-1}[H,Q](H+i)^{-1} =(H+i)(H-i)^{-1}[(H+i)^{-1},Q]\\
= & (I+2i(H-i)^{-1})[(H+i)^{-1}-(H_0+i)^{-1},Q]\nonumber 
\end{align}
is trace class. We get that $E_H(\gd)[H,Q]E_H(\gd)$ is a
trace class operator for every bounded Borel set $\gd$. We set
\bed\label{mitlef3}
J^\cS_{\rho,Q} := \lim_{\gd\to\dR}J^\cS_{\rho,Q}(\gd)
\eed
provided the limit exists. We show this now.
\begin{proposition}
Let $H_0$ be a self-adjoint operator and let $\rho$ be a steady 
state for $H_0$ and let $H$ be a self-adjoint operator. Further, let $Q$ be a charge for $H_0$. 
If the resolvent difference of $H$ and $H_0$ is a trace class operator, $\gs_{sc}(H) = \emptyset$ and $(I + H^2_0)\rho$ is 
a bounded operator, then the current $J^\cS_{\rho,Q}$ is well-defined and admits the representation
\eqref{0.9}.
\end{proposition}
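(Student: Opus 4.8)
The plan is to start from the regularized current $J^\cS_{\rho,Q}(\gd)=-i\tr(\rho_+E_H(\gd)[H,Q]E_H(\gd))$, insert the formula \eqref{2.1a} for $\rho_+$, and show two things: first, that the point-spectrum part of $\rho_+$ contributes nothing in the limit $\gd\to\dR$; second, that the absolutely continuous part can be rearranged into \eqref{0.9}. For the point-spectrum part, note that $E_H(\{\gl_k\})\rho E_H(\{\gl_k\})$ commutes with $H$ (as $\rho_+$ is a steady state of $H$, already established in the previous proposition), so each term $\tr\big(E_H(\{\gl_k\})\rho E_H(\{\gl_k\})E_H(\gd)[H,Q]E_H(\gd)\big)$ can, via trace cyclicity and \eqref{2.4x}, be written using $[H,Q]$ sandwiched between spectral projections; since $[H,Q]$ restricted to the eigenspace of $\gl_k$ is, formally, $[H-\gl_k,Q]$ which vanishes between vectors of that eigenspace, each such term is zero for every $\gd$ containing $\gl_k$. (More carefully one uses $E_H(\{\gl_k\})(H-i)^{-1}[H,Q](H+i)^{-1}E_H(\{\gl_k\})=0$ from the representation \eqref{2.5x} since $[(H+i)^{-1},Q]$ between the same eigenspace projections vanishes.) Hence only $W_-(H,H_0)\rho W_-(H,H_0)^*$ survives.

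Next I would use the intertwining property of the wave operator, $HW_-(H,H_0)=W_-(H,H_0)H_0$ on the appropriate domains, equivalently $(H+i)^{-1}W_-=W_-(H_0+i)^{-1}$ and $(H-i)^{-1}W_-=W_-(H_0-i)^{-1}$. Writing $E_H(\gd)[H,Q]E_H(\gd)=(H-i)E_H(\gd)KE_H(\gd)(H+i)$ with $K$ as in \eqref{2.5x}, and using trace cyclicity, one gets
\[
J^\cS_{\rho,Q}(\gd)=-i\tr\big((H+i)W_-(H,H_0)\rho W_-(H,H_0)^*(H-i)E_H(\gd)KE_H(\gd)\big).
\]
Now push $(H\pm i)$ through $W_-$ using intertwining; since $\rho$ commutes with $H_0$, the operator $(H_0+i)W_-^*\cdots$ can be rewritten and one obtains, after reassembling, an expression of the form $-i\tr\big(W_-(I+H_0^2)\rho W_-^*(H-i)^{-1}[H,Q](H+i)^{-1}E_H(\gd)\big)$ — this is where the factor $(I+H_0^2)\rho$ (bounded by hypothesis) enters, and it is exactly $\tilde\rho$. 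Here $K=(H-i)^{-1}[H,Q](H+i)^{-1}$, so we recover the integrand of \eqref{0.9} with $E_H(\gd)$ still present.

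Finally I would take $\gd\uparrow\dR$. The key point is that $(H-i)^{-1}[H,Q](H+i)^{-1}=K$ is trace class (shown in \eqref{2.5x}), so $E_H(\gd)K\to K$ in trace norm by dominated convergence for the spectral measure, and $W_-(I+H_0^2)\rho W_-^*$ is a bounded operator; hence the trace converges and the limit is \eqref{0.9}. I expect the main obstacle to be bookkeeping the unbounded factors $(H\pm i)$ and $(H_0\pm i)$ carefully — making sure each intertwining step and each cyclic permutation is justified on the correct domain (one should really work with the bounded operators $(H\pm i)^{-1}$, $K$, and the bounded $\tilde\rho$ throughout, never with $[H,Q]$ itself), and verifying that no domain issue obstructs moving $E_H(\gd)$ past the other factors. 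The vanishing of the point-spectrum contribution is conceptually the only genuinely new ingredient beyond Pillet et al.; everything else is a domain-sensitive rewriting.
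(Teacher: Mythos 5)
Your proposal is correct and follows essentially the same route as the paper: insert the decomposition \eqref{2.1a} of $\rho_+$ into the regularized current \eqref{2.3x}, kill the point-spectrum terms via $E_H(\{\gl_k\})KE_H(\{\gl_k\})=0$ (which you justify slightly more explicitly than the paper does), use intertwining and $[\rho,H_0]=0$ to produce the bounded factor $(I+H_0^2)\rho$, and pass to the limit $\gd\to\dR$ using that $K$ is trace class.
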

\begin{proof}
Inserting \eqref{2.4x} into \eqref{2.3x} we get 
\bed
J^\cS_{\rho,Q}(\gd) := i\tr(\rho_+(H-i)E_H(\gd)KE_H(\gd)(H+i))
\eed
where $K$ is a trace class operator defined by \eqref{2.5x}.
Using \eqref{2.1a} we get
\begin{align*}
J^\cS_{\rho,Q}(\gd) &= 
-i\tr(W_-(H,H_0)\rho W_-(H,H_0)^*(H-i)E_H(\gd)KE_H(\gd)(H+i)) \nonumber \\
&-{i\sum}_{\gl_k\in\gs_p(H)\cap \gd}\tr(\rho E_H(\{\gl_k\})(H-i)K(H+i)E_H(\{\gl_k\})).
\end{align*}
Since $E_H(\{\gl_k\})KE_H(\{\gl_k\}) =0$
we find
\bed
J^\cS_{\rho,Q}(\gd) = 
-i\tr(W_-(H,H_0)(H^2_0 + I)\rho W_-(H,H_0)^*E_H(\gd)KE_H(\gd)),
\eed
where we have used that $(H^2_0 + I)\rho$ is a bounded operator. Then 
the limit in \eqref{mitlef3} exists and 
equals:
\bed 
J^\cS_{\rho,Q} = -i\tr(W_-(H,H_0)(H^2_0 + I)\rho W_-(H,H_0)^*K).
\eed
Note that \eqref{2.5x} coincides with \eqref{0.9}.
\end{proof}

\section{Landauer-B\"uttiker formula for unitary scattering systems}\la{III}

\subsection{Unitary operators}	\label{sec:require}

Let us recall that we consider two unitary operators $U$ and 
$U_0$ such that $U-U_0$ is trace class, and a bounded self-adjoint
operator $Q$ commuting with $U_0$ is called a charge. Thus any charge 
$Q$ is reduced by $\sH^{ac}(U_0)$ and $\sH^s(U_0)$. 
In other words, $Q$ admits the decomposition
$Q = Q_{ac} \oplus Q_s$ 
where $Q_{ac} := Q\upharpoonright\sH^{ac}(U_0)$ and $Q_s :=
Q\upharpoonright\sH^s(U_0)$. Notice that the restrictions $Q_{ac}$ and $Q_s$ 
might not be identical with the absolutely continuous and singular components $Q^{ac}$ and $Q^s$, respectively.

Let $\Pi(U^{ac}_0) = \{L^2(\dT,d\nu(\zeta),\sh(\zeta)),M,\Phi\}$ be a spectral
representation of $U^{ac}_0$, cf. Appendix \ref{A}.
Since $Q_{ac}$ commutes with $U^{ac}_0$
there is a measurable family $\{Q_{ac}(\zeta)\}_{\zeta\in\dT}$ of bounded
self-adjoint operators acting on $\sh(\zeta)$ such that
\bed
\nu-\sup_{\zeta \in \dT}\|Q_{ac}(\zeta)\|_{\cB(\sh(\zeta))}
=\|Q_{ac}\|_{\cB(\sH)}
\eed
and $Q_{ac} = \Phi^{-1}M_{Q_{ac}}\Phi$ where $M_{Q_{ac}}$ is the
multiplication operator induced by $\{Q_{ac}(\zeta)\}_{\zeta\in\dT}$ 
in $L^2(\dT,d\nu(\zeta),\sh(\zeta))$. 

A non-negative bounded self-adjoint operator $\rho$ commuting with $U_0$ is also called a density
operator and admits the decomposition $\rho = \rho_{ac} \oplus \rho_{s}$. 
The part $\rho_{ac}$ is unitarily equivalent to the multiplication
operator $M_{\rho_{ac}}$ induced by a measurable family $\{\rho_{ac}(\zeta)\}_{\zeta\in\dT}$ of
non-negative bounded operators acting on $\sh(\zeta)$ and
satisfying
$\nu-\sup_{\zeta\in\dT}\|\rho_{ac}(\zeta)\|_{\sh(\zeta)}
= \|\rho_{ac}\|_\sH$ in 
$L^2(\dT,d\nu(\zeta),\sh(\zeta))$.

Let $\cS = \{U,U_0\}$ be an $\sL_1$-scattering system.
Further, let $Q$ be a charge and let $\rho$ be a density operator. In
this case we define the current $J$ for $\cS$ by
\be\la{2.60}
J := -\frac{1}{2}\tr(\gO_-\rho U^*_0\gO^*_-[V,Q])
\ee
where $V=U-U_0$ is trace class and $[V,Q] = VQ - QV$. The 
main result of this section (see Proposition \ref{II.1}) will 
show that only the absolutely continuous restriction of $Q$ contributes to the current:
\be\label{2.12}
J = J_{ac} := -\frac{1}{2}\tr(\gO_-\rho U^*_0\gO^*_-[V,Q_{ac}]).
\ee
Before that, we need a series of lemmata.
\bl\label{II.2}
Let $U_0$ be a unitary operator on $\sH$ and let $Q$ be a charge.
Then $\sH$ admits an orthogonal decomposition
\bed
\sH = \bigoplus_{k\in\dN\sH_k}
\eed
reducing $U_0$ and $Q$ such that $U_k := U_0 \upharpoonright\sH_k$, $k \in \dN$,  has a
constant spectral multiplicity function and $Q_k :=
Q\upharpoonright\sH_k$ commutes with $U_k$, $k \in \dN$.
\el
\begin{proof}
Let $\Pi(U_0) =\{L^2(\dT,d\mu(\zeta),\sk(\zeta)),M,\Psi\}$ be a spectral
representation of $U_0$, cf Appendix \ref{A}, and
let ${\rm Mult}(\zeta) := \dim(\sk(\zeta))$ be the spectral multiplicity function of $U_0$. We set $\gD_1 :=
\{\zeta \in \dT: {\rm Mult}(\zeta) =\infty\}$ and $\gD_k :=
\{\zeta \in \dT: {\rm Mult}(\zeta) =k-1\}$ if $k \geq 2$. Let  $E_0(\cdot)$ be 
the spectral measure of $U_0$. We set $\sH_k :=
E_0(\gD_k)\sH$. Obviously, each subspace $\sH_k$ reduces $U_0$ and
$Q$. Moreover, the unitary operators $U_k$ defined on $\sH_k$ are of
constant spectral multiplicity.  
\end{proof} 
Next we are going to show that $Q$ can be approximated by a sequence of
self-adjoint operators with pure point spectrum.
\bl\label{II.3}
Let $U_0$ be a unitary operator on $\sH$ of constant spectral
multiplicity and let $Q$ be a  charge. 
Then there is a sequence $\{Q_m\}_{m\in\dN}$ of charges with pure
point spectrum satisfying $\slim_{m\to\infty}Q_m = Q$ and 
$\|Q_m\|_{\sH} \le \|Q\|_{\sH} +1$.
\el
\begin{proof}
Since $U_0$ is of constant spectral
multiplicity $U_0$ admits the spectral representation $\Pi(U_0) := \{L^2(\dT,d\mu(\zeta),\sk),M,\Psi\}$
where $\sk$ is independent from $\zeta \in \dT$. 
If $Q$ is a charge, then there is a measurable family
$\{Q(\zeta)\}_{\zeta\in\dT}$ of bounded self-adjoint operators
satisfying
${\mu-\sup_{\zeta\in\dT}}\|Q(\zeta)\|_{\sk} = \|Q\|_\sH$ such that $Q$
is unitarily equivalent to the multiplication operator $M_Q$ in $L^2(\dT,\mu(\zeta),\sk)$.

Since $\{Q(\zeta)\}_{\zeta\in\dT}$ is a measurable family of self-adjoint
operators there is a sequence $\{\wt Q_m(\zeta)\}_{\zeta\in\dT}$ of simple functions
such that 
\be\label{2.15}
\slim_{m\to\infty}\wt Q_m(\zeta) = Q(\zeta)
\ee
for a.e. $\zeta \in \dT$ with respect to $\mu$. We recall that $\wt Q_m(\cdot)$ is simple if
it admits the representation 
\bed
\wt Q_m(\zeta) = \sum_l \chi_{\gd_{ml}}(\zeta)\wt Q_{ml}, \quad \zeta \in
\dT, \quad \wt Q_{ml} = \wt Q^*_{ml} \in \sB(\sk),
\eed
where $\{\gd_{ml}\}$ are disjoint
Borel subsets of $\dT$ satisfying $\bigcup_{l}\gd_{ml} = \dT$ for each
$m \in \dN$ and $\sum_l$ is finite. Without loss of generality we can assume that the condition
\bed
\|\wt Q_m(\zeta)\|_{\sk} \le \mu-\sup_{\eta\in\dT}\|\wt Q_m(\eta)\|_{\sk}
\eed
is satisfied for each $m \in \dN$.

By the v.~Neumann theorem \cite[Theorem X.2.1]{Ka1995} for each
self-adjoint operator $\wt Q_{ml}$
there is a self-adjoint Hilbert-Schmidt operator $D_{ml}$ such that
$\|D_{ml}\|_{\sL_2} \le \frac{1}{m}$ and $Q_{ml} := \wt Q_{ml} + D_{ml}$ is pure
point. Setting 
\bed
Q_m(\zeta) = \sum_l \chi_{\gd_{ml}}(\zeta) Q_{ml}, \quad \zeta \in
\dT, \quad Q_{ml} = Q^*_{ml} \in \sB(\sk),
\eed
one easily verifies that
\bed
\slim_{m\to\infty}Q_m(\zeta) = Q(\zeta)
\eed
for a.e. $\zeta \in \dT$ with respect to $\mu$. We note that $\slim_{m\to\infty}M_{Q_m} = M_Q$. Moreover, the
spectrum of $M_{Q_m}$ is pure point for each $m \in \dN$.
Setting $Q_m :=  \Psi^{-1} M_{Q_m}\Psi$
we find that $\slim_{m\to\infty}Q_m = Q$. Moreover, each operator
$Q_m$ commutes with $U_0$. 
\end{proof}
\bl\label{II.4}
Let $U_0$ be a purely singular unitary operator (i.e. the absolutely continuous component is absent) 
on the separable Hilbert space $\sH$. Then there is a sequence
$\{U_n\}_{n\in\dN}$ of unitary operators with pure point spectrum such that $U_0 - U_n \in \sL_1(\sH)$, $n\in \dN$, and
$\lim_{n\to\infty}\|U_0 - U_n\|_{\sL_1} = 0$.
\el
\begin{proof}
Let us assume that $\ker(U_0 + I) = \{0\}$. We introduce the self-adjoint operator
\bed
H_0 = i(I - U_0)(I + U_0)^{-1}.
\eed
Since $U_0$ is singular the self-adjoint operator $H_0$ is also
singular. By Lemma 2 of \cite{CP1976} for each $n \in \dN$ there is a self-adjoint trace class operator $D_n$ 
satisfying $\|D_n\|_{\sL_1} < \frac{1}{n}$ such that $\wt H_n := H_0 +
D_n$ is pure point. Hence, the unitary operators
\bed
U_n := (i - \wt H_n)(i + \wt H_n)^{-1}, \quad n \in \dN,
\eed
have pure point spectrum. Since
\bed
U_0 - U_n = 2i(i + \wt H_n)^{-1}D_n(i + H_0)^{-1}, \quad n\in \dN,
\eed
we get
\bed
\|U_0 - U_n\|_{\sL_1} \le 2\|D_n\|_{\sL_1} < \frac{2}{n}, \quad n \in \dN,
\eed
which yields $\slim_{n\to\infty}\|U_0 -U_n\|_{\sL_1} = 0$.

If the condition $\ker(I + U_0) = 0$ is not satisfied, then the unitary
operator admits the decomposition $U_0 = U'_0 \oplus U''_0$ where
$U'_0 = U_0\upharpoonright \sH'$, $\sH' := \ker(I + U_0)^\perp$, and
$U''_0 = U_0\upharpoonright\sH'' = -I_{\sH''}$, $\sH'' := \ker(I + U_0)$. One easily verifies that $\ker(I + U'_0) = \{0\}$. 
Hence the construction above can be applied. 
That means, there is a sequence $\{U'_n\}_{n\in\dN}$ of unitary operators 
with simple pure point spectrum on $\sH'$ such that $U'_0 - U'_n \in \sL_1(\sH')$, $n \in
\dN$, and  $\lim_{n\to\infty}\|U'_0 -U'_n\|_{\sL_1} = 0$. 

On the Hilbert space $\sH''$ we choose  $U''_n = -I$, $n \in \dN$. Setting $U_n := U'_n \oplus U''_n$, $n \in \dN$, we complete the
proof.
\end{proof}
\begin{proposition}\label{II.5}
Let $U_0$ be a purely singular unitary operator 
and let $Q$ be a charge, both acting on the separable Hilbert
space $\sH$. Then there is  a sequence of unitary operators
$\{\wt U_m\}_{m\in \dN}$ and a sequence of
bounded self-adjoint operators $\{Q_m\}_{m\in\dN}$ both with pure
point spectrum such that $[Q_m,\wt U_m]=0$ and $U_0 - \wt U_m
\in \sL_1$ for all $m \in \dN$ satisfying
\bed
\lim_{m\to\infty}\|U_0 - \wt U_m\|_{\sL_1} = 0 \quad \mbox{and} \quad  
Q = \slim_{m\to\infty}\wt Q_m.
\eed
\end{proposition}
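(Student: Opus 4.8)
The plan is to combine the two preceding lemmata --- Lemma \ref{II.3}, which perturbs a charge into one with pure point spectrum while keeping it commuting with $U_0$, and Lemma \ref{II.4}, which perturbs a purely singular unitary in trace norm into one with pure point spectrum --- but with a diagonal argument that produces, for each $m$, a \emph{single} pair $(\wt U_m, Q_m)$ that commute with each other and both have pure point spectrum. The obstruction to naively composing the lemmata is that perturbing $U_0$ (to get $\wt U_m$) destroys the commutation of the approximating charges with it, and vice versa; so the two approximations have to be carried out \emph{simultaneously on each spectral-multiplicity block}, in a basis adapted to the perturbed unitary. I would proceed as follows.

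First, apply Lemma \ref{II.2} to decompose $\sH = \bigoplus_{k\in\dN}\sH_k$ reducing both $U_0$ and $Q$, with $U_k := U_0\upharpoonright\sH_k$ of constant spectral multiplicity and $Q_k := Q\upharpoonright\sH_k$ commuting with $U_k$. On each block $\sH_k$ we have the spectral representation $\Pi(U_k) = \{L^2(\dT, d\mu_k(\zeta), \sk_k), M, \Psi_k\}$ with $\sk_k$ independent of $\zeta$, in which $Q_k$ becomes multiplication by a measurable family $\{Q_k(\zeta)\}$. Fix $m$. On block $\sH_k$, imitate the proof of Lemma \ref{II.3}: choose a simple approximant $\wt Q_{m,k}(\zeta) = \sum_l \chi_{\gd_{mkl}}(\zeta)\,\wt Q_{mkl}$ with $\slim_{m\to\infty}\wt Q_{m,k}(\zeta) = Q_k(\zeta)$ a.e. Now, instead of perturbing only the constant operators $\wt Q_{mkl}$, perturb the unitary too: by Lemma \ref{II.4} applied to the purely singular unitary $M\upharpoonright L^2(\gd_{mkl}, d\mu_k, \sk_k)$ --- which is again purely singular because $U_0$ is --- there is a unitary $V_{mkl}$ with pure point spectrum on that space, with $\|M\upharpoonright\cdot - V_{mkl}\|_{\sL_1} < 2^{-(m+k+l)}/|{\cdot}|$, and then on the same space we may apply the v.\ Neumann theorem simultaneously to the finitely many self-adjoint operators $\{\wt Q_{mkl}\}$ relative to a common basis diagonalizing $V_{mkl}$: there is a Hilbert--Schmidt $D_{mkl}$ with $\|D_{mkl}\|_{\sL_2} \le 2^{-(m+k+l)}$ such that $Q_{mkl} := \wt Q_{mkl} + D_{mkl}$ is diagonal in the eigenbasis of $V_{mkl}$, hence commutes with $V_{mkl}$. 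Reassembling over $l$, then over $k$, with the summability of $2^{-(m+k+l)}$ guaranteeing $\sL_1$-convergence, defines $\wt U_m$ and $Q_m := \Psi^{-1}M_{Q_m}\Psi$ (where $Q_m(\zeta) = \sum_l \chi_{\gd_{mkl}}(\zeta)Q_{mkl}$ on block $k$), both with pure point spectrum, with $[\wt U_m, Q_m] = 0$ block-by-block, and with $U_0 - \wt U_m \in \sL_1$.

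It then remains to check the two limits. For $\|U_0 - \wt U_m\|_{\sL_1} \to 0$: the block-$k$ contribution is bounded by $\sum_l 2^{-(m+k+l)} \le 2^{-(m+k)}$, so summing over $k$ gives $\|U_0 - \wt U_m\|_{\sL_1} \le 2^{-m+1}\to 0$. For $Q = \slim_{m\to\infty}\wt Q_m$ (note the statement writes $\wt Q_m$ for what the construction calls $Q_m$): on each block, $\|Q_m(\zeta) - Q_k(\zeta)\|_{\sk_k} \le \|\wt Q_{m,k}(\zeta) - Q_k(\zeta)\|_{\sk_k} + 2^{-m}$, so $M_{Q_m}\to M_{Q_k}$ strongly on $L^2(\dT, d\mu_k, \sk_k)$ by the Lemma \ref{II.3} argument (dominated convergence, using the uniform bound $\|Q_m(\zeta)\|\le\|Q\|+1$); strong convergence on each block plus the uniform norm bound upgrades to strong convergence on all of $\sH$. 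The main obstacle I anticipate is purely bookkeeping: arranging the v.\ Neumann perturbations on the \emph{finitely many} pieces $\wt Q_{mkl}$ within each cell $\gd_{mkl}$ so that a \emph{single} orthonormal basis simultaneously diagonalizes $V_{mkl}$ and all the $Q_{mkl}$ --- this is where one uses that $V_{mkl}$ is chosen first (with pure point spectrum, fixing a basis) and then each $\wt Q_{mkl}$ is perturbed within that basis --- together with keeping the double/triple series of trace-norm and Hilbert--Schmidt errors summable. None of this is deep, but it is the step where care is needed.
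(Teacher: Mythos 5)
Your overall skeleton (decompose via Lemma \ref{II.2}, combine the charge approximation of Lemma \ref{II.3} with the unitary approximation of Lemma \ref{II.4}, and sum the errors) is the same as the paper's, but the one step you dismiss as ``purely bookkeeping'' is exactly where your argument breaks. You fix the pure point unitary $V_{mkl}$ first and then claim, invoking the v.~Neumann theorem, that there is a Hilbert--Schmidt $D_{mkl}$ with $\|D_{mkl}\|_{\sL_2}\le 2^{-(m+k+l)}$ such that $\wt Q_{mkl}+D_{mkl}$ is diagonal \emph{in the prescribed eigenbasis of $V_{mkl}$}. The Weyl--von Neumann theorem does not say this: it produces a small perturbation that is diagonal in \emph{some} orthonormal basis arising from the construction, and diagonality in a basis chosen in advance cannot be achieved by small perturbations. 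Already in $\dC^2$, if the prescribed basis is the standard one and $\wt Q$ has off-diagonal entry $\tfrac12$, every operator diagonal in that basis is at Hilbert--Schmidt distance at least $\tfrac{1}{\sqrt 2}$ from $\wt Q$. In your setting the eigenbasis of $V_{mkl}$ (obtained from Lemma \ref{II.4} on the whole cell $L^2(\gd_{mkl},d\mu_k,\sk_k)$) bears no relation to the tensor structure in which the charge acts as the constant fiber operator $\wt Q_{mkl}$, so its off-diagonal part relative to that basis is of order $\|\wt Q_{mkl}\|$, not small; hence the commutator $[V_{mkl},Q_{mkl}]=0$ is not obtained, and the strong convergence $Q_m\to Q$ would also be lost if one forced diagonality by a large correction.

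The paper avoids the problem by reversing the order of the two perturbations. First the charge is made pure point while keeping the commutation with $U_0$ for free: in Lemma \ref{II.3} the v.~Neumann perturbation is applied to the constant values $\wt Q_{ml}$ of the simple functions, so the perturbed family is still a multiplication operator and commutes with $M$ exactly, no prescribed-basis statement being needed. Then, for each $k,m$, one writes $Q_{km}=\sum_l \gl_{kml}P_{kml}$ and observes that the eigenprojections $P_{kml}$ commute with $U_k$, so the subspaces $\sH_{kml}=P_{kml}\sH_k$ reduce $U_k$; Lemma \ref{II.4} is applied separately on each $\sH_{kml}$, and the resulting $\wt U_{km}=\bigoplus_l \wt U_{kml}$ commutes with $Q_{km}$ simply because it is block diagonal with respect to the eigenspaces of $Q_{km}$. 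If you reorganize your construction in this order (perturb the charge first, then split along its eigenprojections and perturb the unitary block by block), the rest of your error bookkeeping and the two limit verifications go through essentially as you wrote them.
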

\begin{proof}
By Lemma \ref{II.2} we find a decomposition
\bed
U_0 = \bigoplus_{k\in\dN} U_k \quad \mbox{and} \quad Q = \bigoplus_{k\in\dN} Q_k
\eed
where $U_k$ is of constant spectral multiplicity and $Q_k$ are bounded self-adjoint operators
commuting with $U_k$ such that $\sup_{k\in\dN}\|Q_k\|_{\sH_k} =
\|Q\|_{\sH}$. 

By Lemma \ref{II.3} for each $k\in\dN$
there is a sequence $\{Q_{km}\}_{m\in\dN}$ of bounded self-adjoint operators with pure point spectrum commuting with $U_k$
such that $\|Q_{km}\|_{\sH_k} \le \|Q_k\|_{\sH} + 1$ for each $m \in
\dN$ and $Q_k = \slim_{m\to\infty}Q_{km}$. The operators $Q_{km}$
admit the representation
\bed
Q_{km} = \sum_{l\in\dN} \gl_{kml}P_{kml}
\eed
where $P_{kml}$ are eigenprojections of $Q_{kml}$ in $\sH_k$. Since
$U_k$ commutes with $Q_{km}$ the eigenprojections $P_{kml}$ commute
with $U_k$. We set $U_{kml} := U_k\upharpoonright\sH_{kml}$ where
$\sH_{kml} := P_{kml}\sH_k$. Notice that
\bed
U_{km} = \bigoplus_{l\in\dN} U_{kml}.
\eed
The unitary operators $U_{kml}$ are singular but their spectral
multiplicity might be not constant. 

By Lemma \ref{II.3} for each $k,m,l \in \dN$ there is a unitary
operator $\wt U_{kml}$ on $\sH_{kml}$ such that the spectrum of
$U_{kml}$ is pure point, $U_{kml} - \wt U_{kml} \in \sL_1(\sH_{kml})$
and 
\bed
\|U_{kml} - \wt U_{kml}\| \le \frac{1}{(k+m+l)^3}.
\eed
Obviously, $\wt U_{kml}$ commutes with $P_{kml}$. 
Setting
\bed
\wt U_{km} := \bigoplus_{l\in\dN}\wt U_{kml}
\eed
we get a unitary operator on $\sH_k$ with pure point spectrum which
commutes with $Q_{km}$. Moreover, we have
\bed
\|U_{km} - \wt U_{km}\|_{\sL_1} \le \sum_{l\in\dN}\frac{1}{(k+m+l)^3}.
\eed
Finally, setting
\bed
\wt U_m := \bigoplus_{k\in\dN} \wt U_{km} \quad \mbox{and} \quad Q_m :=
\bigoplus_{k\in\dN} Q_{km}
\eed
we define a unitary and a self-adjoint operator on $\sH$. Obviously, $\wt
U_m$ and $Q_m$ commute for each $m \in \dN$ and they are pure point.
Since 
\bed
\|U_0 - \wt U_m\|_{\sL_1} \le
\sum_{k\in\dN}\sum_{l\in\dN}\frac{1}{(m+k+l)^3}
\eed
we have $U_0 - \wt U_m \in\sL_1(\sH)$ for each $m \in \dN$ and
$\lim_{m\to\infty}\|U_0 - \wt U_m\|_{\sL_1} = 0$. We recall 
that $\slim_{m\to\infty}Q_m = Q$ by Lemma \ref{II.2}.
\end{proof}
\begin{proposition}\label{II.1}
Let $\cS = \{U,U_0\}$ be a $\sL_1$-scattering system.
Further, let $Q$ be a charge and $\rho$ be a density operator. 
If $U-U_0$ is trace class, then $J = J_{ac}$ (see \eqref{2.12}), i.e. the pure point and singular continuous spectral subspaces of $U_0$ do not contribute to the steady current.
\end{proposition}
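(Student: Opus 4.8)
The plan is to prove $J=J_{ac}$ by showing that the singular part of $Q$, i.e. $Q_s := Q\upharpoonright \sH^s(U_0)$, makes no contribution to the trace in \eqref{2.60}. Write $Q = Q_{ac}\oplus Q_s$; by linearity of the trace it suffices to show
\bed
\tr\left(\gO_-\rho U^*_0\gO^*_-[V,Q_s]\right) = 0.
\eed
Since $Q_s$ acts on the purely singular subspace $\sH^s(U_0)$, and since $\sH^s(U_0)$ reduces both $U_0$ and $\rho$, the first step is to reduce everything to a purely singular unitary $U_0^s$ and a charge $Q_s$ on $\sH^s(U_0)$, keeping track of how $V=U-U_0$ and $\gO_-$ interact with this decomposition. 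The key structural fact is that the wave operator $\gO_- = \gO_-(U,U_0)$ has initial space $\sH^{ac}(U_0)$, so $\gO_- P^s(U_0) = 0$; hence one must be careful, because it looks like the $Q_s$-term involves $\gO_-$ composed with something supported on $\sH^s(U_0)$. The right way to organize this is to use the approximation machinery already assembled: Proposition \ref{II.5} produces, for the purely singular part, unitaries $\wt U_m$ with pure point spectrum commuting with pure-point charges $Q_m$, with $\|U_0 - \wt U_m\|_{\sL_1}\to 0$ and $Q_m \xrightarrow{s} Q_s$.

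Concretely, I would argue as follows. Replace $U_0$ by the approximants $U_0^{(m)} := U_0^{ac}\oplus \wt U_m$ (keeping the a.c. part untouched) and $Q$ by $Q^{(m)} := Q_{ac}\oplus Q_m$, and set $V_m := U - U_0^{(m)}$, which is still trace class with $\|V_m - V\|_{\sL_1}\to 0$. Because $U_0^{(m)}$ and $U_0$ agree on $\sH^{ac}(U_0)$, the scattering system $\{U,U_0^{(m)}\}$ has the same wave operator $\gO_-$ as $\{U,U_0\}$, and $\rho$ still commutes with $U_0^{(m)}$. For the approximants the singular part $\wt U_m$ has pure point spectrum, so $\sH^s(U_0^{(m)})$ is spanned by eigenvectors of $U_0^{(m)}$ on which $Q_m$ also acts diagonally; on such a subspace one can verify directly, eigenprojection by eigenprojection, that the corresponding piece of $\tr(\gO_-\rho (U_0^{(m)})^*\gO^*_-[V_m,Q_m])$ vanishes — the trace reduces to a sum of terms each containing a factor $P\,\gO_-$ with $P$ an eigenprojection of $U_0^{(m)}$ in its singular subspace, and $\gO_- P = 0$ since $P \le P^s(U_0^{(m)}) = P^s(U_0)$. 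Thus $J^{(m)} = J_{ac}^{(m)}$ for every $m$.

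The final step is to pass to the limit $m\to\infty$. Here one uses that $V_m \to V$ in $\sL_1$ and $Q_m \xrightarrow{s} Q$ with $\sup_m\|Q_m\|<\infty$, so $[V_m,Q_m]\to[V,Q]$ in $\sL_1$ (product of $\sL_1$-convergent and strongly-convergent uniformly-bounded factors converges in $\sL_1$); since $\gO_-\rho (U_0^{(m)})^*\gO^*_-$ is uniformly bounded and $(U_0^{(m)})^*\to U_0^*$ strongly, the whole trace converges, giving $J = J_{ac}$. I expect the main obstacle to be the bookkeeping in the middle step: making precise that on the pure-point singular subspace the relevant trace decomposes as an absolutely convergent sum of terms each killed by $\gO_- P_{kml} = 0$, and controlling trace-norm convergence of $[V_m,Q_m]$ when $Q_m$ only converges strongly rather than in norm — the standard trick being to write $[V_m,Q_m]-[V,Q] = [V_m - V, Q_m] + [V, Q_m - Q]$ and use that $\sL_1$-norm times bounded is $\sL_1$, while the second term needs $V\in\sL_1$ together with strong convergence of $Q_m$ applied on the appropriate side.
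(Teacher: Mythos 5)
Your overall architecture is the same as the paper's (split $Q=Q_{ac}\oplus Q_s$, reduce to the case where the singular data are pure point via Proposition \ref{II.5}, then pass to the limit using $\|V_m-V\|_{\sL_1}\to 0$ and strong convergence of the uniformly bounded $Q_m$; the limit step as you describe it is fine and is exactly what the paper does). The gap is in the key vanishing step for the pure point case. You claim each term of the trace contains a factor $\gO_-P$ with $P$ an eigenprojection of the singular part, and conclude it vanishes because $\gO_-P^s(U_0)=0$. It is true that $\gO_-P^s(U_0)=0$ (the wave operator carries $P^{ac}(U_0)$ on its right), but that factor never actually occurs: writing out $\tr\big(\gO_-\rho U_0^*\gO_-^*[V_m,Q_{nl}]\big)$ and cyclically permuting, the eigenprojection $Q_{nl}$ always ends up adjacent to $\gO_-$ on its \emph{left} (i.e.\ as $Q_{nl}\gO_-$) or adjacent to $\gO_-^*$ on its right (i.e.\ as $\gO_-^*Q_{nl}$), and neither of these vanishes, since $\sH^{ac}(U)=\ran\gO_-$ is in general not orthogonal to $\sH^s(U_0)$; only $P^s(U)\gO_-=0$ and $\gO_-P^s(U_0)=0$ hold. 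So no individual term dies, and your proposed mechanism does not close the argument.

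What actually makes the pure point contribution vanish is a cancellation \emph{between the two halves of the commutator}, and this is where the pure point structure is genuinely used: since $Q_{nl}$ commutes with $U_0$ one has $[V,Q_{nl}]=[U,Q_{nl}]$, and because $Q_{nl}$ is finite rank the two pieces $UQ_{nl}$ and $Q_{nl}U$ are separately trace class, so the commutator may be split and trace cyclicity applied termwise (the sum over $n,l$ converging in $\sL_1$). Then
\bed
\tr(\gO_-\rho U^*_0\gO_-^*\,Q_{nl}U)=\tr(U\,\gO_-\rho U^*_0\gO_-^*\,Q_{nl})=\tr(\gO_-\rho U^*_0\gO_-^*\,UQ_{nl}),
\eed
where the last equality uses that $U$ commutes with $\gO_-\rho U_0^*\gO_-^*$ by the intertwining property $U\gO_-=\gO_-U_0$ together with $[\rho,U_0]=0$; hence the two halves cancel and $J_s=0$. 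This intertwining-plus-cyclicity cancellation (and the need for finite rank eigenprojections to legitimize splitting the commutator, since $UQ_s$ alone is not trace class) is the missing idea in your proposal; once it replaces the erroneous ``$\gO_-P=0$'' step, the rest of your argument goes through as in the paper.
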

\begin{proof}
Using the decompositions $U_0  = U^{ac}_0 \oplus U^s_0$  and $Q = Q_{ac}
\oplus Q_s$ we have:
\bed
J =- \frac{1}{2}\tr(\gO_-\rho U^*_0\gO^*_-[V,Q_{ac}]) -\frac{1}{2}\tr(\gO_-\rho U^*_0\gO_-^*[V,Q_s]).
\eed
We are going to show that $J_s := -\frac{1}{2}\tr(\gO_-\rho U^*_0\gO_-^*[V,Q_s] )= 0$.

Let us first assume that the spectra of $U^s_0$ and $Q_s$ are pure point. Hence $U^s_0$ and $Q_s$ admit the representations
\bed
U^s_0 = \sum_{n\in\dN}\zeta_nP_n \quad \mbox{and} \quad
Q_s = \sum_{l \in\dN}q_l Q_l
\eed
where $\zeta_n \in \dT$, $q_l \in \dR$ and $P_n$, $Q_l$ are eigenprojections of $U^s_0$ and $Q_s$, respectively. Since $U^s_0$ and
$Q_s$ commute, then their eigenprojections $P_n$ and $Q_l$ also commute. We set $Q_{nl} := P_nQ_l$, which define some orthogonal projections.  We have the representation
\bed
U^s = \sum_{n,l\in\dN}\zeta_{nl}Q_{nl} \quad \mbox{and} \quad
Q_s = \sum_{n,l\in\dN}q_{nl}Q_{nl}
\eed
where $\zeta_{nl} \in \dT$ and $q_{nl} \in \dR$. Notice that $\sum_{n,l\in\dN}Q_{nl} = P^s(U_0)$. 
Without loss of generality we can assume that $Q_{nl}$ are one dimensional orthogonal projections. 
Because the series $\sum_{n,l\in\dN}\zeta_{nl}Q_{nl} [V,Q_{nl}]$ converges in the trace class norm to $[V,Q_s]$, we can write:
\bed
J^s =  -\frac{1}{2}\sum_{n,l\in\dN}q_{nl}\tr(\gO_-\rho U^*_0\gO_-^*[V,Q_{nl}]).
\eed
Now we can undo each commutator and write:
\bed
\tr(\gO_-\rho\gO_-^*[V,Q_{nl}]) = \tr(\gO_-\rho U^*_0\gO_-^* UQ_{nl}) - \tr(\gO_-\rho U^*_0\gO_-^* Q_{nl}U).
\eed
Using trace cyclicity we have $\tr(\gO_-\rho U^*_0\gO_-^* Q_{nl}U)=\tr(U\gO_-\rho U^*_0\gO_-^* Q_{nl})$, and then because 
$U$ commutes with $\gO_-\rho U^*_0\gO_-^*$ due to the intertwining property of the wave operator, we can put $U$ at the left of  $Q_{nl}$. Hence $J_s = 0$.

If $U^s$ and $Q_s$ are not pure point, then in accordance with Proposition \ref{II.5} there is a sequence $\{U^s_m\}_{m\in\dN}$ of
pure point  unitary operators acting on $\sH^s(U_0)$ and a sequence $\{Q_{s,m}\}_{m\in\dN}$ of bounded self-adjoint operators 
with pure point spectrum acting on $\sH^s(U_0)$ such that $[U^s_m,Q_{s,m}] = 0$ and $U^s_0 - U^s_n \in \sL_1(\sH^s(U_0))$ for $m \in \dN$ as well as 
$\lim_{m\to\infty}\|U^s_0 - U^s_m\|_{\sL_1} = 0$ and $\slim_{m\to\infty}Q_m = Q_s$.

We set
\bed
U_m := U^{ac}_0 \oplus U^s_m
\quad \mbox{and} \quad Q_m := Q_{ac} \oplus Q_{s,m},
\quad m \in \dN.
\eed
We have $[U_m,Q_m] = 0$ and $U_0 - U_m \in \sL_1(\sH)$ for $m\in\dN$ as well as $\lim_{m\to\infty}\|U_0- U_m\|_{\sL_1} = 0$
and $\slim_{m\to\infty}Q_m = Q$. Since $U - U_m = U - U_0 + U_0 - U_m \in \sL_1(\sH)$ the wave operators
\bed
\gO_\pm(U,U_m) = \slim_{n\to\pm\infty}U^nU^{-n}_mP^{ac}(U_m)
\eed
exist for each $m \in \dN$. However, we have $\gO_\pm = \gO_\pm(U,U_m)$ for each $m \in \dN$ since $U^{ac}_m = U^{ac}_0$.   Let
\bed
J_m := -\frac{1}{2}\tr(\gO_-(U,U_m)\rho_{ac}U^{*}_0\gO_-(U,U_m)^*[V_m,Q_m]), \quad m \in
\dN,
\eed
where $V_m := U - U_m$. We note that $J_m = (J_m)_{ac} + (J_m)_s$ where
\bead
(J_m)_{ac} & := &- \frac{1}{2}\tr(\gO_-(U,U_m)\rho_{ac}U^{*}_0\gO_\pm(U,U_m)^*[V_m,Q_{ac}])\\
(J_m)_s & := &- \frac{1}{2}\tr(\gO_-(U,U_m)\rho_{ac}U^{*}_0\gO_\pm(U,U_m)^*[V_m,Q_{s,m}]).
\eead
Since $U^s_m$ and $Q_{s,m}$ are pure point we get by the considerations
above that $(J_m)_s = 0$ for each $m \in \dN$. Hence $J_m = (J_m)_{ac}$,
$m \in \dN$. 

Furthermore, using
$\gO_\pm = \gO_\pm(U,U_m)$ and $U^{ac}_0 = U^{ac}_m$ we find
\bed
J_m = (J_m)_{ac} = -\frac{1}{2}\tr(\gO_-\rho_{ac}U^{*}_0\gO^*_-[V_m,Q_{ac}]),\quad m \in
\dN.
\eed
Since $\lim_{m\to\infty}\|U_0 - U_m\|_{\sL_1} = 0$ and
$\slim_{m\to\infty}Q_m = Q$ we find $\lim_{m\to\infty}J_m  = J$ and 
$\lim_{m\to\infty}(J_m)_{ac} = J_{ac}$ which yields $J = J_{ac}$. 
\end{proof}
\bl\label{II.9a}
Let $\{U,U_0\}$ be a $\sL_1$-scattering system. With the notation introduced in \eqref{mitlef23}, let
\bed
J(r) := -\frac{1}{2}\tr(\gO_-(r)\rho U^*_0\gO_-(r)^*[V,Q_{ac}]), \quad r \in [0,1).
\eed
If $\gs_s(U) = \emptyset$, then $J =  \lim_{r\uparrow 1}J(r)$. 
\el
\begin{proof}
We set
\bed
J^{ac}(r) := -\frac{1}{2}\tr(\gO_-(r)\rho U^*_0\gO_-(r)^*P^{ac}(U)[V,Q_{ac}])
\eed
and
\bed
J^s(r) :=- \frac{1}{2}\tr(\gO_-(r)\rho U^*_0\gO_-(r)^*P^s(U)[V,Q_{ac}]).
\eed
Since $\gO_-^* = \slim_{r\uparrow 1}\gO_-(r)^*P^{ac}(U)$ one easily verifies that $J = \lim_{r\uparrow}J^{ac}(r)$.

Let us show that $\lim_{r\uparrow 1}J^s(r) = 0$.
To this end we verify that
\bed
\slim_{r\uparrow 1}\gO_-(r)^*P^s(U) = 0.
\eed
Let $\varphi_k$, $\|\varphi_k\| = 1$, be an eigenvector of $U$ corresponding to the eigenvalue $\xi_k \in \dT$.
One gets
\bed
\gO_-(r)^*\varphi_k = (1-r)P^{ac}_0\sum_{n\in\dN}r^k U^{-n}_0U^n\varphi_k = (1-r)P^{ac}_0\sum_{n\in\dN}r^k U^{-n}_0\xi^n_k\varphi_k. 
\eed
Hence
\bed
\gO_-(r)^*\varphi_k = P^{ac}_0\frac{1-r}{I - U^*_0\xi_k}\varphi_k = (1-r)\int_\dT \frac{1}{1 - r\overline{\zeta}\xi}dE^{ac}_0(\zeta)\varphi_k.
\eed
We introduce the Borel subset $\gD^N_k$ of $\dT$ defined by
\be\label{2.100a}
\gD^N_k := \left\{\zeta \in \dT: \frac{d(E^{ac}_0(\zeta)\varphi_k,\varphi_k)}{d\nu(\zeta)} \le N\right\}.
\ee
It is not hard to see that $\slim_{N\to\infty}E^{ac}_0(\dT \setminus \gD^N_k) = 0$. By the decomposition
\bead
\gO_-(r)^*\varphi_k & = & (1-r)\int_{\gD^N_k} \frac{1}{1 -
  r\overline{\zeta}\xi_k}dE^{ac}_0(\zeta)\varphi_k + \\
& & (1-r)\int_{\dT\setminus\gD^N_k} \frac{1}{1 - r\overline{\zeta}\xi}dE^{ac}_0(\zeta)\varphi_k
\eead
we find
\bead
\|\gO_-(r)^*\varphi_k\|^2 & = & 
\frac{1-r}{1+r}\int_{\gD^N_k}\frac{1 - r^2}{|1 -
  r\overline{\zeta}\xi_k|^2}\frac{d(E^{ac}_0(\zeta)\varphi_k,\varphi_k)}{d\nu(\zeta)} +\\
& &
(1-r)^2\int_{\dN\setminus\gD^N_k}\frac{1}{|1 - r\overline{\zeta}\xi_k|^2}\frac{d(E^{ac}_0(\zeta)\varphi_k,\varphi_k)}{d\nu(\zeta)}.
\nonumber
\eead
Taking into account \eqref{2.100a} we find the estimate
\bed
\|\gO_-(r)^*\varphi_k\|^2 \le 2\pi N \frac{1-r}{1+r} + 
(1-r)^2\int_{\dN\setminus\gD^N_k}\frac{1}{|1 - r\overline{\zeta}\xi_k|^2}\frac{d(E^{ac}_0(\zeta)\varphi_k,\varphi_k)}{d\nu(\zeta)}.
\eed
Using $\frac{(1-r)^2}{|1 - r\overline{\zeta}\xi_k|^2} \le 1$ we get
\bed
\|\gO_-(r)^*\varphi_k\|^2 \le 2\pi N \frac{1-r}{1+r} + (E^{ac}_0(\dT \setminus \gD^N_k)\varphi_k,\varphi_k). 
\eed
For each $\varepsilon > 0$ there is $N_0$ such that 
$(E^{ac}_0(\dT \setminus \gD^N_k)\varphi_k,\varphi_k) < \frac{\varepsilon}{2}$ for $N > N_0$. Fixing such a $N$ there is 
$r_0 < 1$ such that for $r \in (r_0,1)$ one has $2\pi N \frac{1-r}{1+r} < \frac{\varepsilon}{2}$. 
\bed
\|\gO_-(r)^*\varphi_k\|^2 \le  \varepsilon.
\eed
Hence $\lim_{r\uparrow 1}\|\gO_-(r)^*\varphi_k\|^2 = 0$. From the above considerations we get
$\lim_{r\uparrow 1}\gO^*_-(r)f = 0$ provided $f = \sum_k c_k f_k$,
$c_k \in \dC$, is a finite sum of eigenvectors of $U$. However, the
set of finite sums of eigenvectors  of $U$ is dense in $\sH^s(U)$
which yields $\slim_{r\uparrow 1}\gO^*_-(r)P^s(U) = 0$. 
Using $\slim_{r\uparrow 1}\gO_-(r) = \gO_-$ and the compactness of $V$ we immediately get that $\lim_{r\uparrow 1}J^s(r) = 0$.
\end{proof}

Using the results above we are now going to prove a Landauer-B\"uttiker formula for unitary operators

\bt\label{II.9}
Let $\cS = \{U,U_0\}$ be a $\sL_1$-scattering system. Further let $Q_0$ be a charge and let $\rho$ be a density operator.
If $\gs_{sc}(U) = \emptyset$, then 
\be\label{2.90}
J  = \frac{1}{4\pi}\int_\dT \tr\left\{\rho_{ac}(\zeta)[Q_{ac}(\zeta) - S(\zeta)^*Q_{ac}(\zeta)S(\zeta)]\right\}d\nu(\zeta)
\ee
where $S(\zeta)$ is the scattering matrix of the scattering system $\cS$.
\et
\begin{proof}
Let us introduce the approximate current by 
\bed
J(r,\varepsilon) :=-\frac{1}{2}\tr(\gO_-(r)\rho_{ac}^\varepsilon U^*_0\gO_-(r)^*[V,Q_{ac}]), \quad 0 \le r < 1,
\eed
where
\be\la{2.111}
\rho^\varepsilon_{ac} := E^{ac}_0(\gD_*(\varepsilon))\rho, \qquad
\varepsilon \ge 0,
\ee
and $\gD_*(\varepsilon) \subseteq \dT$ satisfying $\nu(\gD_*(\varepsilon)) <
\varepsilon$ and \eqref{2.55a}. Notice that $\rho^\varepsilon_{ac}$ is
also a density operator. By Lemma \ref{II.9a} we immediately get that
$\lim_{r\uparrow1}J(r,\varepsilon) = J(\varepsilon)$ where
\bed
J(\varepsilon) := -\frac{1}{2}\tr(\gO_-\rho^\varepsilon_{ac} U^*_0\gO_-^*[V,Q_{ac}]).
\eed
Furthermore, we note that
\be\la{2.114}
J = \lim_{\varepsilon \to +0}J(\varepsilon) =  \lim_{\varepsilon \to
  +0}\lim_{r\uparrow1} J(r,\varepsilon)
\ee
where $J$ is given by \eqref{2.60}. We set 
\bed
\begin{matrix}
J_1(\varepsilon) & := & \tr(\rho^\varepsilon_{ac} \gO^*_-V Q_{ac}\gO_-U^*_0),\\[1ex]
J_2(\varepsilon) & := & \tr(\rho^\varepsilon_{ac} U^*_0\gO^*_-Q_{ac}V\gO_-)
\end{matrix}
\eed
and 
\bed
\begin{matrix} 
J_1(r,\varepsilon) & := & \tr(\rho^\varepsilon_{ac}\gO_-(r)^*V Q_{ac}\gO_-(r)U^*_0), \\[1ex]
J_2(r,\varepsilon) & := & \tr(\rho^\varepsilon_{ac}\gO_-(r)^*Q_{ac}U_0V\gO_-(r)),
\end{matrix}
 \qquad 0 \le r < 1.
\eed
Notice that 
\be\label{2.91}
\begin{matrix}
-2J(\varepsilon) & = & J_1(\varepsilon) - J_2(\varepsilon)\\[1ex]
-2 J(r,\varepsilon) & = & J_1(r,\varepsilon) - J_2(r,\varepsilon),
\end{matrix}
\ee
$0 \le r < 1$. Setting $K(r) := \gO_-(r)^*V$, $0 \le r < 1$, we get 
\bed
J_1(r,\varepsilon) = \tr(\rho^\varepsilon_{ac} K(r) Q_{ac}\gO_-U^*_0),
\eed
Using $V = -U_0V^*U$ we obtain
which yields 
\be\label{2.96}
J_2(r,\varepsilon) := -\tr(\rho^\varepsilon_{ac}\gO_-(r)^*Q_{ac}U_0 K(r)^*).
\ee
At first, we are going to calculate $K(r)Q_{ac}\gO_-(r)U^*_0$. From \eqref{2.53}
we get
\bed
K(r)Q_{ac}\gO_-(r)U^*_0 = K(r)Q_{ac}\left\{P^{ac}_0 + r\int_\dT \frac{1}{I -r\zeta U^*}V^*U_0 dE_0^{ac}(\zeta)\right\}U^*_0
\eed
where we have used $U^*V = - V^*U_0$ which leads to
\bed
K(r)Q_{ac}\gO_-(r)U^*_0 = K(r)Q_{ac}\left\{U^*_0P^{ac}_0 + r\int_\dT \frac{1}{I -r\zeta U^*}V^*dE_0^{ac}(\zeta)\right\}.
\eed
Setting 
\be\label{2.100}
\Xi(r) := r\int_\dT \frac{1}{I -r\zeta U^*}V^*dE_0^{ac}(\zeta)
\ee
we get
\bed
K(r)Q_{ac}\gO_-(r)U^*_0 = K(r)Q_{ac}U^*_0P^{ac}_0 + K(r)Q_{ac}\Xi(r)
\eed
and
\bed
J_1(r,\epsilon) = \tr(\rho^\varepsilon_{ac} K(r)Q_{ac}U^*_0) + \tr(\rho^\varepsilon_{ac} K(r)Q_{ac}\Xi(r)).
\eed
Using the unitary operator $\Phi$ and \eqref{2.87} we find
\bed
(\Phi K(r)Q_{ac}U^*_0 \Phi^{-1} \wh f)(\zeta) = \int_\dT K(r;\zeta,\zeta')
Q_{ac}(\zeta') \overline{\zeta'}\wh f(\zeta')d\nu(\zeta'),
\eed
$\wh f \in L^2(\dT,d\nu(\zeta),\sh(\zeta))$. By the resolvent formula one has the identity
\bed
(I - \xi U^*)^{-1} = (I - \xi U^*_0)^{-1}\left\{I + \zeta V^*(I - \xi U^*)^{-1}\right\}, \quad \xi \in \dD.
\eed
Multiplying on the right by $V^*$ we get
\bed
(I - \xi U^*)^{-1}V^* = (I - \xi U^*_0)^{-1}\left\{V^* + \xi V^*(I - \xi U^*)^{-1}V^*\right\}, \quad \xi \in \dD,
\eed
which yields
\bed
(I - \xi U^*)^{-1}V^* = (I - \xi U^*_0)^{-1}CZ(\xi)C, \quad \xi \in \dD.
\eed
Using that we obtain
\bed
\Xi(r) =
r\int_\dT (I - r\zeta' U^*_0)^{-1}CZ(r \zeta')CdE_0^{ac}(\zeta')
\eed
which yields
\be\label{2.107}
\Xi(r) = 
r \int_\dT E^{ac}_0(d\xi)C \int_\dT (I - r\zeta' \overline{\xi})^{-1}Z(r \zeta')CdE_0^{ac}(\zeta').
\ee
Applying the map $\Phi$ one gets
\bed
(\Phi\Xi(r)\Phi^{-1}\wh f)(\xi) =
r \sqrt{Y(\xi)} \int_\dT (I - r\zeta' \overline{\xi})^{-1}Z(r \zeta')\sqrt{Y(\zeta')}\wh f(\zeta') d\nu(\zeta').
\eed
or
\bed
(\Phi\Xi(r)\Phi^{-1}\wh f)(\xi) =
r \int_\dT (I - r\zeta' \overline{\xi})^{-1}K(r;\zeta',\xi)^*\wh f(\zeta') d\nu(\zeta'),
\eed
$\wh f \in L^2(\dT,d\nu(\zeta),\sh(\zeta))$. Using
\bed
(\Phi K(r)Q_{ac}\Xi(r)\Phi^{-1}\wh f)(\xi) = (\Phi K(r)\Phi^{-1}\Phi Q_{ac}\Phi^{-1}\Phi\Xi(r)\Phi^{-1}\wh f)(\xi)
\eed
and \eqref{2.87} we find
\bead
\lefteqn{
(\Phi K(r)Q_{ac}\Xi(r)\Phi^{-1}\wh f)(\zeta) =}\\
& &
r\int_\dT d\nu(\xi) K(r;\zeta,\xi)Q_{ac}(\xi)\int_\dT d\nu(\zeta')(I - r\zeta' \overline{\xi})^{-1}K(r;\zeta',\xi)^*\wh f(\zeta').
\nonumber
\eead
Setting
\bea\label{2.112}
M(r;\zeta,\xi,\zeta') & := & K(r;\zeta,\xi)Q_{ac}(\xi)K(r;\zeta',\xi)^*\\
                      & = & \sqrt{Y(\zeta)}Z(r\zeta)^*\sqrt{Y(\xi)}Q_{ac}(\xi)\sqrt{Y(\xi)}Z(r\zeta')\sqrt{Y(\zeta')}\nonumber\\
& = & X_*(r;\zeta)\sqrt{Y(\xi)}Q_{ac}(\xi)\sqrt{Y(\xi)}X_*(r;\overline{\zeta'})^*
\nonumber
\eea
we find
\bed
(\Phi K(r)Q_{ac}\Xi(r)\Phi^{-1}\wh f)(\zeta) =
r\int_\dT d\nu(\xi)\int_\dT d\nu(\zeta')  \frac{M(r;\zeta,\xi,\zeta')}{I - r\zeta' \overline{\xi}}\wh f(\zeta')
\eed
where $X_*(r;\zeta)$ is defined by \eqref{2.55}.
Notice that 
\bed
M(r;\zeta,\xi,\zeta')^* = M(r;\zeta',\xi,\zeta).
\eed
Summing up we obtain
\bea\la{2.115}
J_1(r,\varepsilon) = \int_\dT
d\nu(\zeta)\overline{\zeta}\tr(\rho^\varepsilon_{ac}(\zeta)K(r;\zeta,\zeta)Q_{ac}(\zeta))
+
r \int_{\dT^2} d\nu(\zeta) d\nu(\xi)
\tr\left(\rho^\varepsilon_{ac}(\zeta)\frac{M(r;\zeta,\xi,\zeta)}{I - r\zeta
  \overline{\xi}}\right).
\eea
We are going to calculate $J_2(r,\varepsilon)$. From \eqref{2.60a} we get
\bed
\gO_-(r)^*Q_{ac}U_0K(r)^* = \left\{P^{ac}_0 + r\int_\dR
  dE^{ac}_0(\zeta)V\frac{\overline{\zeta}}{I -
    r\overline{\zeta}U}\right\}Q_{ac}U_0K(r)^*
\eed
or
\bed
\gO_-(r)^*Q_{ac}U_0K(r)^* = Q_{ac}U_0K(r)^* +
r\int_\dR dE^{ac}_0(\zeta)V\frac{\overline{\zeta}}{I - r\overline{\zeta}U} Q_{ac}U_0K(r)^*
\eed
which yields 
\bed
\gO_-(r)^*Q_{ac}U_0K(r)^* = Q_{ac}U_0K(r)^* +
rU^*_0\int_\dR dE^{ac}_0(\zeta)V\frac{1}{I - r\overline{\zeta}U} Q_{ac}U_0K(r)^*.
\eed
Using the notation \eqref{2.100} we obtain
\be\label{2.119}
\gO_-(r)^*Q_{ac}U_0K(r)^* = Q_{ac}U_0K(r)^* +
rU^*_0\Xi(r)^*Q_{ac}U_0K(r)^*.
\ee
Obviously we have
\be\label{2.120}
(\Phi Q_{ac}U_0K(r)^*\Phi^{-1} \wh f)(\zeta) = Q_{ac}(\zeta)\zeta\int_\dT
K(r;\xi,\zeta)^*\wh f(\xi)d\nu(\xi),
\ee
$\wh f \in L^2(\dT,d\nu(\zeta),\sh(\zeta))$. Using \eqref{2.107} we find
\bead
U^*_0\Xi(r)^*Q_{ac}U_0K(r)^* =
rU^*_0\int_\dT dE^{ac}_0(\zeta)CZ(r\zeta)^*\int_\dT C dE^{ac}_0(\xi)(1- r\overline{\zeta}\xi)^{-1}Q_{ac}U_0K(r)^*
\eead
which yields
\begin{align*}
&(\Phi U^*_0\Xi(r)^*Q_{ac}U_0K(r)^*\Phi^{-1}\wh f)(\zeta) \\
& =
r\left(\Phi U^*_0\int_\dT dE^{ac}_0(\zeta)CZ(r\zeta)^*\Phi^{-1}
\Phi\int_\dT C dE^{ac}_0(\xi)(1- r\overline{\zeta}\xi)^{-1}\Phi^{-1} 
\Phi Q_{ac}U_0K(r)^*\Phi^{-1}\wh f\right)(\zeta),
\end{align*}
 Hence
\bead
\lefteqn{
(\Phi U^*_0\Xi(r)^*Q_{ac}U_0K(r)^*\Phi^{-1}\wh f)(\zeta) }\\
& =&r\overline{\zeta}\sqrt{Y(\zeta)}Z(r\zeta)^*
\int_\dT d\nu(\xi) \sqrt{Y(\xi)}(1- r\overline{\zeta}\xi)^{-1}
Q_{ac}(\xi)\xi\int_\dT d\nu(\zeta')K(r;\zeta',\xi)^*\wh f(\zeta').
\nonumber
\eead
Since 
$K(r;\zeta,\xi) := \sqrt{Y(\zeta)}Z(r\zeta)^*\sqrt{Y(\xi)}$ by
definition we get
\bead
\lefteqn{
(\Phi U^*_0\Xi(r)^*Q_{ac}U_0K(r)^*\Phi^{-1}\wh f)(\zeta) = }\\
& &
r\int_\dT d\nu(\xi) \frac{\overline{\zeta}\xi K(r;\zeta,\xi)}{1- r\overline{\zeta}\xi}
Q_{ac}(\xi)\int_\dT d\nu(\zeta')K(r;\zeta',\xi)^*\wh f(\zeta'),\nonumber
\eead
 Finally, by definition \eqref{2.112} we find
\bea\label{2.125}
(\Phi U^*_0\Xi(r)^*Q_{ac}U_0K(r)^*\Phi^{-1}\wh f)(\zeta) = 
\int_\dT d\nu(\xi)\int_\dT d\nu(\zeta')\frac{\overline{\zeta}\xi M(r;\zeta,\xi,\zeta')}{1- r\overline{\zeta}\xi}\wh f(\zeta').
\eea
From \eqref{2.96} and \eqref{2.119} it follows
\bed
J_2(r,\varepsilon) = -\tr(\rho^\varepsilon_{ac} Q_{ac}U_0K(r)^*) -
r\tr(\rho^\varepsilon_{ac} U^*_0\Xi(r)^*Q_{ac}U_0K(r)^*).
\eed
Taking into account \eqref{2.120} and \eqref{2.125} we obtain
\bea\label{2.127}
J_2(r,\varepsilon) & = & -\int_\dT d\nu(\zeta)\zeta\tr(\rho^\varepsilon_{ac}(\zeta)Q_{ac}(\zeta)K(r;\zeta,\zeta)^*)\\
&- & 
r\int_\dT\int_\dT d\nu(\zeta)d\nu(\xi)\frac{\overline{\zeta}\xi}{1-
  r\overline{\zeta}\xi}\tr(\rho^\varepsilon_{ac}(\zeta) M(r;\zeta,\xi,\zeta)).
\nonumber 
\eea
From \eqref{2.91}, \eqref{2.115} and \eqref{2.127} we get
\bead
\lefteqn{
-2J(r,\varepsilon)=
\int_\dT d\nu(\zeta)\overline{\zeta}\tr\left(\rho^\varepsilon_{ac}(\zeta)K(r;\zeta,\zeta)Q_{ac}(\zeta)\right)
 }\\
& +&
\int_\dT d\nu(\zeta)\zeta\tr\left(\rho^\varepsilon_{ac}(\zeta)Q_{ac}(\zeta)K(r;\zeta,\zeta)^*\right) \\
& +& 
r \int_\dT d\nu(\zeta)\int_\dT d\nu(\xi)
\left\{\frac{1}{I - r\zeta\overline{\xi}} + \frac{\overline{\zeta}\xi}{1- r\overline{\zeta}\xi}\right\}
\tr\left(\rho^\varepsilon_{ac}(\zeta)M(r;\zeta,\xi,\zeta)\right)
\eead
which yields
\bead
\lefteqn{
-2J(r,\varepsilon) = \int_\dT d\nu(\zeta)\overline{\zeta}\tr\left(\rho^\varepsilon_{ac}(\zeta)K(r;\zeta,\zeta)Q_{ac}(\zeta)\right) }\\
& +&
\int_\dT d\nu(\zeta)\zeta\tr\left(\rho^\varepsilon_{ac}(\zeta)Q_{ac}(\zeta)K(r;\zeta,\zeta)^*\right)
\\
&+ &
2\pi\frac{r}{1+r} \frac{1-r^2}{2\pi}\int_\dT d\nu(\zeta)\int_\dT d\nu(\xi)
\frac{1 + \overline{\zeta}\xi}{|I - r\zeta\overline{\xi}|^2}
\tr\left(\rho^\varepsilon_{ac}(\zeta)M(r;\zeta,\xi,\zeta)\right).
\eead
By \eqref{2.111} we get
\bead
\lefteqn{
-2J(r,\varepsilon) =
\int_{\dT\setminus\gD_*(\varepsilon)}
d\nu(\zeta)\overline{\zeta}\tr\left(\rho_{ac}(\zeta)K(r;\zeta,\zeta)Q_{ac}(\zeta)\right)}\\
& +& 
\int_{\dT\setminus\gD_*(\varepsilon)}
d\nu(\zeta)\zeta\tr\left(\rho_{ac}(\zeta)Q_{ac}(\zeta)K(r;\zeta,\zeta)^*\right)\\
&+ &
2\pi\frac{r}{1+r} \frac{1-r^2}{2\pi}
\int_{\dT\setminus\gD_*(\varepsilon)} 
d\nu(\zeta)\int_\dT d\nu(\xi)\frac{1 + \overline{\zeta}\xi}{|I - r\zeta\overline{\xi}|^2}
\tr\left(\rho_{ac}(\zeta)M(r;\zeta,\xi,\zeta)\right).
\eead
Using the representation $K(r;\zeta,\zeta) =
X_*(r;\zeta)\sqrt{Y(\zeta)}$ and taking into account \eqref{2.55a} we
find that
\bead
\lim_{r\uparrow 1}\int_{\dT\setminus\gD_*(\varepsilon)}
d\nu(\zeta)\overline{\zeta}\tr\left(\rho_{ac}(\zeta)K(r;\zeta,\zeta)Q_{ac}(\zeta)\right)
= 
\int_{\dT\setminus\gD_*(\varepsilon)}
d\nu(\zeta)\overline{\zeta}\tr\left(\rho_{ac}(\zeta)K(\zeta,\zeta)Q_{ac}(\zeta)\right)
\eead 
and
\bead
\lim_{r\uparrow 1}
\int_\dT
d\nu(\zeta)\zeta\tr\left(\rho_{ac}(\zeta)Q_{ac}(\zeta)K(r;\zeta,\zeta)^*\right)
= 
\int_{\dT\setminus\gD_*(\varepsilon)}
d\nu(\zeta)\zeta\tr\left(\rho_{ac}(\zeta)Q_{ac}(\zeta)K(\zeta,\zeta)^*\right).
\eead
Furthermore, using \eqref{2.112} we find that
\bead
\lefteqn{\hspace{-2,0cm}
\frac{1-r^2}{2\pi}\int_{\dT\setminus\gD_*(\varepsilon)} d\nu(\zeta)
\int_\dT d\nu(\xi)\frac{1 + \overline{\zeta}\xi}{|I -
 r\zeta\overline{\xi}|^2} \tr\left(\rho_{ac}(\zeta)M(r;\zeta,\xi,\zeta)\right)}\\
& =&
\int_\dT d\nu(\xi)
\frac{1-r^2}{2\pi}\int_\dT d\nu(\zeta)\frac{1 + \overline{\zeta}\xi}{|I - r\zeta\overline{\xi}|^2} F(r;\zeta,\xi)
\chi_{\dT\setminus\gD(\varepsilon)}(\zeta)
\eead
where
\bed
F(r;\zeta,\xi) := 
\tr\left(\rho_{ac}(\zeta)X_*(r;\zeta)\sqrt{Y(\xi)}Q_{ac}(\xi)\sqrt{Y(\xi)}X_*(r;\overline{\zeta})^*\right)
\eed
$\zeta \in \dT\setminus\gD_*(\varepsilon)$, $\xi \in \dT$ and $0 \le r < 1$. By \eqref{2.55a} we
get the estimate
\bed
|F(r;\zeta,\xi)| \le C_{X_*}(\varepsilon)^2\|\rho_{ac}\|\|Q_{ac}\|\tr(Y(\xi)), \;
\zeta \in \dT \setminus \gD_*(\varepsilon)), \;  0 \le r < 1, \;
\xi \in \dT. 
\eed
Hence 
\bed
\begin{split}
&\left|\frac{1-r^2}{2\pi}
\int_{\dT\setminus\gD_*(\varepsilon)} d\nu(\zeta)\frac{1 +\overline{\zeta}\xi}{|I - r\zeta\overline{\xi}|^2}F(r;\zeta,\xi)\right| 
\le \\
&
2\frac{1-r^2}{2\pi}\int_{\dT\setminus\gD_*(\varepsilon)} d\nu(\zeta)\frac{1}{|I -
  r\zeta\overline{\xi}|^2}|F(r;\zeta,\xi)| \le 2C_{X_*}(\varepsilon)^2\|\rho_{ac}\|\|Q_{ac}\|\tr(Y(\xi))
\end{split}
\eed
where $\tr(Y(\xi)) \in L^1(\dT,d\nu(\xi))$. Applying the Lebesgue
dominated convergence theorem we obtain
\bead
\lefteqn{\hspace{-2,0cm}
\frac{1-r^2}{2\pi}\int_{\dT\setminus\gD_*(\varepsilon)} d\nu(\zeta)
\int_\dT d\nu(\xi)\frac{1 + \overline{\zeta}\xi}{|I -
 r\zeta\overline{\xi}|^2} \tr\left(\rho_{ac}(\zeta)M(r;\zeta,\xi,\zeta)\right)}\\
& =&
\int_\dT d\nu(\xi)F(\xi,\xi)\chi_{\dT\setminus\gD(\varepsilon)}(\xi) =
\int_{\dT \setminus \gD_*(\varepsilon)}d\nu(\xi) F(\xi,\xi)
\eead
where 
\bed
F(\xi,\xi) :=
\tr\left(\rho_{ac}(\xi)X_*(\xi)\sqrt{Y(\xi)}Q_{ac}(\xi)\sqrt{Y(\xi)}X_*(\overline{\xi})^*\right)
= \tr(\rho_{ac}(\xi)M(\xi,\xi,\xi))
\eed
and  $M(\zeta,\zeta,\zeta) = \sL_1-\lim_{r\uparrow 1}M(r;\zeta,\zeta,\zeta)$ for a.e. $\xi \in \dT$. Summing up we obtain 
\bead
-2J(\varepsilon) := 2\lim_{r\uparrow 1} J(r,\varepsilon) & = &
\;\int_{\dT\setminus\gD_*(\varepsilon)}d\nu(\zeta)\overline{\zeta}\tr\left(\rho_{ac}(\zeta)K(\zeta,\zeta)Q_{ac}(\zeta)\right)\\
& +&
\int_{\dT\setminus\gD_*(\varepsilon)}d\nu(\zeta)\zeta\tr\left(\rho_{ac}(\zeta)Q_{ac}(\zeta)K(\zeta,\zeta)^*\right)\\
&+ & 2\pi\int_{\dT\setminus\gD_*(\varepsilon)} d\nu(\zeta)\tr\left(\rho_{ac}(\zeta)M(r;\zeta,\zeta,\zeta)\right).
\eead
By Corollary \ref{II.8} we verify that
\begin{align*}
-2J(\varepsilon) &=  i\int_{\dT\setminus\gD_*(\varepsilon)}
d\nu(\zeta)\tr\left(\rho_{ac}(\zeta)T(\zeta)^*Q_{ac}(\zeta)\right) \\
&- i\int_{\dT\setminus\gD_*(\varepsilon)} d\nu(\zeta)\tr\left(\rho_{ac}(\zeta)Q_{ac}(\zeta)T(\zeta)\right)
+2\pi\int_{\dT\setminus\gD_*(\varepsilon)} d\nu(\zeta) \tr\left(\rho_{ac}(\zeta)M(\zeta,\zeta,\zeta)\right).
\nonumber
\end{align*}
Since
$M(\zeta,\zeta,\zeta) = K(\zeta,\zeta)Q_{ac}(\zeta)K(\zeta,\zeta)^*$
one gets
$M(\zeta,\zeta,\zeta) = T(\zeta)^*Q_{ac}T(\zeta)$. Therefore
\bed
2J(\varepsilon) = \int_{\dT\setminus\gD_*(\varepsilon)}d\nu(\zeta)\tr\left(\rho_{ac}(\zeta)\gS(\zeta)\right)
\eed
where
\be\label{2.133}
\gS(\zeta) := -iT(\zeta)^*Q_{ac}(\zeta) + iQ_{ac}(\zeta)T(\zeta)- 2\pi
T(\zeta)^*Q_{ac}(\zeta)T(\zeta),
\quad \zeta \in \dT.
\ee
Using \eqref{2.81a} we obtain $\|\gS\|_{\sL_1} \in
L^1(\dT,d\nu(\zeta))$. Moreover, from \eqref{2.62} we get
\be\label{2.134}
T(\zeta) = \frac{I_{\sh(\gl)} -S(\zeta)}{2\pi i} 
\quad \mbox{and} \quad 
T(\zeta)^* = - \frac{I_{\sh(\gl)} -S(\zeta)^*}{2\pi i}. 
\ee
Inserting \eqref{2.134} into \eqref{2.133} we find
\bead
\lefteqn{
\gS(\zeta) := \frac{I_{\sh(\zeta)} - S(\zeta)^*}{2\pi} Q_{ac}(\zeta) + Q_{ac}(\zeta)\frac{I_{\sh(\zeta)}- S(\zeta)}{2\pi} +}\\
& &
2\pi \frac{I_{\sh(\gl)} -S(\zeta)^*}{2\pi i} Q_{ac}(\zeta)\frac{I_{\sh(\gl)} -S(\zeta)}{2\pi i} 
\nonumber
\eead
which yields 
\bed
\gS(\zeta) = \frac{1}{2\pi}\left\{Q_{ac}(\zeta) - S(\zeta)^*Q_{ac}(\zeta)S(\zeta)\right\}.
\eed
which proves
\bed
J(\varepsilon) =
\frac{1}{4\pi}\int_{\dT\setminus\gD_*(\varepsilon)} \tr\left(\rho_{ac}(\zeta)(Q_{ac}(\zeta) - S(\zeta)^*Q_{ac}(\zeta)S(\zeta)\right)d\nu(\zeta)
\eed
Using  $\|\gS(\zeta)\|_{\sL_1} \in L^1(\dT,d\nu(\zeta))$ and
\eqref{2.114} we immediately prove \eqref{2.90}.
\end{proof}
\bc\la{cor:EquilibriumFluxZero}
If the assumptions of Theorem \ref{II.9} are satisfied, then
\be\la{2.134a}
J = \frac{1}{4\pi}\int_\dT\tr\left(\left(\rho_{ac}(\zeta) -
    S(\zeta)\rho_{ac}(\zeta)S(\zeta)^*\right)Q_{ac}(\zeta)\right). 
\ee
Further, let $\phi: \dT \longrightarrow [0,\infty)$ be Borel measurable and bounded. If $\rho = \phi(U_0)$, then $J = 0$.
\ec
\begin{proof}
Using the fact that $S(\zeta) - I_{\sh(\zeta)} \in \sL_1(\sh(\zeta))$ for
a.e $\zeta \in \dT$ with respect to $\nu$ one immediately shows that
\eqref{2.134a} follows from \eqref{2.90}.

If $\rho = \phi(U_0)$, then $\rho_{ac} = \phi(U^{ac}_0)$ which yields $\rho_{ac}(\zeta) =
\phi(\zeta)I_{\sh(\zeta)}$ for a.e. $\zeta \in \dT$ with respect to
$\nu$. Inserting $\rho_{ac}(\zeta) = \phi(\zeta)I_{\sh(\zeta)}$ into
\eqref{2.134a} we prove $J=0$. 
\end{proof}

\subsection{Self-adjoint operators}\la{III.2}

Let $H_0$ and $H$ be self-adjoint operators on the separable Hilbert
space $\sH$. If the condition
\be\label{2.137}
(H + i)^{-1} - (H_0 + i)^{-1}\in  \sL_1(\sH)
\ee
is satisfied, then the pair $\cS' = \{H,H_0\}$ is called a $\sL_1$-scattering system.
If $\cS' = \{H,H_0\}$ is a $\sL_1$-scattering system, then the wave operators 
\bed
W_\pm := \slim_{t\to\pm\infty}e^{itH}e^{-itH_0}P^{ac}(H_0)
\eed
exist and are complete. The scattering operator is defined by $S' :=
W^*_+W_-$.

A bounded self-adjoint operator $Q$ commuting with $H_0$ is called a
charge for $\cS'$. A non-negative bounded operator $\rho$ commuting with $H_0$
is called a density operator for $\cS'$. To define  the current $J'$ for $\cS'$
we assume that $(I + H^2_0)\rho$ is a bounded operator. Under this
assumption the current $J'$ is defined by 
\be\la{2.144}
J' := -i\tr\left(W_-(I+ H^2_0)\rho W^*_-(H-i)^{-1}[H,Q](H+i)^{-1}\right).
\ee
Using \eqref{2.5x} we have that $(H-i)^{-1}[H,Q'](H+i)^{-1} \in \sL_1(\sH)$ which shows that
the current is well defined. The definition \eqref{2.144} is in
accordance with \cite{Pillet2007}. Indeed, from definition \eqref{2.144} we
formally get $J' = -i\tr(W_-\rho W^*_-[H,Q])$. 
\bt	\label{thm:LandBuett}
Let $\cS' = \{H,H_0\}$ be a $\sL_1$-scattering system. 
Further, let $Q$ be a charge and let $\rho$ be a density operator for
$\cS'$ such that $(I + H^2_0)\rho$ is a bounded operator. 
Further, let $\Pi(H^{ac}_0)$ a spectral representation of $H^{ac}_0$
such that $Q_{ac}$ and $\rho_{ac}$ are represented by multiplication
operators $M_{Q'_{ac}}$ and $M_{\rho'_{ac}}$ induced by the 
measurable families $\{Q'_{ac}(\gl)\}_{\gl \in \dR}$ and
$\{\rho'_{ac}(\gl)\}_{\gl \in \dR}$, respectively. If $\sigma_{sc}(H) = \emptyset$,
then
\be\label{eq:LandBuett}
J'  = \frac{1}{2\pi}\int_\dR \tr\left(\rho'_{ac}(\gl)(Q'_{ac}(\gl)-S'(\gl)^*Q'_{ac}(\gl)S'(\gl))\right)d\gl
\ee
where $\{S'(\gl)\}_{\gl\in\dR}$ is the scattering matrix with respect
to the spectral representation $\Pi(H^{ac}_0)$.
\et
\begin{proof}
Let us introduce the Cayley transforms
\bed
U := (i - H)(i+ H)^{-1} 
\quad \mbox{and} \quad
U_0 := (i - H_0)(i + H_0)^{-1}.
\eed
The pair $\cS = \{U,U_0\}$ is a $\sL_1$-scattering system if and only if $\cS'$ is $\sL_1$-scattering system.
By the invariance principle for wave operators one verifies that $W_\pm
= \gO_\pm$ which yields $S = S'$. Obviously, $Q$ is a charge for
$\cS$  and $\rho$ is a density operator for $\cS$. 
A straightforward computation (compare with \eqref{mitlef1}) shows that
\be\label{2.142}
 J '= -\frac{1}{2}\tr(\gO_-\rho U^*_0\gO^*_-[V,Q]) =
-i\tr\left(W_-\rho W^*_-(H-i)^{-1}[H,Q](H+i)^{-1}\right).
\ee
Let $\Pi(U^{ac}_0)$ be the spectral representation of Appendix
\ref{B}. Assume that the operators $Q_{ac}$, $\rho_{ac}$ and $S =
\gO^*_+\gO_-$ are represented in $\Pi(U^{ac}_0)$ by the multiplication
operators $M_{Q_{ac}}$, $M_{\rho_{ac}}$ and $M_S$
induced by the measurable families $\{Q_{ac}(\zeta)\}_{\zeta\in\dT}$,
$\{\rho_{ac}(\zeta)\}_{\zeta \in \dT}$ and
$\{S(\zeta)\}_{\zeta\in\dT}$, respectively.

Using the spectral representation $\Pi(H^{ac}_0) = \{L^2(\dR,d\gl,\sh'(\gl)),M,\Phi'\}$ 
of Appendix \ref{C} one gets that $Q_{ac}$, $\rho_{ac}$ and $S$ are
presented in $\Pi(H^{ac}_0)$ by multiplication
operators $M_{Q'_{ac}}$, $M_{\rho'_{ac}}$ and $M_{S'}$
induced by the measurable families $\{Q'_{ac}(\gl)\}_{\gl\in\dR}$,
$\{\rho'_{ac}(\gl)\}_{\gl\in \dR}$ and $\{S'(\gl)\}_{\gl\in\dR}$,
respectively. Notice that both families are related by 
\bead
Q'_{ac}(\gl) & = & Q_{ac}(e^{2i\arctan(\gl)}), \quad \gl \in \dR,\\
\rho'_{ac}(\gl) & = & \rho_{ac}(e^{2i\arctan(\gl)}), \quad \gl \in \dR,\\
S'(\gl) & = & S(e^{2i\arctan(\gl)}),\quad \gl \in \dR.
\eead
Taking into account Theorem \ref{II.9} we get
\bea\la{2.143}
-\frac{1}{2}\tr(\gO_-\rho U^*_0\gO^*_-[V,Q]))
= 
\frac{1}{2\pi}\int_\dR \tr\left(\rho'_{ac}(\gl)(Q'_{ac}(\gl)-S'(\gl)^*Q'_{ac}(\gl)S'(\gl))\right)\frac{d\gl}{1+\gl^2}.
\eea
Finally, replacing $\rho$ by $(I + H^2_0)\rho$ we obtain
\eqref{eq:LandBuett} from \eqref{2.143}.
\end{proof}
\bc\la{cor:LandBuett1}
If the assumptions of Theorem \ref{thm:LandBuett} are satisfied, then
\bed
J' = \frac{1}{2\pi}\int_\dR \tr\left(\left(\rho'_{ac}(\gl) -
S'(\gl)\rho'_{ac}(\gl)S'(\gl)^*\right)Q'_{ac}(\gl)\right)d\gl.
\eed
Further, let $\phi: \dR \longrightarrow [0,\infty)$ be Borel measurable and
bounded. If $(1+\gl^2)\phi(\gl)$, $\gl \in \dR$, is bounded and $\rho' = \phi(H_0)$ , then $J' = 0$.
\ec
The proof of Corollary \ref{cor:LandBuett1} follows from Corollary
\ref{cor:EquilibriumFluxZero}.

The charge $Q$ was defined as a bounded self-adjoint
operator. However, this definition is usually not sufficient in
applications, cf. below. In \cite[Definition 3.3]{Pillet2007}
the notion of tempered charge charge was introduce. An unbounded
self-adjoint operator $Q$ is called a tempered charge if $Q$
commutes with $H_0$ and for any bounded Borel set $\gL$ of $\dR$ the
truncated charge $Q_\gL := QE_0(\gL)$ is bounded where $E_0(\cdot)$ is the spectral measure of $H_0$.
For tempered charges we set
\bed
J'_\gL := -i\tr(W_-(I + H^2_0)\rho W^*_-(H-i)^{-1}[H,Q_\gL](H+i)^{-1}), \quad Q_\gL
:= QE_0(\gL).
\eed
Since $[Q,H_0]=0$, we can decompose 
$Q' = Q_{ac} \oplus Q_s$. Let $\Pi(H^{ac}_0) =
\{L^2(\dT,d\gl,\sh'(\gl)),M',\Phi'\}$
be a spectral representation of $H^{ac}_0$. Then there is a
measurable family $\{Q'_{ac}(\gl)\}_{\gl\in\dR}$ of bounded operators
such that $Q_{ac}$ is unitarily equivalent to the multiplication
operator $M_{Q'_{ac}}$ where
\bead
(M_{Q'_{ac}}\wh{f'})(\gl) & := & Q'_{ac}(\gl)\wh{f'}(\gl), \quad \wh{f'} \in \dom(M_{Q'_{ac}}), \quad \gl \in \dR,\\
\dom(M_{Q'_{ac}}) & := & \{\wh{f'} \in
L^2(\dR,d\gl,\sh'(\gl):  Q'_{ac}(\gl)\wh{f'}(\gl) \in
L^2(\dR,d\gl,\sh'(\gl)\}.
\eead
Obviously, one gets $Q'_{\gL,ac}(\gl) = Q'_{ac}(\gl)\chi_\gL(\gl)$, $\gl \in
\dR$. If $Q$ is a tempered charge, then $Q_{ac}$ is a tempered charge
for $H^{ac}_0$, that is $\|Q_{ac}E^{ac}_0(\gL)\|_\sH <
\infty$. Therefore, for a tempered charge one has
\be\la{2.53a}
\sup_{\gL \in \cB_b(\dR)}\ess-sup_{\gl\in\gL}
  \|Q'_{ac}(\gl)\|_{\sh'(\gl)} < \infty
\ee
where $\ess-sup$ means the essential spectrum with respect to the
Lebesgue measure on $\dR$. In the following we denote the set of all
bounded Borel sets of $\dR$ by  $\cB_b(\dR)$.
\bc
Let $\cS' = \{H,H_0\}$ be a $\sL_1$-scattering system. 
Further, let $Q$ be a tempered charge and let $\rho$ be a density
operator. If
\be\la{2.53b}
\sup_{\gL \in \cB_b(\dR)} \|Q E_0(\gL)\|_\sH\|(I + H^2_0)\rho E_0(\gL)\|_\sH < \infty
\ee
then the limit $J' := \lim_{L\to\infty}J'_{(-L,L)}$
exists and the formula \eqref{eq:LandBuett} is valid.
\ec
\begin{proof}
Applying Theorem \ref{thm:LandBuett} we find
\bed
J'_\gL = \frac{1}{2\pi}\int_\gL\tr(\rho'_{ac}(\gl)(Q'_{ac}(\gl) -
S'(\gl)^*Q'_{ac}(\gl)S'(\gl)))d\gl, \quad \gL \in \cB_b(\dR).
\eed
From \eqref{2.53b} which yields
\bed
\sup_{\gL \in \cB_b(\dR)} \|Q_{ac} E_0(\gL)\|_\sH\|(I + H^2_0)\rho_{ac} E_0(\gL)\|_\sH < \infty
\eed
which yields
\bead
\lefteqn{
\|Q_{ac} E_0(\gL)\|_\sH\|(I + H^2_0)\rho_{ac} E_0(\gL)\|_\sH =}\\
& &
\ess-sup_{\gl\in\gL}\|Q'_{ac}(\gl)\|_{\sh'(\gl)}
\ess-sup_{\gl\in\gL}(1+\gl^2)\|\rho'_{ac}(\gl)\|_{\sh'(\gl)}.
\eead
Hence
\bead
\lefteqn{
\sup_{\gL \in \cB_b(\dR)}\|Q_{ac} E_0(\gL)\|_\sH\|(I + H^2_0)\rho_{ac} E_0(\gL)\|_\sH =}\\
& &
\sup_{\gL \in \cB_b(\dR)}\left\{
\ess-sup_{\gl\in\gL}\|Q'_{ac}(\gl)\|_{\sh'(\gl)}
\ess-sup_{\gl\in\gL}(1+\gl^2)\|\rho'_{ac}(\gl)\|_{\sh'(\gl)}
\right\}.
\eead
This gives
\bed
\sup_{\gL \in
  \cB_b(\dR)}\ess-sup_{\gl\in\gL}\left\{(1+\gl^2)\|Q'_{ac}(\gl)\|_{\sh'(\gl)}\|\rho'_{ac}(\gl)\|_{\sh'(\gl)}\right\} < \infty.
\eed
In particular, we have
\be\la{2.56r}
\sup_{L > 0}\;\ess-sup_{\gl\in (-L,L)}\left\{(1+\gl^2)\|Q'_{ac}(\gl)\|_{\sh'(\gl)}\|\rho'_{ac}(\gl)\|_{\sh'(\gl)}\right\} < \infty. 
\ee
Using the definition  $T'(\gl) := \frac{1}{2\pi}(I_{\sh'(\gl)} -
S'(\gl))$, $\gl \in \dR$, we find the relation $T'(\gl) = T(e^{2i\arctan(\gl)})$ for
a.e. $\gl \in \dR$. Taking into account \eqref{2.81a} we get the
estimate 
\be\la{2.57r}
\int_\dR \|T'(\gl)\|_{\sL_1} \frac{d\gl}{1+\gl^2} \le 2\|(H + i)^{-1}
- (H_0 + i)^{-1}\|_{\sL_1}.
\ee
Since 
\bead
\lefteqn{
Q'_{ac}(\gl) - S'(\gl)Q'_{ac}S'(\gl) = }\\
& &
2\pi
i\left\{T'(\gl)Q'_{ac}(\gl) + Q_{ac}(\gl)T'(\gl) -2\pi iT'(\gl)Q_{ac}(\gl)T'(\gl)\right\}
\eead
for a.e. $\gl \in \dR$ we find
\bead
\lefteqn{
\left\|\rho'_{ac}(\gl)(Q'_{ac}\left(\gl) -
    S'(\gl)Q'_{ac}S'(\gl)\right)\right\|_{\sL_1} \le}\\
& &
\left(2 + \tfrac{1}{\pi}\right)\|\rho'_{ac}(\gl)\|_{\sh'(\gl)}\|Q'_{ac}(\gl)\|_{\sh'(\gl)}\|T'(\gl)\|_{\sL_1}
\eead
for a.e. $\gl \in \dT$ where we have used that $\|T'(\gl)\|_{\sh'(\gl)} \le \tfrac{1}{\pi}$.
Using \eqref{2.56r} and \eqref{2.57r} we verify that the integral 
\bed
J'_\dR := \int_\dR\tr(\rho'_{ac}(\gl)(Q'_{ac}(\gl) -
S'(\gl)^*Q'_{ac}(\gl)S'(\gl)))d\gl
\eed
exists and is finite. Hence 
\bed
\lim_{L\to\infty}J'_{(-L,L)} = \lim_{L\to\infty}\frac{1}{2\pi}\int^L_{-L}\tr(\rho'_{ac}(\gl)(Q'_{ac}(\gl) -
S'(\gl)^*Q'_{ac}(\gl)S'(\gl)))d\gl  = J'_\dR
\eed
which completes the proof.
\end{proof}

\section{Examples}\la{IV}

Let us consider examples where the it is important that the
Hamiltonian is not semibounded from below.

\subsection{Landauer-B\"uttiker formula for dissipative operators}

We consider
the Schr\"odinger-type operator $K$  in the Hilbert space $\sK
= L^2((a,b))$ defined by
\bed
\dom(K) :=
\left\{
g \in W^{1,2}((a,b)): 
\begin{matrix}
\frac{1}{m(x)}g'(x) \in W^{1,2}((a,b))\\
\left(\frac{1}{2m}g'\right)(a) = -\gk_a g(a)\\
\left(\frac{1}{2m}g'\right)(b) = \gk_b g(b)
\end{matrix}
\right\}
\eed
and
\bed
(Kg)(x) = l(g)(x), \quad g \in \dom(K),
\eed
where
\bed
l(g)(x) = -\frac{d}{dx}\frac{1}{2m(x)}\frac{d}{dx}g(x) + V(x)g(x),
\quad x \in (a,b),
\eed
$V \in L^\infty((a,b))$ and $m(x) > 0$ is real function such that
$m\in L^\infty((a,b))$ and $\frac{1}{m} \in
L^\infty((a,b))$. Furthermore, we assume $\gk_a,\gk_b \in \dC_+ = \{z
\in\dC: \imag(z) > 0\}$. The operator $K$ is maximal dissipative and
completely non-self-adjoint. Its spectrum consists of non-real
isolated eigenvalues in $\dC_-$ which accumulate at infinity.

To analyze the operator $K$ it is useful to introduce the elementary 
solutions $v_a(x,z)$ and $v_b(x,z)$,
\begin{align}
&l(v_a(x,z)) - zv_a(x,z) = 0, \quad v_a(a,z) = 1, 
    \quad \frac{1}{2m(a)}v'_a(a,z) = -\gk_a,  \la{4.3}\\
&l(v_b(x,z)) - zv_b(x,z) = 0, \quad v_b(b,z) = 1,
    \quad \frac{1}{2m(b)}v'_b(b,z) = \gk_b, \la{4.4}
\end{align}
$x \in [a,b]$, $z \in \dC$, which always exist. 
The Wronskian of $v_a(x,z)$ and $v_b(x,z)$ is defined  by $W(z)$, i.e.
\be\la{2.6}
W(z) = v_a(x,z)\frac{1}{2m(x)}v'_b(x,z) - v_b(x,z)\frac{1}{2m(x)}v'_a(x,z).
\ee                                               
We note that the Wronskian does not depend on $x$. 
Obviously, the functions $v_{*a}(x,z)$ and $v_{*b}(x,z)$,
\be\la{2.7-1}
v_{*a}(x,z) := \overline{v_a(x,\overline{z})} \quad \mbox{and} \quad
v_{*b}(x,z) := \overline{v_b(x,\overline{z})}, 
\quad z \in \dC.
\ee
$x \in [a,b]$, $z \in \dC$, are also elementary solutions of 
\begin{align}
&l(v_{*a}(x,z)) - zv_{*a}(x,z) = 0, \quad v_{*a}(a,z) = 1, 
\quad \frac{1}{2m(a)}v'_{*a}(a,z) = -\overline{\gk_a}, \la{2.9} \\[2mm]
&l(v_{*b}(x,z)) - zv_{*b}(x,z) = 0, \quad v_{*b}(b,z) = 1, 
\quad \frac{1}{2m(b)}v'_{*b}(b,z) = \overline{\gk_b},\la{2.10}
\end{align}
$x \in [a,b]$. The Wronskian of $v_{*a}(x,z)$ and $v_{*b}(x,z)$ is
denoted by $W_*(z)$ and is also independent from $x$. Using the
elementary solutions one gets the representation
\bea\la{4.9}
\lefteqn{\hspace{-1.0cm}((H - z)^{-1}f)(x) = }\\
& & -\frac{v_b(x,z)}{W(z)}\int^x_a dy\,v_a(y,z)f(y) - 
\frac{v_a(x,z)}{W(z)}
\int^b_x dy\,v_b(y,z)f(y),\nonumber
\eea
for $z \in \varrho(H)$ and $f \in L^2([a,b])$ and
\bea\la{2.13}
\lefteqn{\hspace{-1.0cm}((H^* - z)^{-1}f)(x) = }\\
& & -\frac{v_{*b}(x,z)}{W_*(z)}\int^x_a dy\,v_{*a}(y,z)f(y) - 
\frac{v_{*a}(x,z)}{W_*(z)}
\int^b_x dy\,v_{*b}(y,z)f(y),\nonumber
\eea
for $z \in \varrho(H^*)$ and $f \in L^2([a,b])$, see \cite{KNR2}.

Since $H$ is completely non-self-adjoint the maximal dissipative
operator $H$ can be completely characterized by its characteristic
function $\gth_K(z)$, $z \in \varrho(H) \cap \gr(H^*)$. The definition of
the characteristic function relies on the so-called boundary operators $T(z):
\sK \longrightarrow \dC^2$, $z \in \gr(H)$ and $T_*(z): \sK
\longrightarrow \dC^2$, $z \in \gr(H^*)$, which are defined in
\cite{KNR2}. Introducing representations
\be\la{2.14}
\gk_a = q_a + \frac{i}{2}\ga_a^2 \quad \mbox{and} \quad 
\gk_b = q_b + \frac{i}{2}\ga_b^2, \quad \ga_a,\ga_b > 0,
\ee
the boundary operators are defined by
\be\la{4.12}
T(z)f := 
\left(\ba{c} 
\ga_b((H - z)^{-1}f)(b)\\
-\ga_a((H - z)^{-1})f(a) 
\ea\right)
\ee
and
\be\la{2.16}
T_*(z)f := \left(
\ba{c}
\ga_b((H^* - z)^{-1}f)(b)\\
-\ga_a((H^* - z)^{-1}f)(a)
\ea
\right),
\ee
$f \in L^2([a,b])$. Using the resolvent representations \eqref{4.9}
and \eqref{2.13} we obtain
\be\la{2.17}
T(z)f = \frac{1}{W(z)}\left(
\ba{c}
-\ga_b \int^b_a dy \; v_a(y,z)f(y)\\
\ga_a \int^b_a dy \; v_b(y,z)f(y)
\ea
\right)
\ee
and
\be\la{2.17-1}
T_*(z)f = \frac{1}{W_*(z)}\left(
\ba{c}
-\ga_b\int^b_a dy \; v_{*a}(y,z)f(y)\\
\ga_b\int^b_a dy \; v_{*b}(y,z)f(y)
\ea
\right),
\ee
$f \in L^2([a,b])$. The adjoint operators are given by

\bea\la{2.17-2}
\left(T(z)^*\xi\right)(x) & = & \frac{1}{\overline{W(z)}}
\left(-\ga_b \overline{v_a(x,z)},\ga_a\overline{v_b(x,z)}\right)\xi\\
          & = & \frac{1}{W_*(\overline{z})}
\left(-\ga_b v_{*a}(x,\overline{z}),\ga_av_{*b}(x,\overline{z})\right)\xi,\nonumber 
\eea
and
\bea\la{2.17-3}
\left(T_*(z)^*\xi\right)(x) & = & \frac{1}{\overline{W_*(z)}}
\left(-\ga_b\overline{v_{*a}(x,z)},\ga_a\overline{v_{*b}(x,z)}\right)\xi\\
            & = & \frac{1}{W(\overline{z})}
\left(-\ga_bv_a(x,\overline{z}),\ga_av_b(x,\overline{z})\right)\xi,\nonumber
\eea
where%
\be\la{2.17-4}
\xi = \left(
\ba{c}
\xi^b\\
\xi^a
\ea
\right) \in \dC^2.
\ee
The characteristic function
$\gT_K(\cdot)$ of the maximal dissipative
operator $H$ is a two-by-two matrix-valued function which
satisfies the relation
\be\la{2.18}
\gT_K(z)T(z)f = T_*(z)f, \quad z \in \gr(H) \cap \gr(H^*), \quad
\ga_a,\ga_b > 0,
\ee
$f \in L^2([a,b])$. It depends
meromorphically on $z \in \gr(H) \cap \gr(H^*)$ and is contractive in $\dC_-$, i.e.
\be\la{2.19}
\|\gT_K(z)\| \le 1 \quad \mbox{for} \quad z \in \dC_-.
\ee
Using the elementary solutions the characteristic
function $\gT_K(\cdot)$ takes the form
\be\la{2.24}
\gT_K(z) = I_{\dC^2} + i\frac{1}{W_*(z)}
\left(\ba{cc} 
\ga^2_bv_{*a}(b,z) & -\ga_b\ga_a \\
-\ga_b\ga_a         & \ga^2_av_{*b}(a,z) 
\ea\right).       
\ee                                                                
for $z \in \gr(H) \cap \gr(H^*)$, cf. \cite{KNR2}

Since the operator $K$ is maximal dissipative there is a larger
Hilbert space $\sH$ and a self-adjoint operator $H$ such that $\sK$ is
embed in $\sH$ and the relation
\bed
P^{\sH}_{\sK}(H- z)^{-1}\upharpoonright\sK = (K-z)^{-1}, \quad z \in
\dC_+,
\eed
is satisfied. The self-adjoint operator $H$ is called a self-adjoint
dilation of $K$. If the condition 
\bed
\clospa\{(H-z)^{-1}\sK: z \in \dC \setminus \dR\} = \sH
\eed
is satisfied, then $H$ is called a minimal self-adjoint dilation
$K$ of $H$. Minimal self-adjoint dilations of maximal dissipative operators are
determined up to a certain isomorphism, in particular, all minimal
self-adjoint dilations are unitarily equivalent. 

In the present case the minimal self-adjoint dilation of the maximal
dissipative operator $H$ can be constructed in an explicit manner. In 
accordance with \cite{KNR2} we introduce the larger Hilbert space
\be\la{2.28}
\sH = \sD_- \oplus \sK \oplus \sD_+
\ee
where $\sD_\pm := L^2(\dR_\pm,\dC^2)$. Introducing the graph $\gO$,
\begin{center}
\setlength{\unitlength}{1pt}
\begin{picture}(50,40)\thicklines
\put(-80,20){\line(1,0){80}} 
\put(-40,25){\makebox{$\dR_-$}}
\put(0,20){\line(1,0){80}} 
\put(40,25){\makebox{$\dR_+$}}
\put(-80,-20){\line(1,0){80}} 
\put(-40,-15){\makebox{$\dR_-$}}
\put(0,-20){\line(1,0){80}} 
\put(40,-15){\makebox{$\dR_+$}}
\put(0,-20){\line(0,1){40}} 
\put(5,-2){\makebox{$[a,b]$}}
\end{picture}
\end{center}

\vspace{30pt}
\noindent
one can write the Hilbert space $\sH$ as $L^2(\hat{\gO})$. Further, we define
\be\la{2.29}
\vec{g} := g_- \oplus g \oplus g_+
\ee
where
\be\la{2.30}
g_-(x) := \left(\ba{c}
g^b_-(x)\\
g^a_-(x)
\ea\right) \qquad \mbox{and} \quad
g_+(x) := \left(\ba{c}
g^b_+(x)\\
g^a_+(x)
\ea\right)
\ee
for $x \in \dR_-$ and $x \in \dR_+$, respectively. 
Let us introduce the matrices $K^a_\pm$ and $K^b_\pm$ which are defined by
\be\la{2.31}
K^a_- := 
\left(\ba{cc}
0   &   0 \\
1   &  \gk_a
\ea\right) 
\qquad \mbox{and} \qquad 
K^a_+ := 
\left(\ba{cc}
0  & 0 \\
1  & \overline{\gk_a}
\ea\right)
\ee
as well as
\be\la{2.32}
K^b_- := 
\left(\ba{cc}
1   &   -\gk_b \\
0   &  0
\ea\right) 
\qquad \mbox{and} \qquad 
K^b_+ := 
\left(\ba{cc}
1  & -\overline{\gk_b}\\
0  & 0
\ea\right).
\ee
Further we set
\bed
\gL :=
\begin{pmatrix}
\ga_b & 0\\
0 & \ga_b
\end{pmatrix}
\eed
Using these notations the self-adjoint dilation $K$ is defined by
\be\la{2.33}
\dom(H) := \left\{\vec{g} \in \sH: 
\ba{l}
g_\pm \in W^{1,2}(\dR_\pm,\dC^2),\\
g,\frac{1}{m}g' \in W^{1,2}([a,b]),\\  
K^a_-g_a + K^b_-g_b = \gL g_-(0),\\
K^a_+g_a + K^b_+g_b = \gL g_+(0)
\ea
\right\}
\ee
and
\be\la{2.34}
H\vec{g} := -i\frac{d}{dx}g_- \oplus l(g) \oplus -i\frac{d}{dx}g_+, 
\qquad \vec{g} \in \dom(H),
\ee
where,
\be\la{2.34-1}
g_a = \left(\ba{c}
\frac{1}{2 m(a)}g'(a)\\
g(a)
\ea\right) \quad \mbox{and} \quad 
g_b = \left(\ba{c}
\frac{1}{2 m(b)}g'(b)\\
g(b)
\ea\right), 
\ee
With respect to a graph picture the operator $H$ looks like

\vspace{25pt}
\begin{center}
\setlength{\unitlength}{1pt}
\hspace{40pt}
\begin{picture}(50,40)\thicklines
\put(-160,30){\line(1,0){100}} 
\put(-60,28){$)$}
\put(-160,48){\makebox{$\ga_b g^b_-(0) = \frac{1}{2m(b)}g'(b) - \gk_bg(b)$}}
\put(-140,15){\makebox{$-i\frac{d}{dx}g^b_-$}}
\put(50,30){\line(1,0){100}} 
\put(48,28){\makebox{(}}
\put(20,48){\makebox{$\frac{1}{2m(b)}g'(b) - \overline{\gk_b}g(b) = \ga_b g^b_+(0)$}}
\put(100,15){\makebox{$-i\frac{d}{dx}g^b_+$}}
\put(-160,-30){\line(1,0){100}} 
\put(-60,-32){\makebox{$)$}}
\put(-165,-52){\makebox{$\ga_a g^a_-(0) = \frac{1}{2m(a)}g'(a) + \gk_ag(a)$}}
\put(-140,-20){\makebox{$-i\frac{d}{dx}g^a_-$}}
\put(50,-30){\line(1,0){100}} 
\put(48,-32){\makebox{$($}}
\put(20,-52){\makebox{$\frac{1}{2m(a)}g'(a) + \overline{\gk_a}g(a) = \ga_ag^a_+(0)$}}
\put(100,-20){\makebox{$-i\frac{d}{dx}g^a_+$}}
\put(0,-30){\line(0,1){60.0}} 
\put(10,0){\makebox{$l(g)$}}
\end{picture}
\end{center}

\vspace{60pt}
\noindent
We define another self-adjoint operator $H_0$ by setting 
$\ga_b = \ga_a = 0$. In this case we get
\bed
\dom(H_0) := \left\{\vec{g} \in \sH: 
\ba{l}
g_\pm \in W^{1,2}(\dR_\pm,\dC^2),\\
g,\frac{1}{m}g' \in W^{1,2}((a,b)),\\  
K^a_-g_a + K^b_-g_b = 0,\\
K^a_+g_a + K^b_+g_b = 0,\\
g_-(0) = g_+(0)
\ea
\right\}
\eed
and
\bed
H_0\vec{g} := -i\frac{d}{dx}g_- \oplus l(g) \oplus -i\frac{d}{dx}g_+, 
\qquad \vec{g} \in \dom(H_0),
\eed
Setting $\sD = \sD_- \oplus \sD_+ = L^2(\dR,\dC^2)$ we obtain
\bed
\sH = \sD \oplus \sK
\eed
and
\bed
H_0 = T \oplus K_0
\eed
where $T$ is the momentum operator given by $\dom(T) := W^{1,2}(\dR,\dC^2)$
\bed
(Tf)(x) := -i\frac{d}{dx}f(x), \quad f \in \dom(T),
\eed
and $K_0$ is defined by 
\bed
\dom(K_0) := \left\{\vec{g} \in \sH:
\begin{matrix}
\frac{1}{m}g' \in W^{1,2}((a,b))\\
(\frac{1}{2m}g)(b) = q_bg(b)\\
(\frac{1}{2m}g)(a) = - q_ag(a)\\
\end{matrix}
\right\}
\eed
Since the operator $K_0$ is discrete one gets $H^{ac}_0 = T$ and
$\sH^{ac}(H_0) = L^2(\dR,\dC^2)$. One easily checks that the
resolvent difference is a trace class operator. This is due to the
fact that both operators $H$ and $H_0$ are self-adjoint extensions of
the symmetric operator $\wt H$,
\bed
\dom(\wt H) :=
\left\{\vec{g} \in \sH:
\begin{matrix} 
g_\pm \in W^{1,2}(\dR_\pm,\dC^2)\\
g,\frac{1}{m}g' \in  W^{1,2}((a,b))\\
g_a = g_b = 0\\
g_\pm(0) = 0
\end{matrix}
\right\},
\eed
which has finite deficiency indices. Hence $\cS = \{H,H_0\}$ is trace
class scattering system. In particular, the wave operators
$W_\pm(H,H_0)$ exist and are complete.

One easily checks that $\Pi(H^{ac}_0) = \{L^2(\dR,d\gl,\dC^2),M,\cF\}$
where $M$ is the multiplication operator induced by the
independent variable $\gl$ and $\cF$ denotes the Fourier transform
\bed
(\cF f)(\gl)  = \frac{1}{2\pi}\int_\dR e^{-i\gl x} f(x)dx, \qquad f
\in L^2(\dR,dx,\dC^2).
\eed
It is known that the scattering operator $S(H,H_0) =
W_+(H,H_0)^*W_-(H,H_0)$ is unitarily equivalent to the multiplication
operator $M_{\gT^*}$ induced by the measurable family
$\{\gT(\gl)^*\}_{\gl\in\dR}$ in $L^2(\dR,d\gl,\dC^2)$ where 
\bed
\gT(\gl) = \lim_{\eta\to +0}\gT(\gl -
i\eta) =
\begin{pmatrix}
1 & 0\\
0 & 1
\end{pmatrix} +
i\frac{1}{W_*(\gl)}
\begin{pmatrix}
\ga^2_bv_{*a}(b,\gl) & -\ga_b\ga_a\\
-\ga_b\ga_a & \ga^2_av_{*b}(a,\gl)
\end{pmatrix}
\eed
which exist  and is contractive for $\gl \in \dR$.
Setting
\bed
\gth_b(\gl) := W(\gl) - i\ga^2_bv_a(b,\gl) 
\quad \mbox{and} \quad 
\gth_a(\gl) := W(\gl) - i\ga^2_av_b(a,\gl),
\eed
$\gl \in \dR$, we find the representation
\bed
\gT(\gl) = \frac{1}{\overline{W(\gl)}}
\begin{pmatrix}
\overline{\gth_b(\gl)} & -i\ga_b\ga_a\\
-i\ga_b\ga_a & \overline{\gth_a(\gl)}
\end{pmatrix}
\eed
and
\be\la{4.42}
\gT(\gl)^* = \frac{1}{W(\gl)}
\begin{pmatrix}
\gth_b(\gl) & i\ga_b\ga_a\\
i\ga_b\ga_a & \gth_a(\gl)
\end{pmatrix}.
\ee
Since $\gT(\gl)^*\gT(\gl) = I_{\dC^2}$ for $\gl \in \dR$ we obtain
\be\la{4.43}
1 = |\gth_b(\gl)|^2 + \ga^2_b\ga^2_a = |\gth_a(\gl)|^2 + \ga^2_b\ga^2_a
\quad \mbox{and} \quad
\gth_a(\gl) = \overline{\gth_b(\gl)}
\ee
for $\gl \in \dR$.

Let $\rho$ be a steady state for $H_0$. Obviously, the steady
state is unitarily equivalent to the multiplication $M_\rho$ induced
by a measurable family $\{\rho(\gl)\}_{\gl\in\dR}$ of non-negative
bounded self-adjoint operators acting in $\dC^2$. We use the representation
\be\la{4.44}
\rho(\gl) = 
\begin{pmatrix}
\rho_b (\gl) & \overline{\tau(\gl)}\\
\tau(\gl) & \rho_a(\gl)
\end{pmatrix} \ge 0, \quad \gl \in \dR.
\ee
Notice that $\rho(\gl) \ge 0$ if and only if the conditions
$\rho_b(\gl) \ge 0$, $\rho_a(\gl) \ge 0$ and
\bed
|\tau(\gl)|^2 \le \rho_b(\gl)\rho_a(\gl)
\eed
is satisfied for a.e. $\gl \in \dR$. Moreover, $\rho$ and $(I + H^2_0)\rho$ are bounded operators 
if and only the conditions
\bed
\ess-sup_{\gl\in\dR}\left\{\rho_b(\gl) + \rho_a(\gl) + |\tau(\gl)|\right\} <
  \infty.
\eed
and 
\be\la{4.45}
\ess-sup_{\gl\in\dR}(1+\gl^2)\left\{\rho_b(\gl) + \rho_a(\gl) + |\tau(\gl)|\right\} <
  \infty.
\ee
are satisfied, respectively. 

In \cite{KNR3} the current related to the self-adjoint operator
$H$ was calculated in accordance with \cite{LF1989}. To this end
the generalized incoming eigenfunctions $\psi(x,\gl,a)$ and
$\psi(x,\gl,b)$, $x \in \gO$, $\gga \in\{a,b\}$, $\gl \in \dR$ of $H$ were computed 
and the current $j_\rho(x,\gl)$ was defined by 
\bed
\begin{split}
j_\rho(x,\gl) := &
\mu_b(\gl)\imag\left(\frac{1}{m(x)}\overline{\psi(x,\gl,b)}m(x)\psi'(x,\gl,b)\right)
+\\
&\mu_a(\gl)\imag\left(\frac{1}{m(x)}\overline{\psi(x,\gl,a)}m(x)\psi'(x,\gl,a)\right)  
\end{split}
\eed
for $x \in \gO$, $\gl \in \dR$, where $\mu_b(\gl)$ and $\mu_a(\gl)$
are the eigenvalues of $\rho(\gl)$.  It turns out
that $j_\rho(x,\gl)$ is independent from $x$, that is $j_\rho(\gl) :=
j_\rho(x,\gl)$, and admits the representation
\bed
j_\rho(\gl) = \tr(\rho(\gl)C(\gl)), \quad \gl \in \dR
\eed 
where
\bed
C(\gl) := -\frac{1}{2\pi i}\frac{\ga_b\ga_a}{\overline{W(\gl)}}E\gT(\gl)^*, \quad \gl \in \dR,
\eed
and
\bed
E := 
\begin{pmatrix}
0 & 1 \\
-1 & 0
\end{pmatrix},
\eed
cf. Proposition 4.1 of \cite{KNR3}. If $\tr(\rho(\gl)) \in L^1(\dR,d\gl)$, then the full current $j_\rho$ is given by
\bed
j_\rho = \int_\dR j_\rho(\gl) d\gl
\eed
cf. Proposition 4.1 of \cite{KNR3}. Using \eqref{4.42} and \eqref{4.44} we find
\be\la{4.48}
j_\rho = \frac{1}{2\pi}\int_\dR \frac{-\ga^2_b\ga^2_a(\rho_b(\gl) - \rho_a(\gl)) + i\ga_b\ga_b(\tau(\gl)\gth_a(\gl) - \overline{\tau(\gl)}\gth_b(\gl))}{|W(\gl)|^2}\;d\gl.
\ee

Let us calculate the current in accordance with Theorem \ref{thm:LandBuett}. 
To define charges we note that $\sD$ admits the decomposition 
\bed
\sD = 
\begin{matrix}
\sD_b\\
 \oplus\\
\sD_a
\end{matrix}.
\eed
By $Q_b$ and $Q_a$ we denote the projections form $\sD$ onto $\sD_b$
and $\sD_a$, respectively. The  operators $Q_b$ and $Q_a$ commute with
$H_0$ and can be regarded as charges. The charge matrices are given
\bed
Q_b(\gl) = 
\begin{pmatrix}
1 & 0\\
0 & 0
\end{pmatrix}
\quad \mbox{and} \quad
Q_a(\gl) = 
\begin{pmatrix}
0 & 0\\
0 & 1
\end{pmatrix}, \quad \gl \in \dR.
\eed
Applying Theorem \ref{thm:LandBuett} we find
\bed
J^\cS_{\rho,Q_a} = \frac{1}{2\pi}\int_\dR 
\tr\left(\rho(\gl)\left(Q_a(\gl) - \gT(\gl)Q_a(\gl)\gT(\gl)^*\right)\right) d\gl.
\eed
A straightforward computation shows that
\bed
Q_a(\gl) - \gT(\gl)Q_a(\gl)\gT(\gl)^* =
\frac{1}{|W(\gl)|^2}
\begin{pmatrix}
-\ga^2_b\ga^2_a & i\ga_b\ga_a\gth_a(\gl)\\
-i\ga_b\ga_a\overline{\gth_a(\gl)} & \ga^2_b\ga^2_a
\end{pmatrix}.
\eed
Taking into account \eqref{4.44} we obtain
\bed
\begin{split}
\tr&(\rho(\gl)(Q_a(\gl) - \gT(\gl)Q_a(\gl)\gT(\gl)^*)) =\\
& \frac{1}{|W(\gl)|^2}\left(-\ga^2_b\ga^2_a(\rho_b(\gl) - \rho_a(\gl)) + i\ga_a\ga_b(\tau(\gl)\gth_a(\gl) - \overline{\tau(\gl)}\,\overline{\gth_a(\gl)})\right) 
\end{split}
\eed
which yields
\bed
J^\cS_{\rho,Q_a} = \frac{1}{2\pi}
\int_\dR\frac{-\ga^2_b\ga^2_a(\rho_b(\gl) - \rho_a(\gl)) + i\ga_a\ga_b(\tau(\gl)\gth_a(\gl) - \overline{\tau(\gl)}\,\overline{\gth_a(\gl)})}{|W(\gl)|^2}\,d\gl.
\eed
Using \eqref{4.43} we immediately get from 
\eqref{4.48} that $J^\cS_{\rho,Q_a} = j_\rho$. Comparing with \cite{KNR3} the proof is much shorter. Moreover, from
Proposition 4.1 of \cite{KNR3} we get that
\bed
|J^\cS_{\rho,Q_a}| \le \frac{1}{2\pi}\int_\dR \tr(\rho(\gl))\,d\gl = \frac{1}{2\pi}\int_\dR(\rho_b(\gl) + \rho_a(\gl))\,d\gl
\eed
By \eqref{4.45} the last integral exists.

\subsection{Landauer-B\"uttiker formula for a
      pseudo-relativistic system}

We consider the Hilbert space $L^2(\dR,\dC^2)$ and the symmetric Dirac operator
\bed
(A\vec{f})(x) = 
\begin{pmatrix}
0 & -1\\
1 & 0
\end{pmatrix}
\frac{d}{dx}\vec{f}(x) + 
\begin{pmatrix}
a & 0 \\
0 & -a
\end{pmatrix}\vec{f}(x), \;\; \vec{f} \in \dom(A), \; x \in \dR,
\eed
where $a > 0$ and 
\bed
\dom(A) := \{\vec{f} \in W^{1,2}(\dR,\dC^2): \vec{f}(0) = 0\}
\eed
and
\bed
\vec{f} =
\begin{pmatrix}
f_1\\
f_2
\end{pmatrix}, \quad f_1,f_2 \in L^2(\dR,dx).
\eed
The deficiency indices $n_\pm(A)$ are equal two. The operator $A$ is completely non-self-adjoint.
The domain of the adjoint operator is given by 
\bed
\dom(A^*) = W^{1,2}(\dR_-,\dC^2) \oplus W^{1,2}(\dR_+,\dC^2).
\eed
Its Weyl function $M(z)$ was calculated in \cite{BMN2002}. One has
\bed
M(z) =
\begin{pmatrix}
i\frac{\sqrt{z + a}}{\sqrt{z-a}} & 0\\
0 & i\frac{\sqrt{z - a}}{\sqrt{z+a}}
\end{pmatrix}, \qquad z \in \dC_+,
\eed
where the cut  of the square root $\sqrt{\cdot}$ is fixed along the non-negative real axis.
We define a self-adjoint extension $H_0$ of $A$ by $H_0 = A^*\upharpoonright\dom(H_0)$,
\bed
\dom(H_0) = \{\vec{f}\in \dom(A^*): f_2(-0) = 0, \quad f_1(+0) = 0\}.
\eed
The operator $H_0$ is self-adjoint and absolutely continuous. Its
spectrum is given by
$\gs(H_0) = \gs_{ac}(H_0) = \dR \setminus (-a,a)$. It is not hard to
see that the $H_0$ has the form
\bed
H_0 = H_- \oplus H_+
\eed
where $H_\pm$ are are self-adjoint operators in $L^2(\dR_\pm,\dC^2)$,
respectively. A straightforward computation
shows that the operator $H_-$ and $H_+$ are unitarily equivalent
to the operator $K_-$,
\bea
(K_-f)(x) & := & i\frac{d}{dx}f(x) - a f(-x), \quad f \in \dom(K_-),\\
\dom(K_-) & := & \{W^{1,2}(\dR_-)\oplus W^{1,2}(\dR_+): f(-0) = -f(+0)\},
\eea
and $K_+$,
\bed
(K_+f)(x) :=  i\frac{d}{dx}f(x) - a f(-x), \quad f \in \dom(K_+) :=  W^{1,2}(\dR),
\eed
defined in $L^2(\dR)$, respectively.

The limit $M(\gl) := \lim_{y\to+0}M(\gl + iy)$ exist for every point
$\gl \in \dR \setminus \{-a,a\}$. One has
\bed
M(\gl) =
\begin{pmatrix}
i\frac{\sqrt{\gl + a}}{\sqrt{\gl-a}} & 0\\
0 & i\frac{\sqrt{\gl-a}}{\sqrt{\gl+a}}
\end{pmatrix}, \quad \gl \in \dR \setminus \{-a,a\}.
\eed
Hence
\bed
\imag(M(\gl)) =
\begin{pmatrix}
\sqrt{\frac{\gl + a}{\gl-a}} & 0\\
0 & \sqrt{\frac{\gl-a}{\gl+a}}
\end{pmatrix}, \quad \gl \in \dR \setminus [-a,a],
\eed
and $\imag(M(\gl)) = 0$ for $\gl \in (-a,a)$. We set
$\sh(\gl) := \ran(\imag(M(\gl)))$, $\gl \in \dR \setminus \{-a,a\}$.
Obviously, we get
\bed
\sh(\gl) =
\begin{cases} 
\dC^2 & \gl \in \dR \setminus [-a,a]\\
0 & \gl \in (-a,a).
\end{cases}
\eed
We consider the direct integral $L^2(\dR,d\gl,\sh(\gl))$. I turns out
that there is an isometry $\Phi$ acting from $\sH$ onto
$L^2(\dR,d\gl,\sh(\gl))$
such that the triplet $\Pi(H_0) = \{L^2(\dR,d\gl,\sh(\gl)),\cM,\Phi\}$ is a
spectral representation of $H_0$. 

Another self-adjoint extension $H$ of $A$ is defined
by choosing a self-adjoint operator $B$,
\bed
B =
\begin{pmatrix}
b_- & \overline{r}\\
r & b_+
\end{pmatrix}, \quad b_-,b_+ \in \dR,\quad r \in \dC,
\eed
acting on $\dC^2$ and setting
\bed
\dom(H) := \left\{\vec{f} \in \dom(A^*):
\ba{rcl}
f_1(-0) & = & b_-f_2(-0) + \overline{r}f_1(+0)\\
f_2(+0) & = & rf_2(-0) + b_+f_1(+0)
\ea
\right\}
\eed
The self-adjoint extension $H$ can be regarded as the Hamiltonian of
some point interaction at zero.
Since the deficiency indices of $A$ are finite
the resolvent difference of $H$ and $H_0$ is trace class operator. 

We consider the trace class scattering system $\cS = \{H,H_0\}$.
Following \cite{BMN2008} the scattering
matrix $\{S(\gl)\}_{\gl\in\dR}$ admits the representation
\bed
S(\gl) = I_{\sh(\gl)} + 2i\sqrt{\imag(M(\gl))}(B -
M(\gl))^{-1}\sqrt{\imag(M(\gl))}, 
\eed
$\gl \in \dR \setminus [-a,a]$. We find
\bed
(B - M(\gl))^{-1} = \frac{1}{\det(B-M(\gl))}
\begin{pmatrix}
b_+ - i\frac{\sqrt{\gl-a}}{\sqrt{\gl+a}} & -\overline{r}\\
-r & b_- - i\frac{\sqrt{\gl + a}}{\sqrt{\gl-a}}
\end{pmatrix}
\eed
for $\gl \in \dR \setminus [-a,a]$. The transition matrix
$\{T(\gl)\}_{\gl\in\dR}$ is defined $T(\gl) := S(\gl) - I_{\sh(\gl)}$,
$\gl \in \dR \setminus [-a,a]$, which yields
\bed
T(\gl) = 2i\sqrt{\imag(M(\gl))}(B -
M(\gl))^{-1}\sqrt{\imag(M(\gl))}, \quad \gl \in \dR \setminus [-a,a].
\eed
Using the representation
\bed
T(\gl) = 
\begin{pmatrix}
t_{--}(\gl) & t_{-+}(\gl)\\
t_{+-}(\gl) & t_{++}(\gl)
\end{pmatrix}
\eed
we find
\bead
t_{--}(\gl) & = & \frac{2i}{\det(B - M(\gl))}\left(b_+\frac{\sqrt{\gl+a}}{\sqrt{\gl-a}} -i\right)\\
t_{-+}(\gl) & = & -\overline{r}\frac{2i}{\det(B - M(\gl))}\\
t_{+-}(\gl) & = & -r\frac{2i}{\det(B - M(\gl))}\\
t_{++}(\gl) & = & \frac{2i}{\det(B - M(\gl))}\left(b_-\frac{\sqrt{\gl - a}}{\sqrt{\gl+a}} - i\right)
\eead
We set
\bed
\gs(\gl) := |t_{-+}(\gl)|^2 = |t_{+-}(\gl)|^2 = \frac{4|r|^2}{|\det(B
  - M(\gl))|^2}, \quad \gl \in \dR
\setminus [-a,a],
\eed
which is the cross section between the left- and right-hand scattering
channels. Since $\|T(\gl)\|_{\cB(\dC^2)} \le 2$, $\gl \in \dR
\setminus [-a,a]$, we find $\gs(\gl) \le 2$, $\gl \in \dR \setminus
[-a,a]$, which yields
\bed
\frac{2|r|^2}{|\det(B
  - M(\gl))|^2} \le 1, \quad \gl \in \dR \setminus
[-a,a].
\eed

Let $Q_\pm$ be the orthogonal projection from $L^2(\dR,\dC^2)$ onto
$L^2(\dR_\pm,\dC^2)$. Obviously, $Q_\pm$ commute with $H_0$. With
respect to the spectral representation the charges $Q_\pm$ correspond
to
\bed
Q_-(\gl) =
\begin{pmatrix}
1 & 0\\
0 & 0
\end{pmatrix}
\quad \mbox{and} \quad
Q_+(\gl) =
\begin{pmatrix}
0 & 0\\
0 & 1
\end{pmatrix},
\quad \gl \in \dR \setminus [-a,a].
\eed
If the steady state $\rho$ is chosen as
\bed
\rho = \rho_- \oplus \rho_+,
\eed
then the corresponding charge matrices are given by
\bed
\rho(\gl) =
\begin{pmatrix}
\rho_-(\gl) & 0 \\
0 & \rho_+(\gl)
\end{pmatrix},
\quad \gl \in \dR \setminus [-a,a].
\eed
where $\rho_\pm(\gl)$ are non-negative bounded Borel functions on $\dR
\setminus [-a,a]$. The operator $(I + H^2_0)\rho$ is bounded if and
only if $\ess-sup_{\gl\in \dR \setminus [-a,a]}(1 + \gl^2)\rho_\pm(\gl) <
\infty$. Applying Theorem \ref{thm:LandBuett} we find that the current
$J^\cS_{\rho, Q_-}(|r|)$ is given by
\begin{align*}
J^\cS_{\rho, Q_-}(|r|) & = \frac{1}{2\pi}\int_{\dR\setminus
  [-a,a]}(\rho_-(\gl) - \rho_+(\gl))\gs(\gl) d\gl\\
& = \frac{2|r|^2}{\pi}\int_{\dR\setminus
  [-a,a]}\frac{\rho_-(\gl) - \rho_+(\gl)}{|\det(B-M(\gl))|^2}d\gl
\end{align*}
A very simple case arises if we set $b_\pm = 0$. In this case we have
\bed
J^\cS_{\rho, Q_-}(|r|) = \frac{2|r|^2}{(1 + |r|^2)^2\pi}\int_{\dR\setminus
  [-a,a]}(\rho_-(\gl) - \rho_+(\gl))d\gl.
\eed
The magnitude of the current becomes maximal in this case if $|r| =1$,
that is, if 
\bed
J^\cS_{\rho, Q_-}(1) = \frac{1}{2\pi}\int_{\dR \setminus
  [-a,a]}(\rho_-(\gl) - \rho_+(\gl))d\gl.
\eed
Since $\gs(\gl) \le 2$ we find the estimate
\bed
|J^\cS_{\rho, Q_-}(|r|)| \le \frac{1}{\pi}\int_{\dR \setminus
  [-a,a]}(\rho_+(\gl) + \rho_-(\gl))d\gl.
\eed
Obviously $J^\cS_{\rho,Q_-}(0) = 0$ which is natural. 
In this case the self-adjoint operator $H$ decomposes into a left and
right hand side extension which have nothing to do with each other. 
However, one also
has $\lim_{|r|\to\infty}J^\cS_{\rho, Q_-}(|r|) = 0$.

For electrons one has to choose 
\bed
\rho_\pm(\gl) := \rho_{FD}(\gl - \mu_\pm), \quad \gl \in \dR,
\eed
where $\mu_\pm$ is the so-called Fermi energy and $\rho_{FD}(\gl)$
is the Fermi-Dirac distribution 
\bed
\rho_{FD}(\gl) = (1 + e^{\gb\gl})^{-1}, \quad \gl \in \dR, \quad \gb > 0. 
\eed
Obviously, the condition
$\ess-sup_{\dR\setminus[-a,a]}(1+\gl^2)\rho_{\pm}(\gl) < \infty$ is not satisfied. 
However, it turns out that
\bed
\rho_-(\gl) - \rho_+(\gl)  = e^{\gb\gl}(e^{-\gb\mu_+} -
e^{-\gb\mu_-})\rho_-(\gl)\rho_+(\gl), \quad \gl \in \dR.
\eed
satisfies $\ess-sup_{\dR \setminus [-a,a]}(1+\gl^2)|\rho_-(\gl) -
\rho_+(\gl)| < \infty$ which shows that the current
$J^\cS_{\rho,Q_-}$ is well defined.

\section*{Appendix: Spectral representations}

\setcounter{section}{0}
\renewcommand*\thesection{\Alph{section}}

\section{Spectral representation for unitary operators}\label{A}

Let $\sk$ be a separable Hilbert space and let $\mu$ a
Borel measure on the unit circle $\dT$. We consider the Hilbert space
$L^2(\dT,d\mu,\sk)$ and the multiplication operator $\cZ$ defined by
\bed
(\cZ \wh f)(\zeta) = \zeta \wh f(\zeta), \quad \wh f \in L^2(\dT,d\mu,\sk).
\eed 
Let $\{P(\zeta)\}_{\zeta\in\dT}$ be a measurable family of orthogonal projections in $\sk$. Setting
\be\label{A.168}
(P\wh f)(\zeta) = P(\zeta)\wh f(\zeta), \quad \wh f \in L^2(\dT,d\mu,\sk),
\ee
one defines orthogonal projection on $L^2(\dT,d\mu,\sk)$. The subspace $PL^2(\dT,d\mu,\sk)$ 
is denoted by $L^2(\dT,d\mu(\zeta),\sk(\zeta))$ where $\sk(\zeta) := P(\zeta)\sk$ 
in the following and is called a direct integral of Hilbert spaces $\{\sk(\zeta)\}_{\zeta\in\dT}$,
cf.\cite{BS1987}. We recall if an orthogonal projection on 
$L^2(\dT,d\mu,\sk)$ commutes with $\cZ$, then there is a measurable
family $\{P(\zeta)\}_{\zeta\in\dT}$ of orthogonal projections such
that $P$ is given by \eqref{A.168}.

For any unitary operator $U$  there is a
separable Hilbert space $\sk$ and a Borel measure $\mu$ on
$\dT$ such that $U$ is unitarily equivalent to a part of $\cZ$. That
means, there is an isometry $\Psi: \sH \longrightarrow
L^2(\dT,d\mu,\sk)$ such that
\bed
\Psi U = \cZ \Psi.
\eed
The operator $P = \Psi\Psi^*$ is an orthogonal projection on
$L^2(\dT,d\mu,\sk)$ commuting with $\cZ$. Hence there is a family of measurable 
orthogonal projections $\{P(\zeta)\}_{\zeta \in \dT}$ such that $P$ is given by \eqref{A.168}.
Notice that $\Psi$ is an isometry
acting from $\sH$ onto $L^2(\dT,d\mu,\sk)$. The multiplication operator $M :=
\cZ\upharpoonright L^2(\dT,d\mu(\zeta),\sk(\zeta))$, 
\bed
(M f)(\zeta) = \zeta f(\zeta), \quad f \in L^2(\dT,d\mu(\zeta),\sk(\zeta)),
\eed
is unitarily equivalent to $U$. The triplet $\Pi(U) = \{L^2(\dT,d\mu(\zeta),\sk(\zeta)),M,\Psi\}$
is called a spectral representation of $U$.

The existence of a spectral representation can be proved as
follows. Let $\mu(\cdot)$ be a scalar measure defined on $\sB(\dT)$
such that the spectral measure $E(\cdot)$ of $U$, 
\bed
U = \int_\dT \zeta dE(\zeta),
\eed
is absolutely continuous with respect to $\mu(\cdot)$. Such a measure
$\mu$ always exists. Indeed, let $C = C^*$ be a Hilbert-Schmidt operator such that
$\sH = \cH_C(U) := \clospa\{E(\gd)\ran(C): \gd \in \cB(\dT)\}$
where $E(\cdot)$ is the spectral measure of $U$. We set
\bed
\mu(\gd) := \tr(CE(\gd)C), \quad \gd \in \sB(\dT).
\eed
Obviously, the spectral measure $E(\cdot)$ is absolutely continuous
with respect to $\mu(\cdot)$. In fact, both measures are equivalent. 

Moreover, the operator-valued measure
$\gS(\gd) := CE(\gd)C$, $\gd \in \sB(\dT)$, is absolutely continuous
with respect to $\mu(\cdot)$ and takes values in $\sL_1(\sH)$. Since
$\sL_1(\sH)$ has the Radon-Nikodym property $\gS(\cdot)$ admits a Radon-Nikodym
derivative $\gY(\cdot)$ of $\gS(\cdot)$ exists with respect to $\mu(\cdot)$,
belongs to $\gY(\zeta) \in \sL_1(\sH)$ for a.e. $\zeta \in \dT$  and satisfies
$\gY(\zeta) \ge 0$ for a.e. $\zeta \in \dT$ with respect to $\mu$. Hence we have
\bed
\gS(\gd) =  \int_\gd \gY(\zeta)d\mu(\zeta)
\eed
for any Borel set $\gd \in \sB(\dT)$. We set $\sk(\zeta) :=
\ran(\gY(\zeta)) \subseteq \sk$, $\zeta \in \dT$, which defines a measurable family of
subspaces of $\sk := \ran(C)$. That means, the corresponding family of
orthogonal projections from $\sk$ onto $\sk(\zeta)$ is measurable with
respect to $\mu(\cdot)$. 
\bl	\label{A.I}
Let $L^2(\dT,d\mu(\zeta),\sk(\zeta))$ and $\gY(\zeta)$ be as above.
Further, let $\Psi$ be the linear extension of the mapping
\bed
\big(\Psi E(\gd) C f\big)(\zeta) = \chi_\gd(\zeta)\sqrt{\gY(\zeta)}f, \quad \zeta \in \dT, \quad f\in \sH.
\eed
If $\sH = \cH_C(U)$, 
then $\Pi(U) = \{L^2(\dT,d\mu(\zeta),\sk(\zeta)),M,\Psi\}$ is a spectral representation of $U$.
\el
\begin{proof}
Obviously, we have
\bed
\|\Psi E(\gd) C f\|^2_{L^2(\dT,d\mu(\zeta),\sk)} =
\int_\gd\|\sqrt{\gY(\zeta)}f\|^2_\sk d\mu(\zeta) = (\gS(\gd)f,f), \quad
f \in \sH.
\eed
Hence $\Psi$ is an isometry action from $\sH_C(U)$ into
$L^2(\dT,d\mu(\zeta),\sk)$ with range
$L^2(\dT,d\mu(\zeta),\sk(\zeta))$. Since $\sH = \sH_C(U)$ one gets an
isometry acting from $\sH$ onto
$L^2(\dT,d\mu(\zeta),\sk(\zeta))$. Moreover, by
\bed
(\Psi \int_\dT U dE(\zeta) C f)(\zeta)  = \zeta \sqrt{\gY(\zeta)}f,
\quad \zeta \in \dT,  \quad f\in \sH,
\eed
we get $\Psi U = \cZ\Psi$.
\end{proof}

The integer function $N_U : \dT
\longrightarrow \overline{\dN}_0 := \{0,1,2,\ldots,\infty\}$, $N_U(\zeta) :=
\dim(\sk(\zeta))$, is called the spectral
multiplicity function of $U$. We note that the family
$\{\sk(\zeta)\}_{\zeta \in \dT}$ and the spectral multiplicity
function $N_U$ are defined only a.e. with respect to $\mu$.
Furthermore,  it can happen that $\sk(\zeta) = \{0\}$ for $\zeta
\in \dT$ which yields $N_U(\zeta) = 0$. We set $\supp(N_U) :=
\{\zeta \in \dT: N_U(\zeta) > 0\}$ and introduce 
the measure $\mu_U := \chi_{\supp(N_U)}\mu$ which is absolutely
continuous with respect to $\mu$. 

Let $U$ and $\wt U$ be unitary
operators and let  $\Pi(U) =
\{L^2(\dT,d\mu,\sk(\zeta)),M,\Psi\}$ and 
$\wt \Pi(\wt U) = \{L^2(\dT,d\wt \mu(\zeta),\wt \sk(\zeta)),\wt M,\wt
\Psi\}$ be spectral representations, respectively. The operators $\wt U$ and $U$ are unitary
equivalent if and only if $\wt \mu_{\wt U}$ and $\mu_U$
are equivalent and $N_{\wt U}(\zeta) = N_U(\zeta)$ a.e. with respect
to $\mu_U$. The unitary operator $U$ is called of
constant spectral multiplicity $k \in \overline{\dN} := \{1,2,\ldots,\infty\}$ if $N_U(\zeta) = k$ a.e. with
respect to $\mu_U$.

\section{Spectral representation for $U^{ac}$}\label{B}

In the paper we mainly need a spectral representation of the
absolutely continuous part $U^{ac}$ of a unitary operator $U$. In this
case we choose $\mu =\nu$ where $\nu$ is the Haar measure on $\dT$. 
In this case the construction above simplifies as follows:

As above, let $C =C^* \in \sL_2(\sH)$ be a Hilbert-Schmidt operator on $\sH$.
Since $C \in \sL_2(\sH)$ we define by $\gS^{ac} := C
E_0^\textnormal{ac}(\cdot) C$ a $\sL_1$--valued measure on $\dT$
which is absolutely continuous with respect to the Haar measure $\nu$
on $\dT$. Its Radon-Nikodym derivative is denoted by $Y(\cdot)$.

Let us define a measurable family of subspaces 
by $\sh(\zeta)$ by setting $\sh(\zeta) := \clo \big\{\ran\big(Y(\zeta)\big)\big\} \subseteq
\sh$ in $\sh = \clo(\ran(C))$. With this family we associate the direct integral
$L^2(\dT,d\nu(\zeta),\sh(\zeta))$.
\bl\label{lem:3.0}
Let  $L^2(\dT,d\nu(\zeta),\sh(\zeta))$ and $Y(\zeta)$ as above. 
Further let $\Phi$ be the linear extension of the mapping
\bed
\big(\Phi E^\textnormal{ac}(\zeta) C f\big)(\zeta) = \chi_\gd(\zeta)\sqrt{Y(\zeta)}f, \quad \zeta \in \dT, \quad f\in \sH.
\eed
If the condition $\sH^{ac}(U) = \cH^{ac}_C :=
\clospa\{E^{ac}(\gd)\ran(C): \gd \in \cB(\dT) \}$ is satisfied, then 
$\Pi(U^{ac}) := \{L^2(\dT,d\nu(\zeta),\sh(\zeta)),M,\Phi\}$ defines a spectral representation of $U^{ac}$.
\el
The proof is similar to that one of Lemma \ref{A.I}. 
If the condition $\sH^{ac}_C =\cH^{ac}(U)$ is not satisfied, 
then $\Pi(U^{ac}) = \{L^2(\dT,d\nu(\zeta),\sh(\zeta)),M,\Phi\}$
is not a spectral representation of $U^{ac}$ but of 
$U^{ac}_C := U\upharpoonright\cH^{ac}_C$. Notice that $\sH^{ac}_C \subseteq \sH^{ac}(U)$
reduces $U^{ac}$. 

The following Lemma describes the action of the transformation 
$\Phi$ and is also valid for this extension of the spectral representation of Lemma \ref{lem:3.0}.
\bl\label{lem:3.1}
Let $X : \dT \rightarrow \sB(\sH)$ be strongly continuous. If the operator spectral integral
\bed	
L f = \int_{\dT}\de E^\textnormal{ac}(\zeta) C X(\zeta) f, \qquad f \in \sH,
\eed
exists, then
\be\label{eq:3.1}
(\Phi L f)(\zeta) = \sqrt{Y(\zeta)}X(\zeta)f, \qquad \zeta \in \dT, \qquad f \in \sH,
\ee
holds. Furthermore,
\bed
L^\ast f := \int_{\dT}X^\ast(\zeta)C \de E_0^\textnormal{ac}(\zeta)f
\eed
and
\be\label{eq:3.2}
L^\ast \Phi^\ast \widehat{f} = \int_\dT \de \lambda X^\ast(\zeta)\sqrt{Y(\zeta)}\widehat{f}(\zeta), 
\qquad f = \Phi^\ast \widehat{f} \in \sH^\textnormal{ac}.
\ee
\el
\begin{proof}
Let $\cJ_\epsilon$, $\epsilon > 0$, be a family of partitions of 
$\dT$ such that $\sup\limits_{\Xi \in \cJ_\epsilon} |\Xi| =
\epsilon$. Let further $\zeta_\epsilon : \cJ_\epsilon \rightarrow \dT$ 
satisfy $\zeta_\epsilon(\Xi) \in \Xi$ for all $\Xi \in \cJ_\epsilon$. Then for
\bed
Lf := \int_\dT\de E^\textnormal{ac}(\zeta) C X(\zeta) f, \qquad f \in \sH,
\eed
we have
\bed
Lf = \lim\limits_{\epsilon \rightarrow 0}
\sum\limits_{\Xi\in\cJ_\epsilon} E^\textnormal{ac}(\Xi)C X(\zeta_\epsilon(\Xi))f.
\eed
by definition. Since $\Phi_0$ is continuous and $\ran(L) \subset \cH(C)$, we have
\begin{align}
(\Phi Lf)(\lambda) &= \lim\limits_{\epsilon \rightarrow 0}
        	\sum\limits_{\Xi\in\cJ_\epsilon} \big(\Phi
                E^\textnormal{ac}(\Xi) C
                X(\lambda_\epsilon(\Xi))f\big)(\lambda)
\nonumber\\
&= \lim\limits_{\epsilon \rightarrow 0}	\sum\limits_{\Xi\in\cJ_\epsilon}\chi_{\Xi}(\lambda)\sqrt{Y(\lambda)} X(\lambda_\epsilon(\Xi))f
\nonumber
\end{align}
for a.e. $\zeta \in \dT$. Now let $\Xi_\epsilon(\lambda)$ be the
unique element in $\cJ_\epsilon$ for which $\lambda \in \Xi_\epsilon(\lambda)$. Since $X$ is continuous, we obtain
\bed
(\Phi L f)(\lambda) = \lim\limits_{\epsilon \rightarrow 0}\sqrt{Y(\lambda)}
        	X(\lambda_\epsilon(\Xi_\epsilon(\lambda)))f = \sqrt{Y(\lambda)} X(\lambda)f.
\eed
The adjoint relation \eqref{eq:3.2} follows easily from
\bead
\lefteqn{
\Big\langle g, \int_\dT X(\zeta) C \de E^\textnormal{ac}(\zeta)f \Big\rangle 
        	= \Big\langle \int_\dT \de E^\textnormal{ac}(\zeta) C
                X^\ast(\zeta)g, f \Big\rangle} = \\
& &
\int_\dT \de \lambda \big\langle \sqrt{Y(\zeta)}
        X^\ast(\zeta)g, (\Phi f)(\zeta) \big\rangle
= \Big\langle  g,  \int_\dT \de \zeta X(\zeta)\sqrt{Y(\zeta)} (\Phi f)(\zeta) \Big\rangle
\eead
for  all $g \in \sH$.
\end{proof}

\section{Spectral representation for $H^{ac}$}\label{C}

Let $H$ be a self-adjoint operator on the separable Hilbert space $\sH$. We introduce its Cayley transform
\bed
U := (i - H)(i + H)^{-1}.
\eed
Obviously, we have
\bed
E_U(\gd) = E_H(\gd'), \quad \gd \in \cB(\dT),\quad \gd' = \{\gl \in \dR: e^{2i\arctan(\gl)} \in \gd\}.
\eed
Let $\Pi(U^{ac}) = \{L^2(\dT,d\nu(\zeta),\sh(\zeta)),M,\Phi\}$.
Let us introduce the direct integral $L^2(\dR,d\gl,\sh'(\gl))$ where $d\gl$ is the Lebesgue measure on $\dR$,
and $\sh'(\gl) := \sh(e^{2i\arctan(\gl)})$. A straightforward
computation shows that the linear map
$F : L^2(\dT,d\nu(\zeta),\sh(\zeta)) \longrightarrow L^2(\dR,d\gl,\sh'(\gl))$,
\bed
\wh{f'}(\gl) := (F\wh f)(\gl) := \sqrt{\frac{2}{1+\gl^2}}\wh f(e^{2i\arctan(\gl)}), \quad \gl \in \dR,
\eed
$\wh f \in L^2(\dR,d\gl,\sh'(\gl))$, defines an isometry acting from $L^2(\dT,d\nu(\zeta),\sh(\zeta))$ onto $L^2(\dR,d\gl,\sh'(\gl))$.
Let $\{Q(\zeta)\}_{\zeta\in\dT}$ be a measurable operator-valued function which defines a multiplication operator $M_Q$ in the direct integral
$L^2(\dT,d\nu(\zeta),\sh(\zeta))$. Setting
\bed
Q'(\gl) = Q(e^{2i\arctan(\gl)}), \quad \gl \in \dR,
\eed
one easily defines a multiplication operator in $M_{Q'}$ in $L^2(\dR,d\gl,\sh'(\gl))$. It turns out that
$M_{Q'} = FM_QF^{-1}$. In particular, one gets that
\bed
F M_{\chi_\gd}F^{-1} = M_{\chi_\gd'}, \quad \gd \in \cB(\dT),\quad \gd' = \{\gl \in \dR: e^{2i\arctan(\gl)} \in \gd\}.
\eed
the last relation immediately shows that $\Pi(H^{ac}) := \{L^2(\dR,d\gl,\sh'(\gl)),M,\Phi'\}$, $\Phi' := F\Phi$,
defines a spectral representation of the absolutely continuous part
$H^{ac}$ of $H$.

\section{Scattering matrix for unitary operators}\la{D}

Let $\sH$ be a separable Hilbert space and let $U$ and $U_0$ be
unitary operators such that
\be\label{2.1}
V := U - U_0 \in \sL_1(\sH).
\ee
where $\sL_1(\cdot)$ denotes the set of trace class operators in
$\sH$. In the following we call the pair $\cS =\{U,U_0\}$ of unitary
operators satisfying \eqref{2.1} a $\sL_1$-scattering system.

If $\cS = \{U,U_0\}$ is a $\sL_1$ scattering system, then the wave operators
\bed
\gO_\pm := \gO_\pm(U,U_0) := \slim_{n \to\pm\infty}U^nU^{-n}_0P^{ac}(U_0)
\eed
exist and are complete. Completeness means that $\ran(\gO_\pm) =
\sH^{ac}(U)$ where $\sH^{ac}(U)$.
The scattering operator $S$ of the scattering system $\cS$ is defined by
\bed
S := S(U,U_0) := \gO^*_+\gO_-.
\eed
In fact, the scattering operator acts only on $\sH^{ac}(U_0)$ and is
unitary there. Moreover, it commutes with $U_0$. 

Let $\Pi(U^{ac}_0) = \{L^2(\dT,d\nu(\zeta),\sh(\zeta)),M,\Phi)$ be a spectral
representation  of the absolutely continuous part $U^{ac}_0$ of $U_0$,
cf. Appendix \ref{B}.
Since the scattering operator $S$ is unitary on $\sH^{ac}(U_0)$ and
commutes with $U^{ac}_0$ there is a measurable family
$\{S(\zeta)\}_{\zeta \in \dT}$ of unitary operator on
$\sh(\zeta)$ such that $S$ is unitary equivalent to $M_S$,
\bed
(M_Sf)(\zeta) = S(\zeta)f(\zeta), \quad f \in
L^2(\dT,d\nu(\zeta),\sh(\zeta)),
\eed
that is $S = \Phi^{-1} M_S \Phi$. 
The family $S(\zeta)$ of unitary operators is called the scattering matrix of the scattering system $\cS$.

At first we prove a technical lemma.
\bl\label{II.6}
Let $\cS = \{U,U_0\}$ be $\sL_1$-scattering system. Then there is a bounded
self-adjoint Hilbert-Schmidt operator $C$ and a bounded operator $G$
such that the representation
\be\label{2.2}
V := U - U_0 = CGC
\ee
is valid.
\el
\begin{proof}
Let $V = V_R + iV_I$ where where $V_R := \frac{1}{2}(V + V^*)$ and
$V_I := \frac{1}{2i}(V^* - V^*)$. Obviously, one has $V_R := V^*_R \in
\sL_1(\sH)$. and $V_I = V^*_I \in \sL_1(\sH)$. Let $C_R :=
|V_R|^{1/2}$ and $C_I := |V_I|^{1/2}$. Then
\be\label{2.3}
V_R = C_R G_R C_R \quad \mbox{and} \quad
V_I = C_I G_I C_I
\ee
where $G_R := \sign(V_R)$ and $G_I := \sign(V_I)$. We set
\bed
C := (|V_R| + |V_I|)^{1/2}.
\eed
Obviously, we have 
\bed
\|C_R f\|^2 = (|V_R|f,f) \le ((|V_R| + |V_I|)f,f) = \|C f||^2, \quad f
\in \sH.
\eed
Hence there is  a contraction $\gG_R$ such that $C_R = \gG_R C$ and $C_R
= C\gG^*_R$. Similarly, there is a contraction $\gG_I$ such that 
$C_I = \gG_I C_I$ and $C_I = C_I \gG^*_I$. From \eqref{2.3} we find
\bed
V = C(\gG^*_RG_R\gG_R + i\gG^*_I  G_I \gG_I)C.
\eed
Setting $G := \gG^*_RG_R\gG_R + i\gG^*_I  G_I \gG_I$ we prove
\eqref{2.2}.
\end{proof}

We define the Abel pre-wave operators by 
\begin{equation} \label{mitlef23}
\begin{matrix}
\gO_+(r) & := & (1-r)\sum^\infty_{n=0}r^nU^{n}U^{-n}_0P^{ac}_0,\\[2mm]
\gO_-(r) & := & (1-r)\sum^\infty_{n=0}r^nU^{-n}U^n_0P^{ac}_0,
\end{matrix}
\end{equation}
$r \in [0,1)$, where we have used the abbreviation $P^{ac}_0 :=
P^{ac}(U_0)$. It holds 
\bed
\gO_\pm = \slim_{r\uparrow 1}\gO_\pm(r).
\eed
Let $E_0(\cdot)$ be spectral measure of $U_0$ defined on the Borel
subsets of $\dT$. We set $E^{ac}_0(\cdot) := P^{ac}(U_0)E_0(\cdot)$.
A straightforward computation gives
\bea
\gO_+(r) & := & P^{ac}_0 + r\int_\dT \frac{\overline{\zeta}}{I - r\overline{\zeta}U}VdE^{ac}_0(\zeta),\label{2.57}\\
\gO_-(r) & := & P^{ac}_0 - r\int_\dT\frac{U^*}{I - r\zeta U^*}VdE^{ac}_0(\zeta).\label{2.53}
\eea
Using $U^*V = -V^*U_0$ we find
\be\la{2.54}
\gO_-(r) := P^{ac}_0 + r\int_\dT\frac{\zeta}{I - r\zeta U^*}V^*dE^{ac}_0(\zeta).
\ee
Furthermore, from \eqref{2.57} and \eqref{2.54} we get
\bea
\gO_+(r)^* & = & P^{ac}_0 + r\int_\dT dE^{ac}_0(\zeta)V^*\frac{\zeta}{I - r\zeta U^*}\label{2.54a}\\
\gO_-(r)^* & = & P^{ac}_0 + r\int_\dT dE^{ac}_0(\zeta)V\frac{\overline{\zeta}}{I - r\overline{\zeta} U}\label{2.60a}
\eea
Notice that $\wlim_{r\uparrow 1}\gO_+(r)^* = \gO^*_+$. Similarly, we
find the representations
\bead
\gO_+(r) & = & P^{ac}_0 + r\int_\dT dE(\zeta)V\frac{U^*_0}{I-r\zeta U^*_0}P^{ac}_0\\
\gO_-(r) & = & P^{ac}_0 - r\int_\dT dE(\zeta)V\frac{\overline{\zeta}}{I - r\overline{\zeta}U_0}P^{ac}_0.
\eead
Using again $U^*V = -V^*U_0$ we get
\bea
\gO_+(r) & = & P^{ac}_0 - r\int_\dT dE(\zeta)V^*\frac{\zeta}{(I-r\zeta U^*_0)}P^{ac}_0\label{2.61a}\\
\gO_-(r) & = & P^{ac}_0 + r\int_\dT dE(\zeta)V^*\frac{U_0}{I - r\overline{\zeta}U_0}P^{ac}_0.\la{2.63a}
\eea
We consider the transition operator $T := \frac{1}{2i\pi}(P^{ac}_0 - S)$. Notice that
\bed
S = P^{ac}(U_0) - 2\pi i T.
\eed
In fact the operator $T$ acts only on $\sH^{ac}(U_0)$. 
Since the scattering operator $S$ commutes with $U_0$ the transition operator $T$ also commutes with $U_0$.
With respect to the spectral representation 
$\Pi(U^{ac}_0) = \{L^2(\dT,d\nu(\zeta),\sh(\zeta)),M,\Phi\}$
the transition operator $T$ takes the form of a multiplication
operator $M_T$ induced by a measurable family 
$\{T(\zeta)\}_{\zeta \in \dT}$ of bounded operators. Obviously, we have
\be\label{2.62a}
S(\gl) = I_{\sh(\zeta)} - 2\pi i T(\zeta)
\ee
for a.e. $\zeta \in \dT$. The family $T(\zeta)$ of bounded operators is called the transition matrix
of the scattering system $\cS$ . 
We are going to compute the measurable family $\{T(\zeta)\}_{\zeta\in \dT}$.
\bt\label{II.7}
Let $\cS = \{U,U_0\}$ be a $\sL_1$-scattering system.
With respect to the spectral representation $\Pi(U^{ac}_0) =
\{L^2(\dT,d\nu(\zeta),\sh(\zeta)),M,\Phi\}$ of $U^{ac}_0$,
cf. Appendix \ref{B},
the family of transition matrices $\{T(\zeta)\}_{\zeta\in\dT}$ admits the representation
\be\label{2.61}
T(\zeta) = i\zeta\sqrt{Y(\zeta)}Z(\zeta)\sqrt{Y(\zeta)}
\ee
for a.e. $\zeta \in \dT$ with respect to $\nu$ where $Z(\zeta) :=
\olim_{r\uparrow 1}Z(r\zeta)$ and 
\be\label{2.62}
Z(\xi) := G^* + G^*C\frac{\xi}{I - \xi U^*}CG^*, \quad \xi \in \dD := \{z\in \dC: |\xi| < 1\}.
\ee 
\et
\begin{proof}
Obviously we have
\bed 
T = \frac{1}{2i\pi}\gO^*_+(\gO_+ - \gO_-).
\eed
We set
\bed
T(r) = \frac{1}{2i\pi}\gO^*_+(\gO_+(r) - \gO_-(r)).
\eed
Notice that $T = \slim_{r\uparrow 1}T(r)$. Using the representations \eqref{2.61a} and \eqref{2.63a}
we get
\bed
T(r) = i\frac{r}{2\pi}\gO^*_+
\left\{\int_\dT dE(\xi)V^*\frac{\xi}{I-r\xi U^*_0} + 
\int_\dT dE(\xi)V^*\frac{U_0}{I - r\overline{\xi}U_0}\right\}P^{ac}_0
\eed
which yields
\bed
T(r) = i\frac{r}{2\pi(1+r)}
(1-r^2)\int_\dT dE^{ac}_0(\xi)\gO^*_+V^*\frac{U_0 + \xi}{|I-r\xi U^*_0|^2}P^{ac}_0.
\eed
Let us introduce the notation
\be\label{2.67}
T(r,s) := i\frac{r}{1+r}\frac{1-r^2}{2\pi}
\int_\dT dE^{ac}_0(\xi)\gO^*_+(s)V^*\frac{U_0 + \xi}{|I-r\xi U^*_0|^2}P^{ac}_0, \quad 0 \le r,s < 1.
\ee
Since $\wlim_{s\uparrow 1}\gO^*_+(s) = \gO^*_+$ it seems natural to expect that
$\wlim_{s\uparrow 1}T(r,s) = T(r)$ for $0 \le r < 1$. Indeed, integrating by parts we get
\bead
\lefteqn{
\int_\dT dE^{ac}_0(\xi)\gO^*_+ V^*\frac{U_0 + \xi}{|I-r\xi U^*_0|^2}P^{ac}_0 =}\\
& &
\gO^*_+V^*\frac{U_0 -1}{|I+rU^*_0|^2}P^{ac}_0 - 
\int_\dT E^{ac}_0(\xi)\gO^*_+V^*\frac{\partial}{\partial \xi}\frac{U_0 + \xi}{|I-r\xi U^*_0|^2}P^{ac}_0d\nu(\xi)\nonumber
\eead
and
\bead
\lefteqn{
\int_\dT dE^{ac}_0(\xi)\gO^*_+(s) V^*\frac{U_0 + \xi}{|I-r\xi U^*_0|^2}P^{ac}_0 =}\\
& &
\gO^*_+(s)V^*\frac{U_0 -1}{|I+rU^*_0|^2}P^{ac}_0 - 
\int_\dT E^{ac}_0(\xi)\gO^*_+(s)V^*\frac{\partial}{\partial \xi}\frac{U_0 + \xi}{|I-r\xi U^*_0|^2}P^{ac}_0d\nu(\xi).\nonumber
\eead
Because $\frac{\partial}{\partial \xi}\frac{U_0 + \xi}{|I-r\xi
  U^*_0|^2}$ is bounded for $r \in[0,1)$ we find that
\bead
\lefteqn{
\wlim_{s\uparrow 1}\int_\dT dE^{ac}_0(\xi)\gO^*_+(s)V^*\frac{U_0 + \xi}{|I-r\xi U^*_0|^2}P^{ac}_0 =}\\
& &
\gO^*_+V^*\frac{U_0 -1}{|I+rU^*_0|^2}P^{ac}_0 - 
\int_\dT E^{ac}_0(\xi)\gO^*_+V^*\frac{\partial}{\partial \xi}\frac{U_0 + \xi}{|I-r\xi U^*_0|^2}P^{ac}_0d\nu(\xi)
\nonumber
\eead
which proves $\wlim_{s\uparrow 1}T(r,s) = T(r)$ for $0 \le r < 1$.
From \eqref{2.54a} we get
\be\label{2.71}
\gO_+(s)^* = \int_\dT dE^{ac}_0(\zeta)\left\{I + s\,V^*\frac{\zeta}{I - s\zeta U^*}\right\}.
\ee
Inserting \eqref{2.71} into \eqref{2.67} we obtain
\bed
T(r,s) = i\frac{r}{1+r}
\frac{1-r^2}{2\pi}
\int_\dT dE^{ac}_0(\zeta)\left\{I + s\,V^*\frac{\zeta}{I - s\,\zeta U^*}\right\}
V^*\frac{U_0 + \zeta}{|I-r\zeta U^*_0|^2}P^{ac}_0
\eed
where $\zeta \in \dT$. Using \eqref{2.2} and the notation \eqref{2.62} we get
\be\label{2.72}
T(r,s) = 
i\frac{r}{1+r}\frac{1-r^2}{2\pi}
\int_\dT dE^{ac}_0(\zeta)CZ(s\zeta) C\frac{U_0 + \zeta}{|I-r\zeta U^*_0|^2}P^{ac}_0.
\ee
Inserting the representation 
\bed
\frac{U_0 + \zeta}{|I-r\zeta U^*_0|^2}P^{ac}_0 = 
\int_\dT\frac{\xi + \zeta}{|1- r\zeta\overline{\xi}|^2}dE^{ac}_0(\xi)
\eed
into \eqref{2.72} we find
\bed
T(r,s) = 
i\frac{r}{1+r}
\int_\dT dE^{ac}_0(\zeta)CZ(s\zeta)
\frac{1-r^2}{2\pi}\int_\dT \frac{\xi + \zeta}{|1- r\zeta\overline{\xi}|^2}C dE^{ac}_0(\xi)
\eed
which leads to
\bed
T(r,s) = 
i\frac{r}{1+r}
\int_\dT dE^{ac}_0(\zeta)CZ(s\zeta)\zeta\;\;
\frac{1-r^2}{2\pi}\int_\dT \frac{\xi\overline{\zeta} + 1}{|1- r\zeta\overline{\xi}|^2}CdE^{ac}_0(\xi).
\eed
Applying the map $\Phi : \sH^{ac}(U_0) \longrightarrow L^2(\dT,d\nu(\zeta),\sh(\zeta))$ we obtain 
\bead
\lefteqn{
(\Phi T(r,s)\Phi^{-1}\wh f)(\zeta) =}\\
& &
i\frac{r}{1+r}\sqrt{Y(\zeta)}Z(s\zeta)\zeta\;
\frac{1-r^2}{2\pi}\int_\dT\frac{\xi\overline{\zeta} + 1}{|1- r\zeta\overline{\xi}|^2}\sqrt{Y(\xi)}\wh f(\xi)d\nu(\xi)
\nonumber
\eead
where $\wh f \in L^2(\dT,d\nu(\zeta),\sh(\zeta))$. We set 
\be\la{2.41}
X(s;\zeta) := \sqrt{Y(\zeta)}Z(s\zeta), \quad \zeta \in \dT, \quad 0
\le s < 1. 
\ee
Notice that $X(s;\zeta) \in \sL_2(\sH)$ for a.e. $\zeta \in \dT$. Since
$X(s) := \sL_2-\lim_{s\uparrow 1}X(s;\zeta) = \sqrt{Y(s)}Z(s)$ exists for
a.e. $\zeta \in \dT$ there is a Borel subset $\gD(\varepsilon) \subseteq \dT$ 
for every $\varepsilon > 0$ such that
$\nu(\gD(\varepsilon)) < \varepsilon$ and 
\be\la{2.41a}
C_X(\varepsilon) := \sup\left\{\|X(s;\zeta)\|_{\sL_2}: \zeta\in \dT \setminus \gD(\varepsilon), \quad 
0 \le s < 1\right\}  < \infty
\ee
is valid. We note the existence of the set $\gD(\varepsilon)$
follows from Egorov's theorem. 

Using that observation we get
\bead
\lefteqn{
(\Phi E^{ac}_0(\dT \setminus \gD(\varepsilon))T(r) \Phi^{-1}\wh f)(\zeta) =
 \wlim_{s\uparrow 1}(\Phi
E^{ac}_0(\dT \setminus \gD)(\varepsilon))T(r,s)\Phi^{-1}\wh f)(\zeta)} \nonumber\\
& &
= i\zeta\frac{r}{1+r}\chi_{\dT\setminus\gD(\varepsilon)}(\zeta)\sqrt{Y(\zeta)}Z(\zeta)\;
\frac{1-r^2}{2\pi}\int_\dT\frac{\xi\overline{\zeta} + 1}{|1- r\zeta\overline{\xi}|^2}\sqrt{Y(\xi)}\wh f(\xi)d\nu(\xi)
\nonumber 
\eead
for a.e. $\zeta \in \dT$ with respect to $\nu$ and $\wh f \in L^2(\dT,d\nu(\zeta),\sh(\zeta))$.  Finally, taking the limit $r \uparrow 1$ we get
\bead
\lefteqn{
(\Phi  E^{ac}_0(\dT \setminus \gD(\varepsilon)) T \Phi^{-1}\wh f)(\zeta) = }\\
& &
\lim_{r\uparrow 1}\Phi  E^{ac}_0(\dT \setminus \gD(\varepsilon))T(r)\Phi^{-1}\wh f)(\zeta) =
i\zeta\chi_{\dT\setminus\gD(\varepsilon)}(\zeta)\sqrt{Y(\zeta)}Z(\zeta)\;\sqrt{Y(\zeta)}\wh f(\zeta)
\nonumber
\eead
for a.e. $\zeta \in \dT$ with respect to $\nu$ and $\wh f \in
L^2(\dT,d\nu(\zeta),\sh(\zeta))$ where  it was used that 
\bed 
\wh g(\zeta) = \frac{1}{2}\lim_{r\uparrow
  1}\frac{1-r^2}{2\pi}\int_\dT\frac{\xi\overline{\zeta} + 1}{|1-
  r\zeta\overline{\xi}|^2}\wh g(\xi)d\nu(\xi),\quad
\wh g \in L^2(\dT,d\nu(\zeta),\sh(\zeta)),
\eed
in the $L^2$-sense, see \cite[Section I.D.2]{Koosis1980}. 
If $\wh f(\zeta) \in L^\infty(\dT,d\nu(\zeta),\sh(\zeta))$, then
$\sqrt{Y(\zeta)}\wh f(\zeta) \in
L^2(\dT,d\nu(\zeta),\sh(\zeta))$.
Hence we find that
\bed
(T(\zeta)\wh f)(\zeta) =
i\zeta\sqrt{Y(\zeta)}Z(\zeta)\;\sqrt{Y(\zeta)}\wh
f(\zeta), 
\eed
for a.e. $\zeta \in \dT \setminus \gD(\varepsilon)$ and $f \in
L^\infty(\dT,d\nu(\zeta),\sh(\zeta))$ which yields \eqref{2.61} for a.e. $\zeta
\in\dT \setminus \gD(\varepsilon)$. Since $\varepsilon$ can be chosen
arbitrary small we we prove \eqref{2.61}. 
\end{proof}

From \eqref{2.62a} and \eqref{2.61} we get that the scattering matrix
admits the representation
\bed
S(\zeta) =  I_{\sh(\zeta)} + 2\pi \zeta
\sqrt{Y(\zeta)}Z(\zeta)\sqrt{Y(\zeta)}
\eed
for a.e. $\zeta \in \dT$. Since $\|S(\zeta)\|_{\sh(\zeta)} = 1$ for
a.e. $\zeta \in \dT$ we get $\|S(\zeta) -
I_{\sh(\zeta)}\|_{\sh(\zeta)} \le 2$ for a.e $\zeta \in \dT$ which
yields
\bed
\|\sqrt{Y(\zeta)}Z(\zeta)\sqrt{Y(\zeta)}\|_{\sh(\zeta)} \le
\frac{1}{\pi}
\eed
for a.e. $\zeta \in \dT$. In fact, this estimate can be proved directly.
\bc\label{II.8}
Let the assumptions of Theorem \ref{II.7} be satisfied. Then the
following holds:

\begin{enumerate}
\item[\em (i)] 
For $\wh f \in L^2(\dT,d\nu(\zeta),\sh(\zeta))$ we have 
\be\label{2.87}
(\Phi \gO^*_-(r)VP^{ac}(U_0)\Phi^{-1}\wh f)(\zeta) = \int_\dT
K(r;\zeta,\xi) \wh f(\xi)d\nu(\xi),
\quad r \in [0,1), 
\ee
for a.e. $\zeta \in \dT$ with respect to $\nu$ 
where $K(r;\zeta,\xi) := \sqrt{Y(\zeta)}Z(r\zeta)^*\sqrt{Y(\xi)}$, $\zeta,\xi \in \dT$.

\item[\em (ii)] 
For $\wh f \in L^2(\dT,d\nu(\zeta),\sh(\zeta))$ we have 
\be\label{2.81}
(\Phi\gO^*_-VP^{ac}(U_0)\Phi^{-1} \wh f)(\zeta) = \int_\dT K(\zeta,\xi)\wh f(\xi)d\nu(\xi).
\ee
for a.e. $\zeta \in \dT$ with respect to $\nu$ 
where $K(\zeta,\xi) := \sqrt{Y(\zeta)}\,Z(\zeta)^*\sqrt{Y(\xi)}$, $\zeta,\xi \in \dT$.

\item[\em (iii)] 
For a.e. $\zeta \in \dT$ with respect to $\nu$ 
one has the representation $T(\zeta) = i\zeta\,
K(\zeta,\zeta)^*$. Moreover, $T(\zeta) \in \sL_1(\sh(\gl))$ for a.e
$\zeta \in \dT$  with respect to $\nu$, $\|T(\zeta)\|_{\sS_1} \in L^1(\dT,d\nu(\zeta))$
and
\be\la{2.81a}
\int_\dT\|T(\zeta)\|_{\sL_1}d\nu(\zeta) \le \|V\|_{\sL_1}.
\ee
In addition one has
\be\label{2.82}
\tr(\gO^*_-V) = \int_\dT \tr(K(\zeta,\zeta))d\nu(\zeta) = i\int_\dT \zeta\,\tr(T(\zeta)^*)d\nu(\zeta).
\ee
\end{enumerate}
\ec
\begin{proof}
(i) Let $K(r) := \gO^*_-(r)V$. Using \eqref{2.60a} we get
\bed
K(r)P^{ac}_0 = \left\{P^{ac}_0 + r\int_\dT dE^{ac}_0(\zeta)V\frac{\overline{\zeta}}{I - r\overline{\zeta} U}\right\}VP^{ac}_0
\eed
which leads to
\bed
K(r)P^{ac}_0 = \int_\dT dE^{ac}_0(\zeta)C \left\{G + r GC\frac{\overline{\zeta}}{I - r\overline{\zeta} U}CG\right\}\int_\dT CdE^{ac}_0(\xi).
\eed
From \eqref{2.62} we get 
\bed
Z(r\zeta)^* = G + r GC\frac{\overline{\zeta}}{I - r\overline{\zeta} U}CG
\eed
which yields
\bed
K(r)P^{ac}_0 = \int_\dT dE^{ac}_0(\zeta)C Z(r\zeta)^*\int_\dT CdE^{ac}_0(\xi).
\eed
Thus
\bed
(\Phi  K(r)P^{ac}_0 \Phi^{-1} \wh f)(\zeta) = \sqrt{Y(\zeta)}Z(r\zeta)^*\int_\dT \sqrt{Y(\xi)}\wh f(\xi)d\nu(\xi),
\eed
$\wh f \in L^2(\dT,d\nu(\zeta),\sh(\zeta))$ which verifies \eqref{2.87}. 

(ii) Following the proof of Theorem \ref{II.7} we set
\be\la{2.55}
X_*(r;\zeta) := \sqrt{Y(\zeta)}Z(r\zeta)^*, \quad \zeta \in
\dT, \quad 0 \le r < 1. 
\ee
As above, using the existence of $X_*(\zeta) := \sL_2-\lim_{r\uparrow
  1}X(r;\zeta) = \sqrt{Y(\zeta)}Z(\zeta)^*$
for a.e. $\zeta \in \dT$ with respect to $\nu$ we find that for each $\varepsilon
> 0$ there is a Borel subset $\gD_*(\varepsilon) \subseteq \dT$ satisfying
$\nu(\gD_*(\varepsilon)) < \varepsilon$ such that the condition
\be\la{2.55a}
C_{X_*}(\varepsilon) := \sup\left\{\|X_*(s;\zeta)\|_{\sL_2}: 
\zeta\in \dT \setminus \gD_*(\varepsilon), \quad 0 \le s < 1\right\} < \infty.
\ee
Using $ K := \wlim_{r\uparrow 1}K(r) = \gO^*_-V$ we get
\bead
\lefteqn{
(\Phi E^{ac}(\dT\setminus\gD_*(\varepsilon))\gO^*_-VP^{ac}_0\Phi^{-1}\wh f)(\zeta) =}\\
& &
\wlim_{r\uparrow 1}(\Phi E^{ac}(\dT\setminus\gD_*(\varepsilon))\gO^*_-(r)
VP^{ac}_0\Phi^{-1}\wh f)(\zeta) = \nonumber\\
& &
\chi_{\dT\setminus\gD_*(\varepsilon)}(\zeta)\sqrt{Y(\zeta)}Z(\zeta)^*\int_\dT
\sqrt{Y(\xi)}\wh f(\xi)d\nu(\xi),
\nonumber
\eead
$\wh f \in L^2(\dT,d\nu(\zeta),\sh(\zeta))$, which proves \eqref{2.81}
for a.e. $\zeta \in \dT \setminus \gD_*(\varepsilon)$ with respect to
$\nu$. Since $\varepsilon$ is arbitrary \eqref{2.81} holds for a.e. $\zeta
\in \dT$. 

(iii) 
By \cite[Proposition 7.5.2]{Yafaev1992} we find that
$\|K(\zeta,\zeta)\|_{\sL_1} \in L^1(\dT,d\nu(\zeta))$ and
\bed
\int_\dT \|K(\zeta,\zeta)\|_{\sL_1}d\nu(\zeta) \le \|K\|_{\sL_1}.
\eed
From \eqref{2.61} we get that $T(\zeta) = i\zeta K(\zeta,\zeta)$
for a.e. $\zeta \in \dT$ with respect to $\nu$. Thus
$\|T(\zeta)\|_{\sH_1} \in L^1(\dT,d\nu(\zeta))$ and \eqref{2.81a} is valid.
Using again \cite[Proposition 7.5.2]{Yafaev1992} we find 
\bed
\tr(\gO^*_-V) = \tr(K)  = \int_\dT \tr(K(\zeta,\zeta))d\nu(\zeta).
\eed
By $T(\zeta) = i\zeta K(\zeta,\zeta)$ we prove \eqref{2.82}. 
\end{proof}

\section*{Acknowledgments}

\noindent
The first author acknowledges partial support from the Danish 
FNU grant {\it Mathematical Analysis of Many-Body Quantum Systems}. 
The second and third author thank the Aalborg University and the
Centre de Physique Th{\'e}orique in Marseille for hospitality and financial support. 


\end{document}